\newtheorem{Theorem}{Theorem}[section]
\newtheorem{Lemma}{Lemma}[section]
\newtheorem{Corollary}{Corollary}[section]
\newtheorem{Remark}{Remark}[section]
\newtheorem{Example}{Example}[section]
\makeatletter \@addtoreset{equation}{section} \makeatother
\begin{document}
	
	\title{Improved upper bounds on the number of non-zero weights of cyclic codes}
	\author{Bocong Chen$^1$, Yuqing Fu$^2$ and Hongwei Liu$^2$\footnote{E-mail addresses: bocongchen@foxmail.com (B. Chen),
			yuqingfu@mails.ccnu.edu.cn (Y. Fu),
			hwliu@ccnu.edu.cn (H. Liu)}}
	\date{\small
		$1.$ School of Mathematics, South China University of Technology, Guangzhou, 510641, China\\
		$2.$ School of Mathematics and Statistics,
		Central China Normal University,
		Wuhan,  430079, China\\
	}
	\maketitle
	{\noindent\small{\bf Abstract:} Let $\mathcal{C}$ be an arbitrary simple-root cyclic code  and let $\mathcal{G}$ be the subgroup of ${\rm Aut}(\mathcal{C})$ (the automorphism group of $\mathcal{C}$)  generated by the  multiplier, the cyclic shift   and the scalar multiplications.
		To the best of our knowledge,
		the subgroup $\mathcal{G}$ is the largest subgroup of ${\rm Aut}(\mathcal{C})$.  In this paper, an explicit formula, in some cases an upper bound, for the number of orbits of $\mathcal{G}$ on $\mathcal{C}\backslash \{\bf 0\}$ is established. An explicit upper bound on the number of non-zero weights of $\mathcal{C}$ is consequently derived and a necessary and sufficient condition for the code $\mathcal{C}$ meeting the bound is exhibited. Many examples are presented to show that our new upper bounds are tight and are strictly less than the upper bounds in [Chen and Zhang, IEEE-TIT, 2023]. In addition, for two special classes of cyclic codes, smaller upper bounds on the number of non-zero weights of such codes are obtained by replacing $\mathcal{G}$ with larger subgroups of the automorphism groups of these codes.
		As a byproduct, our main results  suggest a new way to find few-weight cyclic codes.}
	
	\vspace{1ex}
	{\noindent\small{\bf Keywords:}
		Cyclic code; irreducible cyclic code; Hamming weight; upper bound; group action.}
	
	2020 \emph{Mathematics Subject Classification.}  94B05, 94B60
	
	\section{Introduction}
	The class of cyclic codes
	enjoys a prominent place in the theory of error-correcting
	codes. In fact, the class of cyclic codes is one
	of the most significant and well studied of all codes.
	Let $\mathbb{F}_{q}$ be the finite field with $q$ elements, where $q$ is a prime power. An $[n,k,d]$ {\em linear code}
	$\mathcal{C}$ over $\mathbb{F}_{q}$ is defined as a $k$-dimensional subspace of $\mathbb{F}_{q}^{n}$ with minimum (Hamming) distance $d$.
	A linear code $\mathcal{C}$ is said to be {\em cyclic} if the codewords of $\mathcal{C}$ are invariant
	under the {\em cyclic shift}: every codeword $\mathbf{c}=(c_0, c_1,\cdots,c_{n-1})\in \mathcal{C}$
	implies that $(c_{n-1},c_0,c_1,\cdots, c_{n-2})\in \mathcal{C}$.
	One of the principal reasons why cyclic codes play a significant  role in coding theory is that they have rich algebraic structures.
	Indeed,
	identifying each vector $(a_{0},a_{1},\cdots,a_{n-1})\in \mathbb{F}_{q}^{n}$ with a polynomial $$a_{0}+a_{1}x+\cdots+a_{n-1}x^{n-1}\in \mathcal{R}_{n}:=\mathbb{F}_{q}[x]/\langle x^{n}-1\rangle,$$ a linear code $\mathcal{C}$ of length $n$ over $\mathbb{F}_{q}$ corresponds to an $\mathbb{F}_{q}$-subspace of the algebra $\mathcal{R}_{n}$. Moreover, $\mathcal{C}$ is cyclic if and only if the corresponding subspace is an ideal of $\mathcal{R}_{n}$. A cyclic code $\mathcal{C}$ is called an {\em irreducible cyclic code} if $\mathcal{C}$ is a minimal ideal of $\mathcal{R}_{n}$.
	
	It is well-known that the distance of a code is a crucial parameter in the sense that it determines  the error-detecting and error-correcting capabilities of the code.
	Nevertheless, Delsarte \cite{8} showed that the number of non-zero weights of a code is a fundamental parameter in the sense that
	it is intimately related to other parameters of a code. Delsarte \cite{8} obtained an upper bound (resp. a lower bound) on the code size of a linear code $\mathcal{C}$ by using the number of non-zero weights of $\mathcal{C}$ (resp. $\mathcal{C}^{\perp}$),
	where $\mathcal{C}^{\perp}$ denotes the dual code of $\mathcal{C}$. It was also shown in \cite{8} that the covering radius of a code is at most equal to the number of non-zero weights of its dual code.
	For a linear code $\mathcal{C}$, the minimum distance of $\mathcal{C}$, the number of non-zero weights of $\mathcal{C}$, the minimum distance of $\mathcal{C}^{\perp}$ and the number of non-zero weights of $\mathcal{C}^{\perp}$ were titled  \textquotedblleft Four Fundamental Parameters of a Code" in \cite{8}.  Assmus and Mattson \cite{3} derived a relationship between codes and designs in terms of the number of non-zero weights of a linear code. The number of non-zero weights of a code has close connection with orthogonal arrays and combinatorial designs (see \cite{8}, \cite{9}).
	
	In general, it is very hard to determine the number of non-zero weights of a linear code. A more modest goal is to find acceptable bounds on the number of non-zero weights of a linear code.
	Indeed, there are some recent works in the literature about lower and upper bounds on the number of non-zero weights of linear codes. Alderson \cite{1} gave necessary and sufficient conditions for the existence of full weight spectrum codes. Shi et al. \cite{19} found that for a linear code of dimension $k$ over the finite field $\mathbb{F}_{q}$, the number of non-zero weights of this code is bounded from above by $(q^{k}-1)/(q-1)$, which is sharp for binary codes and for all codes of dimension two, and this bound was proved in \cite{2} to be sharp for all $q$ and $k$. Shi et al. \cite{17} derived lower and upper bounds on the number of non-zero weights of cyclic codes, and gave sharper upper bounds for irreducible cyclic codes and strongly cyclic codes. Shi et al. \cite{18} obtained several lower and upper bounds on the number of non-zero weights of quasi-cyclic codes.
	
	Chen and Zhang \cite{7} recently established a tight upper bound on the number of non-zero weights of a simple-root cyclic code $\mathcal{C}$ of length $n$ over $\mathbb{F}_{q}$. They used the observation that the number of non-zero weights of $\mathcal{C}$ is bounded from above by the number of orbits of the automorphism group of $\mathcal{C}$ (or a subgroup of the automorphism group of $\mathcal{C}$) acting on $\mathcal{C}\backslash \{\bf 0\}$, with equality if and only if any two codewords of $\mathcal{C}$ with the same weight belong to the same orbit. Let  $\mathcal{G}'$ be the subgroup of ${\rm Aut}(\mathcal{C})$ (the automorphism group of $\mathcal{C}$) generated by the cyclic shift and the scalar multiplications. Chen and Zhang \cite{7} obtained an explicit upper bound on the number of non-zero weights of $\mathcal{C}$ by calculating the number of orbits of $\mathcal{G}'$ on $\mathcal{C}\backslash \{\bf 0\}$.
	
	Motivated by the work \cite{7}, in this paper we choose larger subgroups of ${\rm Aut}(\mathcal{C})$ to replace $\mathcal{G}'$ and then we obtain improved upper bounds on the number of non-zero weights of $\mathcal{C}$.
	Apart from the cyclic shift and the scalar multiplications, we first note that the multiplier $\mu_{q}$ defined on $\mathcal{R}_{n}$ by
	$$\mu_{q}\big(\sum_{i=0}^{n-1}a_{i}x^{i}\big)=\sum_{i=0}^{n-1}a_{i}x^{qi}~({\rm mod}~x^n-1)$$
	is also an automorphism of $\mathcal{C}$. Let $\mathcal{G}$ be the subgroup of ${\rm Aut}(\mathcal{C})$ generated by these three types of automorphisms. To the best of our knowledge,
	$\mathcal{G}$ is the largest subgroup of ${\rm Aut}(\mathcal{C})$ for any simple-root cyclic code $\mathcal{C}$ over $\mathbb{F}_{q}$. Clearly, $\mathcal{G}'$ is a subgroup of $\mathcal{G}$, and the number of orbits of
	$\mathcal{G}$ on $\mathcal{C}\backslash\{\mathbf{0}\}$ is less than the number of orbits of  $\mathcal{G}'$ on $\mathcal{C}\backslash\{\mathbf{0}\}$.
	Therefore, we need to find the number of orbits of $\mathcal{G}$ on $\mathcal{C}\backslash\{\mathbf{0}\}$, which in turn
	gives improved upper bounds on the number of non-zero weights of $\mathcal{C}$.
	However, intuitively, the structure of $\mathcal{G}'$ is more complicate than that of $\mathcal{G}$, which implies that the problem of finding 
	the number of orbits of $\mathcal{G}$ on $\mathcal{C}\backslash\{\mathbf{0}\}$ may be more difficult than that of 
	finding  the number of orbits of  $\mathcal{G}'$ on $\mathcal{C}\backslash\{\mathbf{0}\}$.
	
	In this paper,
	an explicit formula, in some cases an upper bound, for the number of orbits of $\mathcal{G}$ on $\mathcal{C}\backslash \{\bf 0\}$ is established. An explicit upper bound on the number of non-zero weights of $\mathcal{C}$ is consequently derived and a necessary and sufficient condition for the code $\mathcal{C}$ meeting the bound is exhibited. Many examples are presented to show that our upper bounds are tight and are strictly less than the upper bounds in \cite{7}. In addition, for two special classes of cyclic codes, we replace $\mathcal{G}$ with lager subgroups of the automorphism groups of these codes, and then we obtain smaller upper bounds on the number of non-zero weights of these codes. The upper bounds presented in this paper improve the upper bounds in \cite{7}. As a byproduct, our main results  suggest a new way to find few-weight cyclic codes,
	which have found wide applications in cryptography, association schemes and network coding (for example, see \cite{Ding2009}).
	
	This paper is organized as follows. In Section 2, we review some definitions and basic results about group action, cyclic codes and subgroups of the automorphism group of $\mathcal{R}_{n}$. In Section 3, we derive the main results of this paper. This is divided into four subsections: In subsections 3.1 and 3.2, we present improved upper bounds on the number of non-zero weights of irreducible cyclic codes and general cyclic codes, respectively, by calculating the number of $\mathcal{G}$-orbits. In subsections 3.3 and 3.4, for two special classes of cyclic codes, we give smaller upper bounds on the number of non-zero weights of such codes by replacing $\mathcal{G}$ with lager subgroups of the automorphism groups of these codes. In Section 4, we conclude this paper with remarks and some possible future works.
	
	\section{Preliminaries}
	
	Throughout this paper, $\mathbb{F}_{q}$ denotes the finite field with $q=p^e$ elements, where $p$ is a prime, $e$ is a positive integer and $n$ is a positive integer coprime to $q$. By $\mathbb{F}_{q}^{*}$ we denote the multiplicative group of $\mathbb{F}_{q}$, and for $\alpha\in \mathbb{F}_{q}^{*}$, ${\rm ord}(\alpha)$ denotes the order of $\alpha$ in $\mathbb{F}_{q}^{*}$. Let $\mathbb{Z}_{n}$ be the residue ring of the integer ring $\mathbb{Z}$ modulo $n$, and let $\mathbb{Z}_{n}^{*}$ be the group of units in $\mathbb{Z}_{n}$. For $a\in \mathbb{Z}_{n}^{*}$, $a^{-1}$ denotes the multiplicative inverse of $a$ in $\mathbb{Z}_{n}^{*}$. As usual, $|X|$ stands for the cardinality of a finite set $X$. For integers $b_{1},b_{2},\cdots,b_{r}$, where $r\geq 2$ is a positive integer, ${\rm gcd}(b_{1},b_{2},\cdots,b_{r})$ denotes the greatest common divisor of $b_{1},b_{2},\cdots,b_{r}$. Given two integers $b_{1}$ and $b_{2}$, if $b_{1}$ divides $b_{2}$, then we write $b_{1}|b_{2}$. For a positive integer $b$, $\varphi(b)$ is the Euler's function of $b$, which is the number of positive integers not exceeding $b$ and prime to $b$. 
	
	In this section, we present some necessary background concerning group action on a linear code, some notions and results about simple-root cyclic codes, and subgroups of the automorphism group of $\mathcal{R}_{n}$.
	
	\subsection{Group action on a linear code}
	
	Suppose that a finite group $G$ acts on a finite set $X$. For each $x\in X$, $Gx=\{gx~|~g\in G\}$ is called an orbit of this group action containing $x$ (or simply a $G$-orbit). All the $G$-orbits partition $X$, that is, $X$ is the disjoint union of the $G$-orbits. For convenience, the set of all the orbits of $G$ on $X$ is denoted as
	$$G\backslash X=\{Gx~|~x\in X\}.$$
	
	Let $\mathcal{C}$ be a linear code and let $\mathcal{G}$ be a subgroup ${\rm Aut}(\mathcal{C})$. The next lemma reveals that the number of non-zero weights of $\mathcal{C}$ is bounded from above by the number of orbits of $\mathcal{G}$ on $\mathcal{C}^{*}=\mathcal{C}\backslash \{\bf 0\}$, with equality if and only if any two codewords of $\mathcal{C}^{*}$ with the same weight are in the same $\mathcal{G}$-orbit.
	
	\begin{Lemma}\cite{7}\label{l2.1}
		Let $\mathcal{C}$ be a linear code of length $n$ over $\mathbb{F}_{q}$ with $\ell$ non-zero weights and let ${\rm Aut}(\mathcal{C})$ be the automorphism group of $\mathcal{C}$. Suppose that $\mathcal{G}$ is a subgroup of ${\rm Aut}(\mathcal{C})$. If the number of orbits of $\mathcal{G}$ on $\mathcal{C}^{*}=\mathcal{C}\backslash \{\bf 0\}$ is equal to $N$, then $\ell \leq N$. Moreover, the equality holds if and only if for any two non-zero codewords ${\bf c}_{1},{\bf c}_{2}\in \mathcal{C}^{*}$ with the same weight, there exists an automorphism $A\in \mathcal{G}$ such that $A{\bf c}_{1}={\bf c}_{2}$.
	\end{Lemma}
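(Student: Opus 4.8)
The whole argument rests on one elementary fact: every automorphism in $\mathrm{Aut}(\mathcal{C})$ is a monomial transformation of $\mathbb{F}_q^n$ (a coordinate permutation composed with coordinate-wise scalings by non-zero field elements), and such a map visibly does not change the number of non-zero coordinates of a vector. Hence the Hamming weight $\mathrm{wt}(\cdot)$ is invariant under the action of $\mathrm{Aut}(\mathcal{C})$, and a fortiori under the subgroup $\mathcal{G}$. The plan is to package this invariance into a map between the orbit set and the weight set and then read off both the inequality and the equality condition from properties of that map.

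First I would let $W=\{\mathrm{wt}(\mathbf{c}) : \mathbf{c}\in\mathcal{C}^{*}\}$ denote the set of non-zero weights, so that by definition $|W|=\ell$. Since $\mathrm{wt}$ is constant on each $\mathcal{G}$-orbit, the assignment $\mathcal{G}\mathbf{c}\mapsto \mathrm{wt}(\mathbf{c})$ is well defined and yields a map
\[
\overline{\mathrm{wt}}\colon \mathcal{G}\backslash \mathcal{C}^{*}\longrightarrow W,\qquad \overline{\mathrm{wt}}(\mathcal{G}\mathbf{c})=\mathrm{wt}(\mathbf{c}).
\]
This map is surjective: any $w\in W$ is by definition the weight of some $\mathbf{c}\in\mathcal{C}^{*}$, and then $\overline{\mathrm{wt}}(\mathcal{G}\mathbf{c})=w$. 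Consequently $\ell=|W|\le |\mathcal{G}\backslash\mathcal{C}^{*}|=N$, which is the first assertion.

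For the equality statement I would observe that, since $\overline{\mathrm{wt}}$ is already surjective, $\ell=N$ holds precisely when $\overline{\mathrm{wt}}$ is a bijection, i.e. precisely when $\overline{\mathrm{wt}}$ is injective. Injectivity of $\overline{\mathrm{wt}}$ means: if $\mathcal{G}\mathbf{c}_1$ and $\mathcal{G}\mathbf{c}_2$ have the same image, i.e. $\mathrm{wt}(\mathbf{c}_1)=\mathrm{wt}(\mathbf{c}_2)$, then $\mathcal{G}\mathbf{c}_1=\mathcal{G}\mathbf{c}_2$, which is exactly the condition that there exists $A\in\mathcal{G}$ with $A\mathbf{c}_1=\mathbf{c}_2$. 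Spelling out this chain of equivalences gives the claimed necessary and sufficient condition, completing the proof.

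There is no substantial obstacle here; the only point that deserves a sentence of justification is the weight-invariance of automorphisms, which follows immediately once one recalls that $\mathrm{Aut}(\mathcal{C})$ for a linear code is taken to be its group of monomial symmetries. Everything else is a routine translation between "counting orbits" and "counting values of a $\mathcal{G}$-invariant function," so the write-up can be kept very short.
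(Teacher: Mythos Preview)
The paper does not supply its own proof of this lemma; it is quoted from \cite{7} and used as a black box. Your argument is correct and is the standard one: since every element of $\mathrm{Aut}(\mathcal{C})$ (whether one takes the monomial or the semilinear definition) preserves Hamming weight, the weight function factors through the orbit set $\mathcal{G}\backslash\mathcal{C}^{*}$ as a surjection onto the set of non-zero weights, which gives $\ell\le N$, and equality is equivalent to injectivity of that induced map, which is precisely the stated condition.
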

	
	By Burnside's lemma (see \cite[Theorem 2.113]{16}), the number of orbits of $\mathcal{G}$ on $\mathcal{C}^{*}$ is equal to
	\begin{equation}\label{e2.1}
		|\mathcal{G} \backslash \mathcal{C}^{*}|=\frac{1}{|\mathcal{G}|}\sum_{g\in \mathcal{G}}|{\rm Fix}(g)|,
	\end{equation}
	where ${\rm Fix}(g)=\{{\bf c}\in \mathcal{C}~|~g{\bf c}={\bf c}\}$.
	
	Suppose $\mathcal{G}'$ is a subgroup of $\mathcal{G}$. For ${\bf c}\in \mathcal{C}^{*}$, it is easy to see that $|\mathcal{G}'{\bf c}|\leq |\mathcal{G}{\bf c}|$. Since $\mathcal{C}^{*}=\bigcup\limits_{{\bf c}\in \mathcal{C}^{*}}\mathcal{G}{\bf c}=\bigcup\limits_{{\bf c}\in \mathcal{C}^{*}}\mathcal{G}'{\bf c}$, we have $\big|\{\mathcal{G}{\bf c}~|~{\bf c}\in \mathcal{C}^{*}\}\big|\leq \big|\{\mathcal{G}'{\bf c}~|~{\bf c}\in \mathcal{C}^{*}\}\big|$, that is, the number of orbits of $\mathcal{G}$ on $\mathcal{C}^{*}$ is less than or equal to the number of orbits of $\mathcal{G}'$ on $\mathcal{C}^{*}$.
	
	\subsection{Cyclic codes and primitive idempotents}
	
	It is well known that every irreducible cyclic code of length $n$ over $\mathbb{F}_{q}$ is generated uniquely by a primitive idempotent of $\mathcal{R}_{n}$, and every cyclic code of length $n$ over $\mathbb{F}_{q}$ is a direct sum of some irreducible cyclic codes (see, for example, \cite[Theorem 4.3.8]{11}). There is a one-to-one correspondence between the primitive idempotents of $\mathcal{R}_{n}$ and the $q$-cyclotomic cosets modulo $n$. 
	
	Let $m$ be the order of $q$ in $\mathbb{Z}_{n}^{*}$, that is, $m$ is the least positive integer such that $n$ divides $q^m-1$. Assume that all the distinct $q$-cyclotomic cosets modulo $n$ are given by
	\begin{align*}
		\Gamma_{0}&=\{i_{0}=0\},\\
		\Gamma_{1}&=\{i_{1}=1,q,\cdots,q^{k_{1}-1}=q^{m-1}\},\\
		\Gamma_{2}&=\{i_{2},i_{2}q,i_{2}q^{2},\cdots,i_{2}q^{k_{2}-1}\},\\
		\vdots\\
		\Gamma_{s}&=\{i_{s},i_{s}q,i_{s}q^{2},\cdots,i_{s}q^{k_{s}-1}\},
	\end{align*}
	where the elements in the braces above are calculated modulo $n$ and $k_{i}$ is the cardinality of the $q$-cyclotomic coset $\Gamma_{i}$ for $0\leq i\leq s$ with $k_{0}=1$ and $k_{1}=m$.  Suppose $\zeta$ is a primitive $n$-th root of unity in $\mathbb{F}_{q^m}$. The quotient ring $\mathbb{F}_{q^m}[x]/\langle x^{n}-1\rangle$ has exactly $n$ primitive idempotents given by (see, for example \cite{6})
	$$e_{i}=\frac{1}{n}\sum_{j=0}^{n-1}\zeta^{-ij}x^{j}~~~{\rm for}~0\leq i\leq n-1.$$
	Moreover, $\mathcal{R}_{n}=\mathbb{F}_{q}[x]/\langle x^{n}-1\rangle$ has exactly $s+1$ primitive idempotents given by (see, for example \cite{6})
	$$\varepsilon_{t}=\sum_{j\in \Gamma_{t}}e_{j}~~~{\rm for}~0\leq t\leq s,$$
	and \cite[Theorem 4.3.8]{11} shows that $\mathcal{R}_{n}$ is the vector space direct sum of the minimal ideals $\mathcal{R}_{n}\varepsilon_{t}$ for $0\leq t\leq s$, in symbols,
	$$\mathcal{R}_{n}=\mathcal{R}_{n}\varepsilon_{0}\bigoplus \mathcal{R}_{n}\varepsilon_{1}\bigoplus \cdots \bigoplus \mathcal{R}_{n}\varepsilon_{s}.$$
	Using the Discrete Fourier Transform (or, using the results in \cite[Section V]{14} directly), we have, for each $0\leq t\leq s$,
	$$\mathcal{R}_{n}\varepsilon_{t}=\Big\{\sum_{j=0}^{k_{t}-1}\big(\sum_{\ell=0}^{k_{t}-1}c_{\ell}\zeta^{\ell i_{t}q^{j}}\big)e_{i_{t}q^{j}}~\Big|~c_{\ell}\in \mathbb{F}_{q}, 0\leq \ell \leq k_{t}-1\Big\}.$$

	\subsection{Subgroups of the automorphism group of $\mathcal{R}_{n}$}
	Suppose that the cyclic group $\mathbb{F}_{q}^{*}$ is generated by $\xi$. We have the following two $\mathbb{F}_{q}$-vector space automorphisms of $\mathcal{R}_{n}=\mathbb{F}_{q}[x]/\langle x^{n}-1\rangle$, denoted by $\rho$ and $\sigma_{\xi}$, respectively:
	$$\rho:~\mathcal{R}_{n}\rightarrow \mathcal{R}_{n},~~\rho\big(\sum_{i=0}^{n-1}f_{i}x^{i}\big)=\sum_{i=0}^{n-1}f_{i}x^{i+1}~({\rm mod}~x^n-1),$$
	and
	$$\sigma_{\xi}:~\mathcal{R}_{n}\rightarrow \mathcal{R}_{n},~~\sigma_{\xi}\big(\sum_{i=0}^{n-1}f_{i}x^{i}\big)=\sum_{i=0}^{n-1}\xi f_{i}x^{i}.$$
	Clearly, the subgroup $\langle \rho \rangle$ of ${\rm Aut}(\mathcal{R}_{n})$ (the $\mathbb{F}_{q}$-vector space automorphism group of $\mathcal{R}_{n}$) generated by $\rho$ is of order $n$, and the subgroup $\langle\sigma_{\xi} \rangle$ of ${\rm Aut}(\mathcal{R}_{n})$ generated by $\sigma_{\xi}$ is of order $q-1$. For any cyclic code $\mathcal{C}$ of length $n$ over $\mathbb{F}_{q}$, it is readily seen that $\langle \rho,\sigma_{\xi}\rangle$ is a subgroup of ${\rm Aut}(\mathcal{C})$.
	
	Let $a\in \mathbb{Z}_{n}^{*}$. The multiplier $\mu_{a}$ defined on $\mathcal{R}_{n}$ by
	$$\mu_{a}:~\mathcal{R}_{n}\rightarrow \mathcal{R}_{n},~~\mu_{a}\big(\sum_{i=0}^{n-1}f_{i}x^{i}\big)=\sum_{i=0}^{n-1}f_{i}x^{ai}~({\rm mod}~x^n-1)$$
	is a ring automorphism of $\mathcal{R}_{n}$ (see \cite[Theorem 4.3.12]{11}). Let ${\rm ord}_{n}(a)$ denote the order of $a$ in $\mathbb{Z}_{n}^{*}$. It is easy to check that the subgroup $\langle \mu_{a}\rangle$ of ${\rm Aut}(\mathcal{R}_{n})$ generated by $\mu_{a}$ is of order ${\rm ord}_{n}(a)$.
	
	\begin{Lemma}\label{l2.2}
		Suppose $a\in \mathbb{Z}_{n}^{*}$. The subgroup $\langle \mu_{a},\rho,\sigma_{\xi}\rangle$ of ${\rm Aut}(\mathcal{R}_{n})$ is of order ${\rm ord}_{n}(a)n(q-1)$, and each element of $\langle \mu_{a},\rho,\sigma_{\xi} \rangle$ can be written uniquely as a product $\mu_{a}^{r_{1}}\rho^{r_{2}}\sigma_{\xi}^{r_{3}}$ for some $0\leq r_{1}\leq {\rm ord}_{n}(a)-1$, $0\leq r_{2}\leq n-1$ and $0\leq r_{3}\leq q-2$.
	\end{Lemma}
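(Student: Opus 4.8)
The plan is to reduce the statement to two facts: every word in $\mu_a^{\pm 1},\rho^{\pm 1},\sigma_\xi^{\pm 1}$ can be rewritten in the normal form $\mu_a^{r_1}\rho^{r_2}\sigma_\xi^{r_3}$ with $0\le r_1\le{\rm ord}_n(a)-1$, $0\le r_2\le n-1$, $0\le r_3\le q-2$; and two such normal forms represent the same automorphism only if their exponents coincide. The first fact bounds $|\langle\mu_a,\rho,\sigma_\xi\rangle|$ from above by ${\rm ord}_n(a)\,n\,(q-1)$, and the second shows the bound is attained and yields the asserted uniqueness.

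First I would record the orders $\rho^n={\rm id}$, $\sigma_\xi^{q-1}={\rm id}$, $\mu_a^{{\rm ord}_n(a)}={\rm id}$ (already noted in the text) together with the relevant commutation relations. Since $\mu_a$ and $\rho$ merely permute the monomial basis $\{1,x,\dots,x^{n-1}\}$ while $\sigma_\xi$ scales every coefficient by $\xi$, one checks directly that $\sigma_\xi$ commutes with both $\mu_a$ and $\rho$. For $\mu_a$ and $\rho$, applying both sides to $x^i$ gives $\mu_a\rho(x^i)=x^{a(i+1)}=\rho^a\mu_a(x^i)$, hence $\mu_a\rho\mu_a^{-1}=\rho^a$ and, equivalently, $\rho\mu_a=\mu_a\rho^{a^{-1}}$ where $a^{-1}$ is the inverse of $a$ in $\mathbb{Z}_n^*$. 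Using these relations, and writing $\mu_a^{-1}=\mu_a^{{\rm ord}_n(a)-1}$, $\rho^{-1}=\rho^{n-1}$, $\sigma_\xi^{-1}=\sigma_\xi^{q-2}$ to avoid negative exponents, one can push every occurrence of $\mu_a$ to the left and every occurrence of $\sigma_\xi$ to the right in an arbitrary product; pushing $\mu_a^{r_1}$ leftward past $\rho^{r_2}$ only replaces the latter by $\rho^{a^{-r_1}r_2}$, an exponent that remains well defined modulo $n$ because $a\in\mathbb{Z}_n^*$. After reducing the three exponents modulo ${\rm ord}_n(a)$, $n$ and $q-1$ respectively, one arrives at the normal form, so $\langle\mu_a,\rho,\sigma_\xi\rangle=\{\mu_a^{r_1}\rho^{r_2}\sigma_\xi^{r_3}\}$ over the stated ranges, and its order is at most ${\rm ord}_n(a)\,n\,(q-1)$.

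For the reverse inequality and uniqueness, suppose $\mu_a^{r_1}\rho^{r_2}\sigma_\xi^{r_3}=\mu_a^{r_1'}\rho^{r_2'}\sigma_\xi^{r_3'}$. Evaluating both sides at the constant polynomial $1$ gives $\xi^{r_3}x^{a^{r_1}r_2}=\xi^{r_3'}x^{a^{r_1'}r_2'}$ (exponents read modulo $n$), so comparing the scalar coefficients forces $\xi^{r_3}=\xi^{r_3'}$, hence $r_3=r_3'$ since $0\le r_3,r_3'\le q-2$. Cancelling $\sigma_\xi^{r_3}$, the remaining maps $\mu_a^{r_1}\rho^{r_2}$ and $\mu_a^{r_1'}\rho^{r_2'}$ send $x^i$ to $x^{a^{r_1}i+a^{r_1}r_2}$ and $x^{a^{r_1'}i+a^{r_1'}r_2'}$; equating these affine maps on $\mathbb{Z}_n$ gives $a^{r_1}\equiv a^{r_1'}\pmod n$, whence $r_1=r_1'$ by the range of $r_1,r_1'$, and then $a^{r_1}r_2\equiv a^{r_1}r_2'\pmod n$ gives $r_2=r_2'$ after multiplying by the unit $a^{-r_1}$. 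Thus the ${\rm ord}_n(a)\,n\,(q-1)$ normal forms are pairwise distinct, which establishes both the order formula and the uniqueness of the decomposition.

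The only point requiring care is the bookkeeping in the collection step, namely keeping the exponent of $\rho$ well defined while commuting it past powers of $\mu_a$; this is routine precisely because $a\in\mathbb{Z}_n^*$. Alternatively, one may phrase the argument structurally: the relation $\mu_a\rho\mu_a^{-1}=\rho^a$ shows $\langle\rho\rangle$ is normal in $\langle\mu_a,\rho\rangle$, and $\langle\mu_a\rangle\cap\langle\rho\rangle=\{{\rm id}\}$ (if $\mu_a^{r_1}=\rho^{r_2}$, evaluate at $1$ to get $r_2\equiv 0$, then $r_1\equiv 0$), so $\langle\mu_a,\rho\rangle=\langle\mu_a\rangle\ltimes\langle\rho\rangle$ has order ${\rm ord}_n(a)\,n$; since $\langle\sigma_\xi\rangle$ is central and consists of coefficient scalings while $\langle\mu_a,\rho\rangle$ consists of monomial permutations fixing all coefficients at $1$, the two intersect trivially, giving $\langle\mu_a,\rho,\sigma_\xi\rangle\cong(\langle\mu_a\rangle\ltimes\langle\rho\rangle)\times\langle\sigma_\xi\rangle$ and the claimed order.
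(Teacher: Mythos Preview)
Your proof is correct and follows essentially the same approach as the paper: both arguments rest on the commutation relation $\mu_a\rho=\rho^a\mu_a$ (and the centrality of $\sigma_\xi$), and both establish the trivial intersections $\langle\mu_a\rangle\cap\langle\rho\rangle=\{{\rm id}\}$ and $\langle\mu_a,\rho\rangle\cap\langle\sigma_\xi\rangle=\{{\rm id}\}$ by evaluating at the constant polynomial $1$. The paper phrases the conclusion via the product-of-subgroups formula $|HK|=|H||K|/|H\cap K|$, whereas your primary argument is a normal-form collection; your closing ``structural'' paragraph is exactly the paper's argument recast in semidirect/direct product language, so there is no substantive difference.
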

	
	\begin{proof}
		We first note that $\mu_{a}\rho=\rho^{a}\mu_{a}$ and $\rho\mu_{a}=\mu_{a}\rho^{a^{-1}}$, then $\langle \mu_{a} \rangle\langle\rho \rangle=\langle\rho \rangle\langle \mu_{a} \rangle$, which implies that $\langle \mu_{a} \rangle\langle\rho \rangle$ is a subgroup of ${\rm Aut}(\mathcal{R}_{n})$. It is easy to verify that $\langle \mu_{a} \rangle\langle\rho \rangle \langle\sigma_{\xi} \rangle=\langle\sigma_{\xi} \rangle\langle \mu_{a} \rangle\langle\rho \rangle $, and so $\langle \mu_{a} \rangle\langle\rho \rangle \langle\sigma_{\xi} \rangle$ is a subgroup of ${\rm Aut}(\mathcal{R}_{n})$, thus $\langle \mu_{a},\rho,\sigma_{\xi} \rangle=\langle \mu_{a} \rangle\langle\rho \rangle \langle\sigma_{\xi} \rangle$. 
		
		Suppose $\alpha\in \langle \mu_{a} \rangle \cap \langle\rho \rangle$, then $\alpha=\mu_{a}^{r_{1}}=\rho^{r_{2}}$ for some $0\leq r_{1}\leq {\rm ord}_{n}(a)-1$ and $0\leq r_{2}\leq n-1$. Let $f(x)=1\in \mathcal{R}_{n}$. As $1=\mu_{a}^{r_{1}}(f(x))=\rho^{r_{2}}(f(x))=x^{r_{2}}$, $r_{2}=0$ implying that $\langle \mu_{a} \rangle \cap \langle\rho \rangle=id$, where $id$ is the identity element of ${\rm Aut}(\mathcal{R}_{n})$. Suppose $\alpha\in \langle \mu_{a} \rangle \langle\rho \rangle\cap \langle\sigma_{\xi} \rangle$, then $\alpha=\mu_{a}^{r_{1}}\rho^{r_{2}}=\sigma_{\xi}^{r_{3}}$ for some $0\leq r_{1}\leq {\rm ord}_{n}(a)-1$, $0\leq r_{2}\leq n-1$ and $0\leq r_{3}\leq q-2$. Let $f(x)=1\in \mathcal{R}_{n}$, then $x^{a^{r_{1}}r_{2}}\equiv \mu_{a}^{r_{1}}\rho^{r_{2}}(f(x))\equiv \sigma_{\xi}^{r_{3}}(f(x))\equiv \xi^{r_{3}}~({\rm mod}~x^n-1)$, and hence $r_{3}=0$. So $\langle \mu_{a} \rangle \langle\rho \rangle\cap \langle\sigma_{\xi} \rangle=id$. Then the order of $\langle \mu_{a},\rho,\sigma_{\xi} \rangle$ is equal to
		$$\frac{|\langle \mu_{a}\rangle \langle\rho \rangle||\langle\sigma_{\xi} \rangle|}{|\langle \mu_{a}\rangle \langle\rho \rangle\cap \langle\sigma_{\xi} \rangle|}=|\langle \mu_{a},\rho \rangle||\langle\sigma_{\xi} \rangle|=\frac{|\langle \mu_{a} \rangle||\langle\rho \rangle||\langle\sigma_{\xi} \rangle|}{|\langle \mu_{a} \rangle\cap \langle\rho \rangle|}=|\langle \mu_{a} \rangle||\langle\rho \rangle||\langle\sigma_{\xi} \rangle|={\rm ord}_{n}(a)n(q-1).$$
		
		For any $\alpha \in \langle \mu_{a},\rho,\sigma_{\xi} \rangle$, $\alpha=\mu_{a}^{r_{1}}\rho^{r_{2}}\sigma_{\xi}^{r_{3}}$ for some $0\leq r_{1}\leq {\rm ord}_{n}(a)-1$, $0\leq r_{2}\leq n-1$ and $0\leq r_{3}\leq q-2$. If $\alpha$ can also be written as $\alpha=\mu_{a}^{r_{1}'}\rho^{r_{2}'}\sigma_{\xi}^{r_{3}'}$, where $0\leq r_{1}'\leq {\rm ord}_{n}(a)-1$, $0\leq r_{2}'\leq n-1$ and $0\leq r_{3}'\leq q-2$, then $\mu_{a}^{r_{1}-r_{1}'}\rho^{r_{2}-r_{2}'}=\sigma_{\xi}^{r_{3}'-r_{3}}$. As $\langle \mu_{a} \rangle \langle\rho \rangle\cap \langle\sigma_{\xi} \rangle=id$, we have $r_{3}'=r_{3}$ and $\mu_{a}^{r_{1}-r_{1}'}=\rho^{r_{2}'-r_{2}}$. But $\langle \mu_{a} \rangle \cap \langle\rho \rangle=id$, then $r_{1}'=r_{1}$ and $r_{2}'=r_{2}$. Thus the representation of $\alpha \in \langle \mu_{a},\rho,\sigma_{\xi} \rangle$ is unique.
	\end{proof}
	\section{Improved upper bounds}
	
	Let $\mathcal{C}$ be a cyclic code of length $n$ over $\mathbb{F}_{q}$ and let $\mathcal{G}$ be a subgroup of ${\rm Aut}(\mathcal{C})$. We know from Lemma \ref{l2.1} that the number of non-zero weights of $\mathcal{C}$ is bounded from above by the number of orbits of $\mathcal{G}$ on $\mathcal{C}^{*}=\mathcal{C}\backslash \{\bf 0\}$. \cite{7} choose $\mathcal{G}=\langle \rho,\sigma_{\xi} \rangle$. In this paper, we choose $\mathcal{G}$ to be a larger subgroup of ${\rm Aut}(\mathcal{C})$ which contains $\langle \rho,\sigma_{\xi} \rangle$ as a subgroup, and then find the number of orbits of $\mathcal{G}$ on $\mathcal{C}^{*}$.
	\subsection{An improved upper bound on the number of non-zero weights of an irreducible cyclic code}
	
	For a cyclic code $\mathcal{C}$ of length $n$ over $\mathbb{F}_{q}$, $\mu_{q}\in {\rm Aut}(\mathcal{C})$ (see \cite[Theorem 4.3.13]{11}), and so we can consider the action of $\langle \mu_{q},\rho,\sigma_{\xi} \rangle$ on $\mathcal{C}^{*}$. We first assume that $\mathcal{C}$ is an irreducible cyclic code and we have the following result.
	
	\begin{Theorem}\label{t3.1}
		Let $\mathcal{C}$ be an $[n,k]$ irreducible cyclic code over $\mathbb{F}_{q}$. Suppose that $\mathcal{C}$ is generated by $\varepsilon_{t}$, where the primitive idempotent $\varepsilon_{t}$ corresponds to the $q$-cyclotomic coset $\{i_{t},i_{t}q,\cdots, i_{t}q^{k-1}\}$. Then the number of orbits of $\langle \mu_{q},\rho,\sigma_{\xi} \rangle$ on $\mathcal{C}^{*}=\mathcal{C}\backslash \{\bf 0\}$ is equal to
		$$\frac{1}{k}\sum_{r\mid k}\varphi(\frac{k}{r}){\rm gcd}\big(q^{r}-1,\frac{q^{k}-1}{q-1},\frac{i_{t}(q^{k}-1)}{n}\big).$$
		In particular, the number of non-zero weights of $\mathcal{C}$ is less than or equal to the number of orbits of $\langle \mu_{q},\rho,\sigma_{\xi} \rangle$ on $\mathcal{C}^{*}$, with equality if and only if for any two codewords ${\bf c}_{1},{\bf c}_{2}\in \mathcal{C}^{*}$ with the same weight, there exist integers $j_{1}$, $j_{2}$ and $j_{3}$ such that $\mu_{q}^{j_{1}}\rho^{j_{2}}(\xi^{j_{3}}{\bf c}_{1})={\bf c}_{2}$.
	\end{Theorem}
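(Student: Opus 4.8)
The plan is to transport the problem to the field $\mathbb{F}_{q^{k}}$, where the three generators become very concrete maps, and then apply Burnside's lemma in the form \eqref{e2.1}. Since $\mathcal{C}=\mathcal{R}_{n}\varepsilon_{t}$ is a minimal ideal of $\mathcal{R}_{n}$, it is a field with identity $\varepsilon_{t}$ that is $\mathbb{F}_{q}$-algebra isomorphic to $\mathbb{F}_{q^{k}}$; I would fix such an isomorphism $\Phi\colon\mathcal{C}\to\mathbb{F}_{q^{k}}$, concretely (via the Discrete Fourier Transform description of $\mathcal{R}_{n}\varepsilon_{t}$ in Section~2) the one carrying the codeword $\sum_{j=0}^{k-1}A^{q^{j}}e_{i_{t}q^{j}}$ to $A$, and then translate the three generators. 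Writing $\psi_{c}$ for multiplication by $c$ on $\mathbb{F}_{q^{k}}$: the cyclic shift $\rho$ (which is multiplication by $x\varepsilon_{t}$ on $\mathcal{C}$) becomes $\psi_{\omega}$ with $\omega:=\Phi(x\varepsilon_{t})$ corresponding to $\zeta^{i_{t}}$, an element of $\mathbb{F}_{q^{k}}^{*}$ of order $n_{0}:=n/\gcd(n,i_{t})$; the scalar multiplication $\sigma_{\xi}$ becomes $\psi_{\xi}$ with $\xi\in\mathbb{F}_{q}^{*}\subseteq\mathbb{F}_{q^{k}}^{*}$; and the multiplier $\mu_{q}$, which satisfies $\mu_{q}(f)=f^{q}$ in $\mathcal{R}_{n}$, becomes the Frobenius $\phi\colon A\mapsto A^{q}$ of $\mathbb{F}_{q^{k}}/\mathbb{F}_{q}$, of order $k$. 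Thus the image $\overline{\mathcal{G}}$ of $\mathcal{G}=\langle\mu_{q},\rho,\sigma_{\xi}\rangle$ in the group of permutations of $\mathcal{C}^{*}\cong\mathbb{F}_{q^{k}}^{*}$ is generated by $\phi$ together with $\psi_{c}$, $c\in H$, where $H:=\langle\omega,\xi\rangle$ is the unique subgroup of $\mathbb{F}_{q^{k}}^{*}$ of order $M:={\rm lcm}(n_{0},q-1)$. Since $\phi\psi_{c}\phi^{-1}=\psi_{c^{q}}$ with $c^{q}\in H$ and $\langle\phi\rangle$ meets $\{\psi_{c}:c\in H\}$ only in the identity, $\overline{\mathcal{G}}$ is the internal semidirect product of $H$ by $\langle\phi\rangle$, of order $kM$; and since the number of $\mathcal{G}$-orbits on $\mathcal{C}^{*}$ equals the number of $\overline{\mathcal{G}}$-orbits on $\mathbb{F}_{q^{k}}^{*}$, it suffices to count the latter.

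Next I would run Burnside's lemma. Each element of $\overline{\mathcal{G}}$ is uniquely $\phi^{r}\psi_{c}$ with $0\leq r\leq k-1$ and $c\in H$, acting by $A\mapsto cA^{q^{r}}$, so its fixed points in $\mathbb{F}_{q^{k}}^{*}$ are the solutions of $A^{q^{r}-1}=c^{-1}$. On the cyclic group $\mathbb{F}_{q^{k}}^{*}$ the endomorphism $A\mapsto A^{q^{r}-1}$ has kernel of size $d_{r}:=\gcd(q^{r}-1,q^{k}-1)=q^{\gcd(r,k)}-1$ and image equal to the subgroup of order $(q^{k}-1)/d_{r}$; hence $\phi^{r}\psi_{c}$ has exactly $d_{r}$ fixed points if $c^{(q^{k}-1)/d_{r}}=1$ and none otherwise, and the number of $c\in H$ with $c^{(q^{k}-1)/d_{r}}=1$ equals $\gcd\!\big(M,(q^{k}-1)/d_{r}\big)$. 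Inserting this into \eqref{e2.1} and sorting the exponents $r\in\{1,\dots,k\}$ by the value of $\gcd(r,k)$ (each divisor $r\mid k$ arising for $\varphi(k/r)$ of them), I get
\[
\big|\overline{\mathcal{G}}\backslash\mathbb{F}_{q^{k}}^{*}\big|=\frac{1}{kM}\sum_{r\mid k}\varphi\!\Big(\frac{k}{r}\Big)(q^{r}-1)\,\gcd\!\Big(M,\frac{q^{k}-1}{q^{r}-1}\Big).
\]

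It then remains to prove, for every $r\mid k$, the elementary identity
\[
\frac{q^{r}-1}{M}\,\gcd\!\Big(M,\frac{q^{k}-1}{q^{r}-1}\Big)=\gcd\!\Big(q^{r}-1,\ \frac{q^{k}-1}{q-1},\ \frac{i_{t}(q^{k}-1)}{n}\Big),
\]
which converts the displayed formula into the one in the statement. I would verify it one prime at a time: letting $v$ denote a fixed prime valuation, the relations $M={\rm lcm}(n_{0},q-1)$ and $M\mid q^{k}-1$ show that the valuation of the left-hand side is $\min\!\big(v(q^{r}-1),\,v((q^{k}-1)/n_{0}),\,v((q^{k}-1)/(q-1))\big)$, while that of the right-hand side is $\min\!\big(v(q^{r}-1),\,v((q^{k}-1)/(q-1)),\,v(i_{t}(q^{k}-1)/n)\big)$; these agree because $v((q^{k}-1)/n_{0})=v(i_{t}(q^{k}-1)/n)$ unless $v(i_{t})\geq v(n)$, and in that case $v(n_{0})=0$ at the prime, so both of these terms are $\geq v(q^{k}-1)\geq v(q^{r}-1)$ and do not influence the minimum. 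I expect this last step to be the real obstacle: it hinges on combining $\gcd(q^{r}-1,q^{k}-1)=q^{\gcd(r,k)}-1$, the divisibility $n_{0}\mid q^{k}-1$ (equivalently $k={\rm ord}_{n_{0}}(q)$, which is why $|\Gamma_{t}|=k$), and the comparison of prime valuations of $i_{t}$ and $n$, and the bookkeeping is delicate. Once the orbit formula is proved, the bound on the number of non-zero weights follows at once from Lemma~\ref{l2.1}, and its equality condition takes the stated form by Lemma~\ref{l2.2}: every element of $\mathcal{G}$ is uniquely $\mu_{q}^{j_{1}}\rho^{j_{2}}\sigma_{\xi}^{j_{3}}$ and $\sigma_{\xi}^{j_{3}}{\bf c}_{1}=\xi^{j_{3}}{\bf c}_{1}$, so two non-zero codewords lie in one $\mathcal{G}$-orbit precisely when $\mu_{q}^{j_{1}}\rho^{j_{2}}(\xi^{j_{3}}{\bf c}_{1})={\bf c}_{2}$ for some integers $j_{1},j_{2},j_{3}$.
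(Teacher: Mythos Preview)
Your proof is correct and takes a genuinely different route from the paper. The paper applies Burnside's lemma directly to the group $\mathcal{G}=\langle\mu_{q},\rho,\sigma_{\xi}\rangle$ of order $mn(q-1)$ acting on $\mathcal{C}^{*}$: it writes a generic nonzero codeword in the Fourier basis, computes $\mu_{q}^{r_{1}}\rho^{r_{2}}\sigma_{\xi}^{r_{3}}(e_{i_{t}q^{j}})=\xi^{r_{3}}\zeta^{i_{t}q^{j}r_{2}}e_{i_{t}q^{-r_{1}+j}}$, reduces the fixed-point condition to $\xi^{r_{3}}\alpha^{q^{r_{1}}-1}=\zeta^{-i_{t}q^{r_{1}}r_{2}}$ in $\mathbb{F}_{q^{k}}^{*}$, and then for each $r_{1}$ separately counts the admissible pairs $(r_{2},r_{3})$ via the auxiliary sets $S(r_{1})$ and $R(r_{1},r_{2})$, finishing with a chain of gcd simplifications. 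You instead pass at the outset to the effective image $\overline{\mathcal{G}}\cong H\rtimes\langle\phi\rangle$ of order $kM$ acting on $\mathbb{F}_{q^{k}}^{*}$, run Burnside there, and collapse everything to the single identity $\frac{q^{r}-1}{M}\gcd\!\big(M,\tfrac{q^{k}-1}{q^{r}-1}\big)=\gcd\!\big(q^{r}-1,\tfrac{q^{k}-1}{q-1},\tfrac{i_{t}(q^{k}-1)}{n}\big)$, which you verify prime by prime (and which, as you implicitly use, is equivalent to $\gcd\!\big(q^{r}-1,\tfrac{q^{k}-1}{M}\big)$ on the left since $M\mid q^{k}-1$). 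Your argument is shorter and more structural for this theorem; the paper's more explicit bookkeeping, on the other hand, is what it reuses verbatim when $\mathcal{C}$ has several irreducible summands (Lemma~\ref{l3.1}, Theorems~\ref{t3.2}--\ref{t3.6}), where no single field identification is available.
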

	
	\begin{proof}
		It is enough to count the number of orbits of $\langle \mu_{q},\rho,\sigma_{\xi} \rangle$ on $\mathcal{C}^{*}$, since the rest of the statements are clear from Lemma \ref{l2.1}. It follows from Equation (\ref{e2.1}) and Lemma \ref{l2.2} that
		$$\big|\langle \mu_{q},\rho,\sigma_{\xi} \rangle\backslash \mathcal{C}^{*}\big|=\frac{1}{mn(q-1)}\sum_{r_{1}=0}^{m-1}\sum_{r_{2}=0}^{n-1}\sum_{r_{3}=0}^{q-2}\Big|\big\{{\bf c}\in \mathcal{C}^{*}~\big|~\mu_{q}^{r_{1}}\rho^{r_{2}}\sigma_{\xi}^{r_{3}}({\bf c})={\bf c}\big\}\Big|.$$
		
		Take a typical non-zero element
		$${\bf c}=\sum_{j=0}^{k-1}\big(c_{0}+c_{1}\zeta^{i_{t}q^{j}}+\cdots+c_{k-1}\zeta^{(k-1)i_{t}q^{j}}\big)e_{i_{t}q^{j}}\in \mathcal{C}^{*}.$$
		Note that $e_{i_{t}q^{j}}=\frac{1}{n}\sum\limits_{l=0}^{n-1}\zeta^{-i_{t}q^{j}l}x^{l}$ and $\rho^{r_{2}}\sigma_{\xi}^{r_{3}}(e_{i_{t}q^{j}})=\xi^{r_{3}}\zeta^{i_{t}q^{j}r_{2}}e_{i_{t}q^{j}}$ (see \cite[Lemma III.8]{7}), and thus
		\begin{align*}
			\mu_{q}^{r_{1}}\rho^{r_{2}}\sigma_{\xi}^{r_{3}}(e_{i_{t}q^{j}})&=\xi^{r_{3}}\zeta^{i_{t}q^{j}r_{2}}\mu_{q}^{r_{1}}(e_{i_{t}q^{j}})\\
			&=\xi^{r_{3}}\zeta^{i_{t}q^{j}r_{2}}\cdot \frac{1}{n}\sum_{l=0}^{n-1}\zeta^{-i_{t}q^{j}l}x^{q^{r_{1}}l}\\
			&=\xi^{r_{3}}\zeta^{i_{t}q^{j}r_{2}}\cdot \frac{1}{n}\sum_{l=0}^{n-1}\zeta^{-i_{t}q^{-r_{1}+j}q^{r_{1}}l}x^{q^{r_{1}}l}\\
			&=\xi^{r_{3}}\zeta^{i_{t}q^{j}r_{2}}\cdot \frac{1}{n}\sum_{l=0}^{n-1}\zeta^{-i_{t}q^{-r_{1}+j}l}x^{l}\\
			&=\xi^{r_{3}}\zeta^{i_{t}q^{j}r_{2}}e_{i_{t}q^{-r_{1}+j}},
		\end{align*}
		where the subscript $i_{t}q^{-r_{1}+j}$ is calculated modulo $n$. We then have
		\begin{align*}
			\mu_{q}^{r_{1}}\rho^{r_{2}}\sigma_{\xi}^{r_{3}}({\bf c})&=\mu_{q}^{r_{1}}\rho^{r_{2}}\sigma_{\xi}^{r_{3}}\Big(\sum_{j=0}^{k-1}\big(c_{0}+c_{1}\zeta^{i_{t}q^{j}}+\cdots+c_{k-1}\zeta^{(k-1)i_{t}q^{j}}\big)e_{i_{t}q^{j}}\Big)\\
			&=\sum_{j=0}^{k-1}\big(c_{0}+c_{1}\zeta^{i_{t}q^{j}}+\cdots+c_{k-1}\zeta^{(k-1)i_{t}q^{j}}\big)\mu_{q}^{r_{1}}\rho^{r_{2}}\sigma_{\xi}^{r_{3}}(e_{i_{t}q^{j}})\\
			&=\sum_{j=0}^{k-1}\xi^{r_{3}}\zeta^{i_{t}q^{j}r_{2}}\big(c_{0}+c_{1}\zeta^{i_{t}q^{j}}+\cdots+c_{k-1}\zeta^{(k-1)i_{t}q^{j}}\big)e_{i_{t}q^{-r_{1}+j}}\\
			&=\sum_{j=0}^{k-1}\xi^{r_{3}}\zeta^{i_{t}q^{-r_{1}+j} q^{r_{1}}r_{2}}\big(c_{0}+c_{1}\zeta^{i_{t}q^{-r_{1}+j}}+\cdots+c_{k-1}\zeta^{(k-1)i_{t}q^{-r_{1}+j}}\big)^{q^{r_{1}}}e_{i_{t}q^{-r_{1}+j}}\\
			&=\sum_{j=0}^{k-1}\xi^{r_{3}}\zeta^{i_{t}q^{r_{1}+j}r_{2}}\big(c_{0}+c_{1}\zeta^{i_{t}q^{j}}+\cdots+c_{k-1}\zeta^{(k-1)i_{t}q^{j}}\big)^{q^{r_{1}}}e_{i_{t}q^{j}}.
		\end{align*}
		Hence $\mu_{q}^{r_{1}}\rho^{r_{2}}\sigma_{\xi}^{r_{3}}({\bf c})={\bf c}$ if and only if
		$$\xi^{r_{3}}\big(c_{0}+c_{1}\zeta^{i_{t}q^{j}}+\cdots+c_{k-1}\zeta^{(k-1)i_{t}q^{j}}\big)^{q^{r_{1}}-1}=\zeta^{-i_{t}q^{r_{1}+j}r_{2}}~~{\rm for}~0\leq j\leq k-1,$$
		which is equivalent to
		$$\xi^{r_{3}}\big(c_{0}+c_{1}\zeta^{i_{t}}+\cdots+c_{k-1}\zeta^{(k-1)i_{t}}\big)^{q^{r_{1}}-1}=\zeta^{-i_{t}q^{r_{1}}r_{2}}.$$
		Since the minimal polynomial of $\zeta^{i_{t}}$ over $\mathbb{F}_{q}$ is of degree $k$, the set
		$$\big\{c_{0}+c_{1}\zeta^{i_{t}}+\cdots+c_{k-1}\zeta^{(k-1)i_{t}}~\big|~c_{\ell}\in \mathbb{F}_{q}, 0\leq \ell \leq k-1\big\}$$
		forms a subfield of $\mathbb{F}_{q^{m}}$ of size $q^{k}$. Therefore, the number of ${\bf c}\in \mathcal{C}^{*}$ satisfying $\mu_{q}^{r_{1}}\rho^{r_{2}}\sigma_{\xi}^{r_{3}}({\bf c})={\bf c}$ is equal to the number of $\alpha\in \mathbb{F}_{q^{k}}^{*}$ such that $\xi^{r_{3}}\alpha^{q^{r_{1}}-1}=\zeta^{-i_{t}q^{r_{1}}r_{2}}$.
		
		Write $\mathbb{F}_{q^{k}}^{*}=\langle \theta \rangle$, that is, $\mathbb{F}_{q^{k}}^{*}$ is generated by $\theta$. Suppose that $\xi^{r_{3}}\alpha^{q^{r_{1}}-1}=\zeta^{-i_{t}q^{r_{1}}r_{2}}$ for some $\alpha\in \mathbb{F}_{q^{k}}^{*}$. On the one hand, for any $\beta \in \langle \theta^{\frac{q^{k}-1}{q^{{\rm gcd}(k,r_{1})}-1}} \rangle$, we have $\xi^{r_{3}}(\alpha\beta)^{q^{r_{1}}-1}=\xi^{r_{3}}\alpha^{q^{r_{1}}-1}=\zeta^{-i_{t}q^{r_{1}}r_{2}}$.
		On the other hand, suppose $\gamma\in \mathbb{F}_{q^{k}}^{*}$ such that $\xi^{r_{3}}\gamma^{q^{r_{1}}-1}=\zeta^{-i_{t}q^{r_{1}}r_{2}}$, then $(\gamma\alpha^{-1})^{q^{r_{1}}-1}=1$, and so ${\rm ord}(\gamma\alpha^{-1})\big| {\rm gcd}(q^{k}-1,q^{r_{1}}-1)=q^{{\rm gcd}(k,r_{1})}-1$, which implies that $\gamma\alpha^{-1}\in \langle \theta^{\frac{q^{k}-1}{q^{{\rm gcd}(k,r_{1})}-1}} \rangle$, and hence $\gamma=\alpha\beta$ for some $\beta \in \langle \theta^{\frac{q^{k}-1}{q^{{\rm gcd}(k,r_{1})}-1}} \rangle$. Thus the number of $\alpha\in \mathbb{F}_{q^{k}}^{*}$ such that $\xi^{r_{3}}\alpha^{q^{r_{1}}-1}=\zeta^{-i_{t}q^{r_{1}}r_{2}}$ is equal to $0$ or $q^{{\rm gcd}(k,r_{1})}-1$.  Next we aim to find the number of pairs $(r_{2},r_{3})$ with $0\leq r_{2}\leq n-1$ and $0\leq r_{3}\leq q-2$ such that $\xi^{r_{3}}\alpha^{q^{r_{1}}-1}=\zeta^{-i_{t}q^{r_{1}}r_{2}}$ for some $\alpha\in \mathbb{F}_{q^{k}}^{*}$.

		It is easy to check that $\{\alpha^{q^{r_{1}}-1}~|~\alpha\in \mathbb{F}_{q^{k}}^{*}\}=\langle \theta^{q^{r_{1}}-1}\rangle$, which is a cyclic subgroup of $\mathbb{F}_{q^{k}}^{*}$ of order $\frac{q^{k}-1}{q^{{\rm gcd}(k,r_{1})}-1}$. Since $\langle \xi \rangle$ is a cyclic subgroup of $\mathbb{F}_{q^{k}}^{*}$ of order $q-1$, we see that $\langle \xi \rangle \cap \langle \theta^{q^{r_{1}}-1} \rangle$ is a cyclic subgroup of $\mathbb{F}_{q^{k}}^{*}$ of order ${\rm gcd}\big(q-1,\frac{q^{k}-1}{q^{{\rm gcd}(k,r_{1})}-1}\big)$, and that $\langle \xi \rangle \langle \theta^{q^{r_{1}}-1} \rangle$ is a cyclic subgroup of $\mathbb{F}_{q^{k}}^{*}$ of order $\frac{|\langle \xi \rangle||\langle \theta^{q^{r_{1}}-1} \rangle|}{|\langle \xi \rangle \cap \langle \theta^{q^{r_{1}}-1} \rangle|}$. As ${\rm ord}(\zeta^{-1})=n$, ${\rm ord}(\zeta^{-i_{t}q^{r_{1}}r_{2}})=\frac{n}{{\rm gcd}(n,i_{t}q^{r_{1}}r_{2})}=\frac{n}{{\rm gcd}(n,i_{t}r_{2})}$.
		Then we have
		\begin{align*}
			\zeta^{-i_{t}q^{r_{1}}r_{2}}\in \langle \xi \rangle\langle \theta^{q^{r_{1}}-1}\rangle~&\Leftrightarrow~\frac{n}{{\rm gcd}(n,i_{t}r_{2})}\Big| |\langle \xi \rangle\langle \theta^{q^{r_{1}}-1} \rangle|\\
			&\Leftrightarrow~\frac{n}{{\rm gcd}(n,i_{t}r_{2}){\rm gcd}\big(\frac{n}{{\rm gcd}(n,i_{t}r_{2})},|\langle \xi \rangle\langle \theta^{q^{r_{1}}-1} \rangle|\big)}=1\\
			&\Leftrightarrow~\frac{n}{{\rm gcd}\big(n,i_{t}r_{2}|\langle \xi \rangle\langle \theta^{q^{r_{1}}-1} \rangle|\big)}=1\\
			&\Leftrightarrow~n\big|\big(i_{t}r_{2}|\langle \xi \rangle\langle \theta^{q^{r_{1}}-1} \rangle|\big)\\
			&\Leftrightarrow~\frac{n}{{\rm gcd}\big(n,i_{t}|\langle \xi \rangle\langle \theta^{q^{r_{1}}-1} \rangle|\big)}\Big|r_{2}.
		\end{align*}
		For $0\leq r_{1}\leq m-1$, denote
		$$S(r_{1})=\big\{0\leq r_{2}\leq n-1~\big|~ \zeta^{-i_{t}q^{r_{1}}r_{2}}\in \langle \xi \rangle\langle \theta^{q^{r_{1}}-1}\rangle\big\},$$
		and then
		$$|S(r_{1})|={\rm gcd}\big(n,i_{t}|\langle \xi \rangle \langle \theta^{q^{r_{1}}-1} \rangle|\big).$$
		
		Assume that $r_{2}\in S(r_{1})$. Then $\zeta^{-i_{t}q^{r_{1}}r_{2}}\in \xi^{r_{3}}\langle \theta^{q^{r_{1}}-1}\rangle$ for some $0\leq r_{3}\leq q-2$. Denote
		$$R(r_{1},r_{2})=\big\{0\leq r\leq q-2~\big|~\zeta^{-i_{t}q^{r_{1}}r_{2}}\in \xi^{r}\langle \theta^{q^{r_{1}}-1}\rangle\big\}.$$
		On the one hand, for any $\xi^{r'}\in \langle \xi \rangle \cap \langle \theta^{q^{r_{1}}-1} \rangle$, we have $\zeta^{-i_{t}q^{r_{1}}r_{2}}\in \xi^{r_{3}}\langle \theta^{q^{r_{1}}-1}\rangle=\xi^{r_{3}+r'}\langle \theta^{q^{r_{1}}-1}\rangle$, and so $r_{3}+r'~({\rm mod}~q-1)\in R(r_{1},r_{2})$. On the other hand, suppose $r\in R(r_{1},r_{2})$, that is, $\zeta^{-i_{t}q^{r_{1}}r_{2}}\in \xi^{r}\langle \theta^{q^{r_{1}}-1}\rangle$, then $\xi^{r-r_{3}}\in \langle \xi \rangle \cap \langle \theta^{q^{r_{1}}-1}\rangle$, and so $r=r_{3}+r'~({\rm mod}~q-1)$, where $r'$ is an integer such that $\xi^{r-r_{3}}=\xi^{r'}\in \langle \xi \rangle \cap \langle \theta^{q^{r_{1}}-1}\rangle$. Hence
		$$|R(r_{1},r_{2})|=|\langle \xi \rangle \cap \langle \theta^{q^{r_{1}}-1}\rangle|.$$

		We then conclude that
		\begin{align*}
			&\big|\langle \mu_{q},\rho,\sigma_{\xi} \rangle \backslash \mathcal{C}^{*}\big|\\
			=&\frac{1}{mn(q-1)}\sum_{r_{1}=0}^{m-1}\sum_{r_{2}\in S(r_{1})}\sum_{r_{3}\in R(r_{1},r_{2})}(q^{{\rm gcd}(k,r_{1})}-1)\\
			=&\frac{1}{mn(q-1)}\sum_{r_{1}=0}^{m-1}|S(r_{1})||R(r_{1},r_{2})|(q^{{\rm gcd}(k,r_{1})}-1)\\
			=&\frac{1}{mn(q-1)}\sum_{r_{1}=0}^{m-1}{\rm gcd}\big(n|\langle \xi \rangle \cap \langle \theta^{q^{r_{1}}-1}\rangle|,i_{t}|\langle \xi \rangle||\langle \theta^{q^{r_{1}}-1} \rangle|\big)(q^{{\rm gcd}(k,r_{1})}-1)\\
			=&\frac{1}{mn(q-1)}\sum_{r_{1}=0}^{m-1}{\rm gcd}\big(n(q-1),\frac{n(q^{k}-1)}{q^{{\rm gcd}(k,r_{1})}-1},\frac{i_{t}(q-1)(q^{k}-1)}{q^{{\rm gcd}(k,r_{1})}-1}\big)(q^{{\rm gcd}(k,r_{1})}-1)\\
			=&\frac{1}{mn(q-1)}\sum_{r_{1}=0}^{m-1}{\rm gcd}\big(n(q-1)(q^{{\rm gcd}(k,r_{1})}-1),n(q^{k}-1),i_{t}(q-1)(q^{k}-1)\big)\\
			=&\frac{1}{m}\sum_{r_{1}=0}^{m-1}{\rm gcd}\big(q^{{\rm gcd}(k,r_{1})}-1,\frac{q^{k}-1}{q-1},\frac{i_{t}(q^{k}-1)}{n}\big)\\
			=&\frac{1}{m}\cdot \frac{m}{k}\sum_{r_{1}=0}^{k-1}{\rm gcd}\big(q^{{\rm gcd}(k,r_{1})}-1,\frac{q^{k}-1}{q-1},\frac{i_{t}(q^{k}-1)}{n}\big)\\
			=&\frac{1}{k}\sum_{r\mid k}\varphi(\frac{k}{r}){\rm gcd}\big(q^{r}-1,\frac{q^{k}-1}{q-1},\frac{i_{t}(q^{k}-1)}{n}\big).
		\end{align*}
		We are done.
	\end{proof}
	
	\begin{Remark}{\rm
			Let $\mathcal{C}$ be the irreducible cyclic code in Theorem \ref{t3.1}. By \cite[Lemma III.8]{7}, the number of orbits of $\langle \rho,\sigma_{\xi} \rangle$ on $\mathcal{C}^{*}=\mathcal{C}\backslash \{\bf 0\}$ is equal to $$\frac{{\rm gcd}\big(n,(q-1)i_{t}\big)(q^{k}-1)}{n(q-1)}={\rm gcd}\big(\frac{q^{k}-1}{q-1},\frac{i_{t}(q^{k}-1)}{n}\big).$$
			Then
			\begin{align*}
				&\big|\langle \rho,\sigma_{\xi} \rangle \backslash \mathcal{C}^{*}\big|-\big|\langle \mu_{q},\rho,\sigma_{\xi} \rangle \backslash \mathcal{C}^{*}\big|\\
				=&{\rm gcd}\big(\frac{q^{k}-1}{q-1},\frac{i_{t}(q^{k}-1)}{n}\big)-\frac{1}{k}\sum_{r\mid k}\varphi(\frac{k}{r}){\rm gcd}\big(q^{r}-1,\frac{q^{k}-1}{q-1},\frac{i_{t}(q^{k}-1)}{n}\big)\\
				=&\frac{1}{k}\sum_{r\mid k}\varphi(\frac{k}{r})\Big({\rm gcd}\big(\frac{q^{k}-1}{q-1},\frac{i_{t}(q^{k}-1)}{n}\big)-{\rm gcd}\big(q^{r}-1,\frac{q^{k}-1}{q-1},\frac{i_{t}(q^{k}-1)}{n}\big)\Big)\\
				\geq &0,
			\end{align*}
			and it is easy to check that $\big|\langle \rho,\sigma_{\xi} \rangle \backslash \mathcal{C}^{*}\big|>\big|\langle \mu_{q},\rho,\sigma_{\xi} \rangle \backslash \mathcal{C}^{*}\big|$ if and only if $k>1$ and ${\rm gcd}\big(q-1,\frac{q^{k}-1}{q-1},\frac{i_{t}(q^{k}-1)}{n}\big)<{\rm gcd}\big(\frac{q^{k}-1}{q-1},\frac{i_{t}(q^{k}-1)}{n}\big)$. }
	\end{Remark}
	
	We include a small example to show that the upper bound in Theorem \ref{t3.1} is tight and in some cases is strictly less than the upper bound in the reference \cite[Lemma III.8]{7}.
	\begin{Example}{\rm
			Take $q=2$ and $n=9$. All the distinct $2$-cyclotomic cosets modulo $9$ are given by
			$$\Gamma_{0}=\{0\},~\Gamma_{1}=\{1,2,4,5,7,8\},~\Gamma_{2}=\{3,6\}.$$
			Consider the irreducible cyclic code $\mathcal{C}=\mathcal{R}_{n}\varepsilon_{1}$, where the primitive idempotent $\varepsilon_{1}$ corresponds to $\Gamma_{1}$. Let $\ell$ be the number of non-zero weights of $\mathcal{C}$. By \cite[Lemma III.8]{7}, we have
			$$\ell\leq \frac{{\rm gcd}(9,1)(2^{6}-1)}{9}=7.$$
			By Theorem \ref{t3.1}, we have
			\begin{align*}
				\ell&\leq \frac{1}{6}\sum_{r\mid 6}\varphi(\frac{6}{r}){\rm gcd}\big(2^{r}-1,\frac{2^6-1}{2-1},\frac{2^{6}-1}{9}\big)\\
				&=\frac{1}{6}[\varphi(6)+\varphi(3)+7\varphi(2)+7\varphi(1)]\\
				&=\frac{1}{6}(2+2+7+7)=3.
			\end{align*}
			Using the Magma software programming \cite{4}, we see that the weight distribution of $\mathcal{C}$ is $1+9x^{2}+27x^{4}+27x^{6}$, giving $\ell=3$. Theorem \ref{t3.1} therefore ensures that all the non-zero codewords of $\mathcal{C}$ with the same weight are in the same $\langle \mu_{q},\rho,\sigma_{\xi} \rangle$-orbit.}
	\end{Example}
	
	The following two corollaries suggest that Theorem \ref{t3.1} could be used to produce new few-weight irreducible cyclic codes.
	
	\begin{Corollary}\label{c3.1}
		Suppose that $q=2^{m}$ with $m>1$, ${\rm gcd}(q-1,3)=1$ and $3|(q+1)$. Let $\mathcal{C}$ be an irreducible cyclic code of length $n=\frac{q^{2}-1}{3N}$ over $\mathbb{F}_{q}$, where $N$ is a divisor of $q-1$. Suppose $i_{t}$ is an integer belonging to the $q$-cyclotomic coset corresponding to the primitive idempotent generating $\mathcal{C}$. If ${\rm gcd}\big(i_{t}, \frac{q+1}{3}\big)=1$, then $\mathcal{C}$ is a one-weight or two-weight cyclic code.
	\end{Corollary}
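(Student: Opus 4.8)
The plan is to apply Theorem~\ref{t3.1} directly and show that under the stated hypotheses the orbit-count formula collapses to a value that is at most $2$, which by Lemma~\ref{l2.1} forces $\mathcal{C}$ to be a one-weight or two-weight code. First I would record that since the order $k$ of $q$ modulo $n$ divides the order of $q$ modulo $q^2-1$, which is $2$ (because $n\mid q^2-1$), we have $k\in\{1,2\}$. The case $k=1$ gives an irreducible cyclic code of dimension one, which is automatically a one-weight code, so the substance is in the case $k=2$.

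For $k=2$ the formula in Theorem~\ref{t3.1} reads
$$\frac{1}{2}\Big(\varphi(2)\,{\rm gcd}\big(q-1,q+1,\tfrac{i_t(q^2-1)}{n}\big)+\varphi(1)\,{\rm gcd}\big(q^2-1,q+1,\tfrac{i_t(q^2-1)}{n}\big)\Big).$$
Now I would use the arithmetic hypotheses to simplify each gcd. Since $q=2^m$ is even, ${\rm gcd}(q-1,q+1)=1$, so the first term is $\varphi(2)\cdot 1=1$. For the second term, ${\rm gcd}(q^2-1,q+1)=q+1$, so it equals ${\rm gcd}\big(q+1,\tfrac{i_t(q^2-1)}{n}\big)$. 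With $n=\tfrac{q^2-1}{3N}$ we get $\tfrac{q^2-1}{n}=3N$, so this gcd is ${\rm gcd}(q+1,3Ni_t)$. Because $N\mid q-1$ and ${\rm gcd}(q-1,q+1)=1$, the factor $N$ contributes nothing: ${\rm gcd}(q+1,3Ni_t)={\rm gcd}(q+1,3i_t)$. Since $3\mid q+1$ and ${\rm gcd}(i_t,\tfrac{q+1}{3})=1$, writing $q+1=3\cdot\tfrac{q+1}{3}$ and using that $3\nmid \tfrac{q+1}{3}$ would not in general be true—so here the key computation is ${\rm gcd}(q+1,3i_t)$. Writing $d={\rm gcd}(q+1,i_t)$, the hypothesis ${\rm gcd}(i_t,\tfrac{q+1}{3})=1$ forces $d\mid 3$, and then ${\rm gcd}(q+1,3i_t)=3$ (using $3\mid q+1$). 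Hence the second term is $3$, and the whole count is $\tfrac{1}{2}(1+3)=2$.

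Putting the two cases together, the number of $\langle\mu_q,\rho,\sigma_\xi\rangle$-orbits on $\mathcal{C}^*$ is $1$ when $k=1$ and $2$ when $k=2$; in either case it is at most $2$, so by Lemma~\ref{l2.1} the number of non-zero weights of $\mathcal{C}$ is at most $2$, i.e. $\mathcal{C}$ is a one-weight or two-weight cyclic code. The main obstacle I anticipate is purely the bookkeeping in the gcd simplifications: one must be careful that $N$ and $i_t$ interact correctly with the coprimality ${\rm gcd}(q-1,q+1)=1$ and with the divisibility $3\mid q+1$, and in particular verify that ${\rm gcd}(q+1,3i_t)$ is exactly $3$ rather than $1$ or $9$—this uses both $3\mid q+1$ and ${\rm gcd}\big(i_t,\tfrac{q+1}{3}\big)=1$ (the latter rules out the gcd jumping to $3\cdot\gcd$ of something larger). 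The hypothesis ${\rm gcd}(q-1,3)=1$ is what guarantees $3$ genuinely divides $q+1$ rather than $q-1$ and keeps the cyclotomic-coset structure consistent with $n=\tfrac{q^2-1}{3N}$ being an integer coprime to $q$.
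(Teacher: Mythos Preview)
Your proposal is correct and follows essentially the same route as the paper: apply Theorem~\ref{t3.1} with $k=2$ and simplify the two gcd terms to $1$ and $3$, giving an orbit count of $2$. The only (harmless) differences are that you handle the possibility $k=1$ by observing a one-dimensional code is trivially one-weight, whereas the paper instead shows $k=1$ is impossible under the hypotheses; and you momentarily conflate $k$ with the order of $q$ modulo $n$ (that order is $m$, while $k=|\Gamma_t|$ divides $m$), though since $m\mid 2$ your conclusion $k\in\{1,2\}$ is unaffected.
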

	
	\begin{proof}
		Let $k_{t}$ be the least positive integer such that $i_{t}q^{k_{t}}\equiv i_{t}~({\rm mod}~n)$, or equivalently, $q^{k_{t}}\equiv 1~\big({\rm mod}~\frac{n}{{\rm gcd}(n,i_{t})}\big)$. Note that $q^{2}\equiv 1~\big({\rm mod}~\frac{n}{{\rm gcd}(n,i_{t})}\big)$, then $k_{t}|2$, and so $k_{t}=1$ or $2$. If $k_{t}=1$, then $n|i_{t}(q-1)$, that is, $\frac{(q+1)(q-1)}{3N}|i_{t}(q-1)$, implying $\frac{q+1}{3}|i_{t}N$. Since ${\rm gcd}\big(i_{t},\frac{q+1}{3})=1$, we have $\frac{q+1}{3}|N$, which is impossible because $N|(q-1)$, $\frac{q+1}{3}|(q+1)$ and ${\rm gcd}(q-1,q+1)=1$. So $k_{t}=2$. It follows from Theorem \ref{t3.1} that the number of orbits of $\big\langle \mu_{q},\rho,\sigma_{\xi} \big\rangle$ on $\mathcal{C}^{*}=\mathcal{C}\backslash \{\bf 0\}$ is equal to
		\begin{align*}
			&\frac{1}{2}\sum_{r\mid 2}\varphi(\frac{2}{r}){\rm gcd}\big(q^{r}-1,\frac{q^{2}-1}{q-1},\frac{i_{t}(q^{2}-1)}{n}\big)\\
			=&\frac{1}{2}\big[\varphi(2){\rm gcd}\big(q-1,q+1,\frac{i_{t}(q^{2}-1)}{n}\big)+\varphi(1){\rm gcd}\big(q^{2}-1,q+1,\frac{i_{t}(q^{2}-1)}{n}\big)\big]\\
			=&\frac{1}{2}\big[\varphi(2)+\varphi(1){\rm gcd}\big(q+1,\frac{i_{t}(q^{2}-1)}{n}\big)\big]\\
			=&\frac{1}{2}[1+{\rm gcd}(q+1,3i_{t}N)]\\
			=&\frac{1}{2}(1+3)=2,
		\end{align*}
		and hence the number of non-zero weights of $\mathcal{C}$ is less than or equal to $2$, i.e., $\mathcal{C}$ is a one-weight or two-weight cyclic code.
	\end{proof}
	
	\begin{Example}{\rm
			Take $q=8$ and $n=21$. All the distinct $8$-cyclotomic cosets modulo $21$ are given by
			$$\Gamma_{0}=\{0\},~\Gamma_{1}=\{1,8\}, ~\Gamma_{2}=\{2,16\},~ \Gamma_{3}=\{3\},~\Gamma_{4}=\{4,11\},$$
			$$\Gamma_{5}=\{5,19\},~\Gamma_{6}=\{6\},~\Gamma_{7}=\{7,14\}, ~\Gamma_{8}=\{9\},~ \Gamma_{9}=\{10,17\},$$
			$$\Gamma_{10}=\{12\},~
			\Gamma_{11}=\{13,20\},~\Gamma_{12}=\{15\},~
			\Gamma_{13}=\{18\}.$$
			Consider the irreducible cyclic code $\mathcal{C}=\mathcal{R}_{n}\varepsilon_{7}$, where the primitive idempotent $\varepsilon_{7}$ corresponds to $\Gamma_{7}$. According to Corollary \ref{c3.1}, $\mathcal{C}$ is a one-weight or two-weight cyclic code. By use of Magma \cite{4}, the weight distribution of $\mathcal{C}$ is $1+21x^{14}+42x^{21}$, i.e., $\mathcal{C}$ is a two-weight cyclic code. Theorem \ref{t3.1} therefore guarantees that any two codewords of $\mathcal{C}$ with the same weight are in the same $\langle\mu_{q},\rho,\sigma_{\xi} \rangle$-orbit.}
	\end{Example}

	\begin{Corollary}\label{c3.2}
		Suppose that $(q,k)\neq (2,3)$ and that $k$ and $2k+1$ are odd primes satisfying ${\rm gcd}(q-1,k)=1$, ${\rm gcd}(q-1,2k+1)=1$ and $q^{k}\equiv 1~({\rm mod}~2k+1)$. Let $\mathcal{C}$ be an irreducible cyclic code of length $n=\frac{q^{k}-1}{(2k+1)N}$ over $\mathbb{F}_{q}$, where $N$ is a divisor of $q-1$. Suppose $i_{t}$ is an integer belonging to the $q$-cyclotomic coset corresponding to the primitive idempotent generating $\mathcal{C}$. If ${\rm gcd}\big(i_{t}, \frac{q^{k}-1}{(2k+1)(q-1)}\big)=1$, then $\mathcal{C}$ is a one-, two- or three-weight cyclic code.
	\end{Corollary}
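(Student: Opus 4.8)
The plan is to follow the template of the proof of Corollary~\ref{c3.1}: first determine the size $k_t$ of the $q$-cyclotomic coset containing $i_t$ (equivalently, the dimension of $\mathcal{C}$), then substitute $k_t$ into the orbit-counting formula of Theorem~\ref{t3.1} and evaluate it. Let $k_t$ be the least positive integer with $q^{k_t}\equiv 1 \pmod{n/{\rm gcd}(n,i_t)}$. Since $n\mid q^k-1$ we have $q^k\equiv 1 \pmod{n/{\rm gcd}(n,i_t)}$, so $k_t\mid k$; as $k$ is prime, $k_t\in\{1,k\}$. Before excluding $k_t=1$ I would record the coprimality facts that are used repeatedly: by the identity ${\rm gcd}\bigl(\tfrac{q^{k}-1}{q-1},q-1\bigr)={\rm gcd}(k,q-1)=1$ (since $1+q+\cdots+q^{k-1}\equiv k\pmod{q-1}$), the factorization $q^{k}-1=(q-1)\cdot\tfrac{q^{k}-1}{q-1}$ has coprime factors; since $2k+1$ is a prime dividing $q^{k}-1$ with ${\rm gcd}(2k+1,q-1)=1$, it divides $\tfrac{q^{k}-1}{q-1}$, so $M:=\tfrac{q^{k}-1}{(2k+1)(q-1)}$ is an integer that is coprime to $q-1$ and hence to $N$; and a short computation gives $n=\tfrac{(q-1)M}{N}$.

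The main obstacle is to rule out $k_t=1$. If $k_t=1$ then $n\mid i_t(q-1)$, which after substituting $n=\tfrac{(q-1)M}{N}$ is equivalent to $M\mid i_tN$; since ${\rm gcd}(M,N)=1$ this forces $M\mid i_t$, and then ${\rm gcd}\bigl(i_t,M\bigr)={\rm gcd}\bigl(i_t,\tfrac{q^{k}-1}{(2k+1)(q-1)}\bigr)=1$ gives $M=1$, i.e. $\tfrac{q^{k}-1}{q-1}=2k+1$, that is $1+q+\cdots+q^{k-1}=2k+1$. I would then observe that for an odd prime $k$ this equation has $(q,k)=(2,3)$ as its only solution: for $q=2$ it reduces to $2^{k-1}=k+1$, valid only at $k=3$, and for $q\ge 3$, $k\ge 3$ the left-hand side already exceeds $2k+1$. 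Since $(q,k)\neq(2,3)$ by hypothesis, we conclude $k_t=k$.

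Finally I would feed $k_t=k$ into Theorem~\ref{t3.1}. Because $k$ is prime, the sum over $r\mid k$ collapses to $r=1$ and $r=k$, giving
$$\frac{1}{k}\Bigl[(k-1)\,{\rm gcd}\bigl(q-1,\tfrac{q^{k}-1}{q-1},\tfrac{i_{t}(q^{k}-1)}{n}\bigr)+{\rm gcd}\bigl(q^{k}-1,\tfrac{q^{k}-1}{q-1},\tfrac{i_{t}(q^{k}-1)}{n}\bigr)\Bigr].$$
The first gcd is $1$ because ${\rm gcd}(q-1,\tfrac{q^{k}-1}{q-1})=1$. For the second, $\tfrac{q^{k}-1}{n}=(2k+1)N$ gives $\tfrac{i_{t}(q^{k}-1)}{n}=i_{t}(2k+1)N$, while $\tfrac{q^{k}-1}{q-1}=(2k+1)M$, so the second gcd equals $(2k+1)\,{\rm gcd}(M,i_{t}N)=2k+1$, using ${\rm gcd}(M,N)={\rm gcd}(M,i_{t})=1$. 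Hence the number of $\langle\mu_{q},\rho,\sigma_{\xi}\rangle$-orbits on $\mathcal{C}^{*}=\mathcal{C}\backslash\{\mathbf{0}\}$ equals $\frac{1}{k}\bigl[(k-1)+(2k+1)\bigr]=3$, and Lemma~\ref{l2.1} yields that $\mathcal{C}$ has at most three non-zero weights. The delicate step is the exclusion of $k_t=1$, where one must track carefully the coprimality relations among $q-1$, $\tfrac{q^{k}-1}{q-1}$, $2k+1$, $N$ and $M$ and recognize that $(q,k)\neq(2,3)$ is exactly the condition needed to force $M>1$; the rest is a direct specialization of Theorem~\ref{t3.1}.
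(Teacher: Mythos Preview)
Your proof is correct and follows essentially the same approach as the paper's: both establish $k_t=k$ by showing that $k_t=1$ would force $M:=\tfrac{q^{k}-1}{(2k+1)(q-1)}$ (the paper's $l$) to equal $1$, which is ruled out by $(q,k)\neq(2,3)$, and then plug $k_t=k$ into Theorem~\ref{t3.1} to obtain the orbit count $3$. The only differences are organizational---you fold the verification $M>1$ into the exclusion of $k_t=1$, whereas the paper proves $l>1$ up front---and in the elementary argument showing $\tfrac{q^{k}-1}{q-1}=2k+1$ has no admissible solution.
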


	\begin{proof}
		Since $(q-1)|(q^{k}-1)$, $(2k+1)|(q^{k}-1)$ and ${\rm gcd}(q-1,2k+1)=1$, we have
		$$q^{k}-1=(2k+1)(q-1)l$$
		for some positive integer $l$. We claim that $l>1$. Otherwise,
		$$2k+1=\frac{q^{k}-1}{q-1}=q(q+1)(q^{k-3}+q^{k-5}+\cdots+q^{2}+1)+1,$$
		and hence $k=\frac{q(q+1)}{2}(q^{k-3}+q^{k-5}+\cdots+q^{2}+1)$, which is impossible as $k$ is a prime. Let $k_{t}$ be the least positive integer such that $i_{t}q^{k_{t}}\equiv i_{t}~({\rm mod}~n)$, or equivalently, $q^{k_{t}}\equiv 1~\big({\rm mod}~\frac{n}{{\rm gcd}(n,i_{t})}\big)$. Note that $q^{k}\equiv 1~\big({\rm mod}~\frac{n}{{\rm gcd}(n,i_{t})}\big)$, then $k_{t}|k$, and hence $k_{t}=1$ or $k$. If $k_{t}=1$, then $n|i_{t}(q-1)$, that is, $\frac{l(q-1)}{N}|i_{t}(q-1)$, implying $l|i_{t}N$. As ${\rm gcd}(i_{t},l)=1$, we have $l|N$, which is impossible because $N|(q-1)$, $l|\frac{q^{k}-1}{q-1}$ and ${\rm gcd}(q-1,\frac{q^{k}-1}{q-1})={\rm gcd}(q-1,k)=1$. So $k_{t}=k$. It follows from Theorem \ref{t3.1} that the number of orbits of $\langle \mu_{q},\rho,\sigma_{\xi} \rangle$ on $\mathcal{C}^{*}=\mathcal{C}\backslash \{\bf 0\}$ is equal to
		\begin{align*}
			&\frac{1}{k}\sum_{r\mid k}\varphi(\frac{k}{r}){\rm gcd}\big(q^{r}-1,\frac{q^{k}-1}{q-1},\frac{i_{t}(q^{k}-1)}{n}\big)\\
			=&\frac{1}{k}\big[\varphi(k){\rm gcd}\big(q-1,\frac{q^{k}-1}{q-1},\frac{i_{t}(q^{k}-1)}{n}\big)+\varphi(1){\rm gcd}\big(q^{k}-1,\frac{q^{k}-1}{q-1},\frac{i_{t}(q^{k}-1)}{n}\big)\big]\\
			=&\frac{1}{k}\big[\varphi(k)+\varphi(1){\rm gcd}\big(\frac{q^{k}-1}{q-1},\frac{i_{t}(q^{k}-1)}{n}\big)\big]\\
			=&\frac{1}{k}\big[k-1+{\rm gcd}\big((2k+1)l,i_{t}(2k+1)N\big)\big]\\
			=&\frac{1}{k}(k-1+2k+1)\\
			=&3,
		\end{align*}
		and so the number of non-zero weights of $\mathcal{C}$ is less than or equal to $3$, that is, $\mathcal{C}$ is a one-, two- or three-weight cyclic code.
	\end{proof}
	
	\begin{Example}
		Take $q=3$ and $n=22$. All the distinct $3$-cyclotomic cosets modulo $22$ are given by
		$$\Gamma_{0}=\{0\},~\Gamma_{1}=\{1,3,9,5,15\},~\Gamma_{2}=\{2,6,18,10,8\},$$
		$$\Gamma_{3}=\{4,12,14,20,16\},~\Gamma_{4}=\{7,21,19,13,17\},~\Gamma_{5}=\{11\}.$$
		Consider the irreducible cyclic code $\mathcal{C}=\mathcal{R}_{n}\varepsilon_{2}$, where the primitive idempotent $\varepsilon_{2}$ corresponds to $\Gamma_{2}$. According to Corollary 3.2, $\mathcal{C}$ is a one-, two- or three-weight cyclic code. By use of Magma \cite{4}, the weight distribution of $\mathcal{C}$ is $1+132x^{12}+110x^{18}$, i.e., $\mathcal{C}$ is a two-weight cyclic code.
	\end{Example}

	\subsection{An improved upper bound on the number of non-zero weights of a general cyclic code}
	We now turn to consider the action of $\langle\mu_{q},\rho,\sigma_{\xi} \rangle$ on a general cyclic code $\mathcal{C}$. Let $j_{1},j_{2},\cdots,j_{u}$ be positive integers and let $t_{j_{1}},t_{j_{2}},\cdots,t_{j_{u}}$ be integers with $0\leq t_{j_{1}}<t_{j_{2}}<\cdots<t_{j_{u}}\leq s$. Suppose that the irreducible cyclic code $\mathcal{R}_{n}\varepsilon_{t_{j_{\ell}}}$ corresponds to the $q$-cyclotomic coset $\{i_{t_{j_{\ell}}},i_{t_{j_{\ell}}}q,\cdots,i_{t_{j_{\ell}}}q^{k_{t_{j_{\ell}}}-1}\}$ for $1\leq \ell\leq u$. Define
	$$\mathcal{C}_{j_{1},j_{2},\cdots, j_{u}}^{\sharp}=\mathcal{R}_{n}\varepsilon_{t_{j_{1}}}\backslash\{{\bf 0}\}\bigoplus \mathcal{R}_{n}\varepsilon_{t_{j_{2}}}\backslash\{{\bf 0}\}\bigoplus \cdots \bigoplus \mathcal{R}_{n}\varepsilon_{t_{j_{u}}}\backslash\{{\bf 0}\}.$$
	
	The following lemma gives an upper bound on the number of orbits of the group $\langle\mu_{q},\rho,\sigma_{\xi} \rangle$ acting on $\mathcal{C}_{j_{1},j_{2},\cdots, j_{u}}^{\sharp}$.

	\begin{Lemma}\label{l3.1}
		With the notation given above, then the number of orbits of $\langle\mu_{q},\rho,\sigma_{\xi} \rangle$ on $\mathcal{C}_{j_{1},j_{2},\cdots, j_{u}}^{\sharp}$ is less than or equal to
		\begin{align*}
			&\frac{1}{mn(q-1)}\sum_{r=0}^{m-1}{\rm gcd}\big(n,\frac{i_{t_{j_{1}}}II_{t_{j_{1}}}}{{\rm gcd}(I,I_{t_{j_{1}}})},\cdots,\frac{i_{t_{j_{u}}}II_{t_{j_{u}}}}{{\rm gcd}(I,I_{t_{j_{u}}})},\frac{(i_{t_{j_{2}}}-i_{t_{j_{1}}})I_{t_{j_{1}}}I_{t_{j_{2}}}}{{\rm gcd}(I_{t_{j_{1}}},I_{t_{j_{2}}})},\cdots,\\ &\quad \frac{(i_{t_{j_{u}}}-i_{t_{j_{1}}})I_{t_{j_{1}}}I_{t_{j_{u}}}}{{\rm gcd}(I_{t_{j_{1}}},I_{t_{j_{u}}})},\cdots,\frac{(i_{t_{j_{u}}}-i_{t_{j_{u-1}}})I_{t_{j_{u-1}}}I_{t_{j_{u}}}}{{\rm gcd}(I_{t_{j_{u-1}}},I_{t_{j_{u}}})}\big){\rm gcd}(I,I_{t_{j_{1}}},\cdots,I_{t_{j_{u}}})\prod_{\ell=1}^{u}(q^{{\rm gcd}(k_{t_{j_{u}}},r)}-1),
		\end{align*}
		where $I=q-1$ and $I_{t_{j_{\ell}}}=\frac{q^{k_{t_{j_{\ell}}}}-1}{q^{{\rm gcd}(k_{t_{j_{\ell}}},r)}-1}$ for $\ell=1,2,\cdots,u$.
	\end{Lemma}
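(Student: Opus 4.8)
Throughout, write a generic element of $\langle\mu_{q},\rho,\sigma_{\xi}\rangle$ uniquely as $\mu_{q}^{r}\rho^{r_{2}}\sigma_{\xi}^{r_{3}}$ with $0\leq r\leq m-1$, $0\leq r_{2}\leq n-1$, $0\leq r_{3}\leq q-2$ by Lemma~\ref{l2.2}. The plan is to apply Burnside's lemma in the form of Equation (\ref{e2.1}) to the action of $\langle\mu_{q},\rho,\sigma_{\xi}\rangle$ on $X=\mathcal{C}_{j_{1},j_{2},\cdots,j_{u}}^{\sharp}$. Each ideal $\mathcal{R}_{n}\varepsilon_{t_{j_{\ell}}}$ is invariant under $\mu_{q}$, $\rho$ and $\sigma_{\xi}$ (the cyclotomic cosets are $\mu_{q}$-invariant, ideals are $\rho$-invariant, $\sigma_{\xi}$ is a scalar), so a vector $\mathbf{c}_{1}+\cdots+\mathbf{c}_{u}\in X$ is fixed by $\mu_{q}^{r}\rho^{r_{2}}\sigma_{\xi}^{r_{3}}$ if and only if each $\mathbf{c}_{\ell}$ is; hence $|{\rm Fix}(\mu_{q}^{r}\rho^{r_{2}}\sigma_{\xi}^{r_{3}})|=\prod_{\ell=1}^{u}F_{\ell}$, where $F_{\ell}$ counts the non-zero fixed points inside $\mathcal{R}_{n}\varepsilon_{t_{j_{\ell}}}$. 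Running the computation in the proof of Theorem~\ref{t3.1} on each component separately (identifying $\mathcal{R}_{n}\varepsilon_{t_{j_{\ell}}}$ with $\mathbb{F}_{q^{k_{t_{j_{\ell}}}}}$ as there), one gets $F_{\ell}=q^{{\rm gcd}(k_{t_{j_{\ell}}},r)}-1$ if $\zeta^{-i_{t_{j_{\ell}}}q^{r}r_{2}}\in\xi^{r_{3}}H_{\ell}$ and $F_{\ell}=0$ otherwise, where $H_{\ell}\leq\mathbb{F}_{q^{m}}^{*}$ is the group of $(q^{r}-1)$-th powers of $\mathbb{F}_{q^{k_{t_{j_{\ell}}}}}^{*}$, cyclic of order $I_{t_{j_{\ell}}}$. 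Thus $|\langle\mu_{q},\rho,\sigma_{\xi}\rangle\backslash X|=\frac{1}{mn(q-1)}\sum_{r=0}^{m-1}|\mathcal{P}_{r}|\prod_{\ell=1}^{u}(q^{{\rm gcd}(k_{t_{j_{\ell}}},r)}-1)$, where $\mathcal{P}_{r}$ is the set of pairs $(r_{2},r_{3})$ with $\zeta^{-i_{t_{j_{\ell}}}q^{r}r_{2}}\in\xi^{r_{3}}H_{\ell}$ for all $\ell$, and it remains to bound $|\mathcal{P}_{r}|$ from above.

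To bound the number of feasible $r_{2}$, I would collect necessary conditions. If $(r_{2},r_{3})\in\mathcal{P}_{r}$ then $\zeta^{-i_{t_{j_{\ell}}}q^{r}r_{2}}\in\langle\xi\rangle H_{\ell}$ for each $\ell$, a cyclic subgroup of $\mathbb{F}_{q^{m}}^{*}$ of order $II_{t_{j_{\ell}}}/{\rm gcd}(I,I_{t_{j_{\ell}}})$; and for each pair $\ell<\ell'$, dividing the two memberships cancels $\xi^{r_{3}}$ and forces $\zeta^{-(i_{t_{j_{\ell'}}}-i_{t_{j_{\ell}}})q^{r}r_{2}}\in H_{\ell}H_{\ell'}$, a cyclic subgroup of order $I_{t_{j_{\ell}}}I_{t_{j_{\ell'}}}/{\rm gcd}(I_{t_{j_{\ell}}},I_{t_{j_{\ell'}}})$. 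Since all these elements and subgroups live in the single cyclic group $\mathbb{F}_{q^{m}}^{*}$, where membership of an element in a subgroup is equivalent to divisibility of orders, the same manipulation as in the proof of Theorem~\ref{t3.1} rewrites each membership as a divisibility of the shape $\frac{n}{{\rm gcd}(n,x_{\alpha})}\mid r_{2}$, the $x_{\alpha}$ being exactly the quantities $\frac{i_{t_{j_{\ell}}}II_{t_{j_{\ell}}}}{{\rm gcd}(I,I_{t_{j_{\ell}}})}$ and $\frac{(i_{t_{j_{\ell'}}}-i_{t_{j_{\ell}}})I_{t_{j_{\ell}}}I_{t_{j_{\ell'}}}}{{\rm gcd}(I_{t_{j_{\ell}}},I_{t_{j_{\ell'}}})}$ appearing in the statement. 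The number of $r_{2}\in\{0,\ldots,n-1\}$ satisfying all of these is $n/{\rm lcm}_{\alpha}\big(n/{\rm gcd}(n,x_{\alpha})\big)={\rm gcd}(n,\{x_{\alpha}\})$, a prime-by-prime check, which is precisely the long $\gcd$-factor of the claimed bound.

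For a fixed feasible $r_{2}$ I would then bound the number of admissible $r_{3}$. For each $\ell$, the condition $\zeta^{-i_{t_{j_{\ell}}}q^{r}r_{2}}\in\xi^{r_{3}}H_{\ell}$ determines $\xi^{r_{3}}$ up to the subgroup $H_{\ell}$ of $\langle\xi\rangle H_{\ell}$, that is, fixes $r_{3}$ in one residue class modulo $(q-1)/|\langle\xi\rangle\cap H_{\ell}|=(q-1)/{\rm gcd}(I,I_{t_{j_{\ell}}})$; intersecting the $u$ resulting classes, the number of valid $r_{3}$ in $\{0,\ldots,q-2\}$ is either $0$ or $(q-1)/{\rm lcm}_{\ell}\big((q-1)/{\rm gcd}(I,I_{t_{j_{\ell}}})\big)={\rm gcd}(I,I_{t_{j_{1}}},\ldots,I_{t_{j_{u}}})$, again by a prime-by-prime argument. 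Hence $|\mathcal{P}_{r}|\leq{\rm gcd}(n,\{x_{\alpha}\})\cdot{\rm gcd}(I,I_{t_{j_{1}}},\ldots,I_{t_{j_{u}}})$, and substituting into the Burnside expression of the first paragraph gives the asserted upper bound (the factor $\prod_{\ell}(q^{{\rm gcd}(k_{t_{j_{\ell}}},r)}-1)$ is carried along unchanged, and $I=q-1$).

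The step that turns an equality into a mere inequality — in contrast with the exact count of Theorem~\ref{t3.1}, which is the case $u=1$ — is the passage from ``$r_{2}$ meets all the necessary divisibilities'' to ``$(r_{2},r_{3})\in\mathcal{P}_{r}$ for some $r_{3}$'': I bound the contribution of each feasible $r_{2}$ by the maximal possible number ${\rm gcd}(I,I_{t_{j_{1}}},\ldots,I_{t_{j_{u}}})$ of values of $r_{3}$ without verifying that at least one such $r_{3}$ always exists, which for $u\geq 3$ would require a compatibility (generalized-CRT) analysis among the components and is unnecessary for the bound. Accordingly, the only genuinely delicate points are (i) confirming that all the cyclic subgroups in play sit inside the common cyclic group $\mathbb{F}_{q^{m}}^{*}$, so that ``containment $\Leftrightarrow$ divisibility of orders'' is legitimate, and (ii) the bookkeeping that consolidates the single-index and pairwise divisibility conditions on $r_{2}$ into the one $\gcd$ displayed in the statement.
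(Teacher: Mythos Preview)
Your proposal is correct and follows essentially the same approach as the paper: Burnside's lemma componentwise, the fixed-point count $0$ or $\prod_{\ell}(q^{\gcd(k_{t_{j_{\ell}}},r)}-1)$, then the same necessary single-index and pairwise membership conditions on $r_{2}$ together with the exact count ${\rm gcd}(I,I_{t_{j_{1}}},\ldots,I_{t_{j_{u}}})$ for $r_{3}$. Your identification of where the inequality enters (the one-way implication from $r_{2}\in S(r_{1})$ to the divisibility conditions) matches the paper's, and your $\gcd$/lcm bookkeeping is the same computation the paper carries out.
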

	
	\begin{proof}
		According to Equation (\ref{e2.1}) and Lemma \ref{l2.2}, we see that
		$$\big|\langle\mu_{q},\rho,\sigma_{\xi} \rangle\backslash \mathcal{C}_{j_{1},j_{2},\cdots, j_{u}}^{\sharp}\big|=\frac{1}{mn(q-1)}\sum_{r_{1}=0}^{m-1}\sum_{r_{2}=0}^{n-1}\sum_{r_{3}=0}^{q-2}\Big|\big\{{\bf c}\in \mathcal{C}_{j_{1},j_{2},\cdots, j_{u}}^{\sharp}~\big|~\mu_{q}^{r_{1}}\rho^{r_{2}}\sigma_{\xi}^{r_{3}}({\bf c})={\bf c}\big\}\Big|.$$
		
		Let ${\bf c}={\bf c}_{t_{j_{1}}}+{\bf c}_{t_{j_{2}}}+\cdots+{\bf c}_{t_{j_{u}}}\in \mathcal{C}_{j_{1},j_{2},\cdots, j_{u}}^{\sharp}$, where ${\bf c}_{t_{j_{\ell}}}\in \mathcal{R}_{n}\varepsilon_{t_{j_{\ell}}}\backslash\{{\bf 0}\}$ for $1\leq \ell \leq u$. Suppose that
		$${\bf c}_{t_{j_{\ell}}}=\sum_{j=0}^{k_{t_{j_{\ell}}}-1}\big(c_{0,t_{j_{\ell}}}+c_{1,t_{j_{\ell}}}\zeta^{i_{t_{j_{\ell}}}q^{j}}+\cdots+c_{k_{t_{j_{\ell}}}-1,t_{j_{\ell}}}\zeta^{(k_{t_{j_{\ell}}}-1)i_{t_{j_{\ell}}}q^{j}}\big)e_{i_{t_{j_{\ell}}}q^{j}}~~{\rm for}~\ell=1,2,\cdots,u.$$
		Then we have
		\begin{align*}
			\mu_{q}^{r_{1}}\rho^{r_{2}}\sigma_{\xi}^{r_{3}}({\bf c})&=\mu_{q}^{r_{1}}\rho^{r_{2}}\sigma_{\xi}^{r_{3}}({\bf c}_{t_{j_{1}}})+\cdots+\mu_{q}^{r_{1}}\rho^{r_{2}}\sigma_{\xi}^{r_{3}}({\bf c}_{t_{j_{u}}})\\
			&=\sum_{j=0}^{k_{t_{j_{1}}}-1}\xi^{r_{3}}\zeta^{i_{t_{j_{1}}}q^{r_{1}+j}r_{2}}\big(\sum_{l=0}^{k_{t_{j_{1}}}-1}c_{l,t_{j_{1}}}\zeta^{li_{t_{j_{1}}}q^{j}}\big)^{q^{r_{1}}}e_{i_{t_{j_{1}}}q^{j}}+\cdots\\ 
			&\quad+\sum_{j=0}^{k_{t_{j_{u}}}-1}\xi^{r_{3}}\zeta^{i_{t_{j_{u}}}q^{r_{1}+j}r_{2}}\big(\sum_{l=0}^{k_{t_{j_{u}}}-1}c_{l,t_{j_{u}}}\zeta^{li_{t_{j_{u}}}q^{j}}\big)^{q^{r_{1}}}e_{i_{t_{j_{u}}}q^{j}}.
		\end{align*}
		It follows that
		\begin{align*}
			\mu_{q}^{r_{1}}\rho^{r_{2}}\sigma_{\xi}^{r_{3}}({\bf c})={\bf c}~& \Leftrightarrow~\mu_{q}^{r_{1}}\rho^{r_{2}}\sigma_{\xi}^{r_{3}}({\bf c}_{t_{j_{1}}})+\cdots+\mu_{q}^{r_{1}}\rho^{r_{2}}\sigma_{\xi}^{r_{3}}({\bf c}_{t_{j_{u}}})={\bf c}_{t_{j_{1}}}+\cdots+{\bf c}_{t_{j_{u}}}\\
			&\Leftrightarrow~ \mu_{q}^{r_{1}}\rho^{r_{2}}\sigma_{\xi}^{r_{3}}({\bf c}_{t_{j_{\ell}}})={\bf c}_{t_{j_{\ell}}} ~{\rm for}~\ell=1,2,\cdots,u\\
			&\Leftrightarrow~\xi^{r_{3}}\big(\sum_{l=0}^{k_{t_{j_{\ell}}}-1}c_{l,t_{j_{\ell}}}\zeta^{li_{t_{j_{\ell}}}q^{j}}\big)^{q^{r_{1}}-1}=\zeta^{-i_{t_{j_{\ell}}}q^{r_{1}}r_{2}}~{\rm for}~\ell=1,2,\cdots,u.
		\end{align*}
		For $1\leq \ell\leq u$, the minimal polynomial of $\zeta^{i_{t_{j_{\ell}}}}$ over $\mathbb{F}_{q}$ is of degree $k_{t_{j_{\ell}}}$, and so the set
		$$\Big\{c_{0,t_{j_{\ell}}}+c_{1,t_{j_{\ell}}}\zeta^{i_{t_{j_{\ell}}}}+\cdots+c_{k_{t_{j_{\ell}}}-1,t_{j_{\ell}}}\zeta^{(k_{t_{j_{\ell}}}-1)i_{t_{j_{\ell}}}}~\Big|~c_{l,t_{j_{\ell}}}\in \mathbb{F}_{q},~ 0\leq l\leq k_{t_{j_{\ell}}}-1\Big\}$$
		forms a subfield $\mathbb{F}_{q^{k_{t_{j_{\ell}}}}}$ of $\mathbb{F}_{q^{m}}$. Then the number of ${\bf c}\in \mathcal{C}_{j_{1},j_{2},\cdots, j_{u}}^{\sharp}$ satisfying $\mu_{q}^{r_{1}}\rho^{r_{2}}\sigma_{\xi}^{r_{3}}({\bf c})={\bf c}$ is equal to the number of $u$-tuples $(\alpha_{t_{j_{1}}},\alpha_{t_{j_{2}}},\cdots,\alpha_{t_{j_{u}}})$ with $\alpha_{t_{j_{\ell}}}\in \mathbb{F}_{q^{k_{t_{j_{\ell}}}}}^{*}$ such that $\xi^{r_{3}}\alpha_{t_{j_{\ell}}}^{q^{r_{1}}-1}=\zeta^{-i_{t_{j_{\ell}}}q^{r_{1}}r_{2}}$ for all $1\leq \ell\leq u$, which is easily checked to be $0$ or $\prod\limits_{\ell=1}^{u}(q^{{\rm gcd}(k_{t_{j_{\ell}}},r_{1})}-1)$. Next we aim to find the number of pairs $(r_{2},r_{3})$ with $0\leq r_{2}\leq n-1$ and $0\leq r_{3}\leq q-2$ such that there exists some $u$-tuple $(\alpha_{t_{j_{1}}},\alpha_{t_{j_{2}}},\cdots,\alpha_{t_{j_{u}}})$ with $\alpha_{t_{j_{\ell}}}\in \mathbb{F}_{q^{k_{t_{j_{\ell}}}}}^{*}$ such that $\xi^{r_{3}}\alpha_{t_{j_{\ell}}}^{q^{r_{1}}-1}=\zeta^{-i_{t_{j_{\ell}}}q^{r_{1}}r_{2}}$ for all $1\leq \ell\leq u$.
		
		For $1\leq \ell\leq u$, write $\mathbb{F}_{q^{k_{t_{j_{\ell}}}}}^{*}=\langle \theta_{t_{j_{\ell}}} \rangle$, that is, $\mathbb{F}_{q^{k_{t_{j_{\ell}}}}}^{*}$ is generated by $\theta_{t_{j_{\ell}}}$. It is easy to check that $\big\{\alpha_{t_{j_{\ell}}}^{q^{r_{1}}-1}~\big|~\alpha_{t_{j_{\ell}}}\in \mathbb{F}_{q^{k_{t_{j_{\ell}}}}}^{*}\big\}=\langle \theta_{t_{j_{\ell}}}^{q^{r_{1}}-1}\rangle$, which is a cyclic subgroup of $\mathbb{F}_{q^{m}}^{*}$ of order $I_{t_{j_{\ell}}}=\frac{q^{k_{t_{j_{\ell}}}}-1}{q^{{\rm gcd}(k_{t_{j_{\ell}}},r_{1})}-1}$. Since $\langle \xi \rangle$ is a cyclic subgroup of $\mathbb{F}_{q^{m}}^{*}$ of order $I=q-1$, $\langle \xi \rangle \langle \theta_{t_{j_{\ell}}}^{q^{r_{1}}-1} \rangle$ is a cyclic subgroup of $\mathbb{F}_{q^{m}}^{*}$ of order $\frac{II_{t_{j_{\ell}}}}{{\rm gcd}(I,I_{t_{j_{\ell}}})}$ for $1\leq \ell\leq u$. For $1\leq \ell< \ell'\leq u$, $\langle \theta_{t_{j_{\ell}}}^{q^{r_{1}}-1}  \rangle \langle \theta_{t_{j_{\ell'}}}^{q^{r_{1}}-1} \rangle$ is a cyclic subgroup of $\mathbb{F}_{q^{m}}^{*}$ of order $\frac{I_{t_{j_{\ell}}}I_{t_{j_{\ell'}}}}{{\rm gcd}(I_{t_{j_{\ell}}},I_{t_{j_{\ell'}}})}$. 
		
		For $0\leq r_{1}\leq m-1$, denote
		$$S(r_{1})=\Big\{0\leq r_{2}\leq n-1~\Big|~\exists~ 0\leq r_{3}\leq q-2 ~s.t.~ \zeta^{-i_{t_{j_{\ell}}}q^{r_{1}}r_{2}}\in \xi^{r_{3}}\langle \theta_{t_{j_{\ell}}}^{q^{r_{1}}-1}  \rangle ~{\rm for~all~}1\leq \ell\leq u\Big\}.$$
		Then we have
		\begin{align*}
			r_{2}\in S(r_{1}) ~&\Rightarrow~\zeta^{-i_{t_{j_{\ell}}}q^{r_{1}}r_{2}}\in \langle \xi\rangle\langle \theta_{t_{j_{\ell}}}^{q^{r_{1}}-1}  \rangle ~{\rm for~}1\leq \ell\leq u,\\
			& \qquad {\rm and}~\zeta^{-(i_{t_{j_{\ell'}}}-i_{t_{j_{\ell}}})q^{r_{1}}r_{2}}\in \langle \theta_{t_{j_{\ell}}}^{q^{r_{1}}-1}  \rangle\langle \theta_{t_{j_{\ell'}}}^{q^{r_{1}}-1}  \rangle ~{\rm for~}1\leq \ell< \ell'\leq u\\
			&\Leftrightarrow~\frac{n}{{\rm gcd}(n,i_{t_{j_{\ell}}}r_{2})}\Big|\frac{II_{t_{j_{\ell}}}}{{\rm gcd}(I,I_{t_{j_{\ell}}})}~{\rm for~}1\leq \ell\leq u,\\
			& \qquad {\rm and}~\frac{n}{{\rm gcd}\big(n,(i_{t_{j_{\ell'}}}-i_{t_{j_{\ell}}})r_{2}\big)}\Big|\frac{I_{t_{j_{\ell}}}I_{t_{j_{\ell'}}}}{{\rm gcd}(I_{t_{j_{\ell}}},I_{t_{j_{\ell'}}})} ~{\rm for~}1\leq \ell< \ell'\leq u\\
			&\Leftrightarrow~\frac{n}{{\rm gcd}\big(n,\frac{i_{t_{j_{\ell}}}II_{t_{j_{\ell}}}}{{\rm gcd}(I,I_{t_{j_{\ell}}})}\big)}\Biggl|r_{2}~{\rm for~}1\leq \ell\leq u,\\
			& \qquad {\rm and}~\frac{n}{{\rm gcd}\big(n,\frac{(i_{t_{j_{\ell'}}}-i_{t_{j_{\ell}}})I_{t_{j_{\ell}}}I_{t_{j_{\ell'}}}}{{\rm gcd}(I_{t_{j_{\ell}}},I_{t_{j_{\ell'}}})}\big)}\Biggl|r_{2} ~{\rm for~}1\leq \ell< \ell'\leq u\\
			&\Leftrightarrow~\frac{n}{{\rm gcd}\big(n,\frac{i_{t_{j_{1}}}II_{t_{j_{1}}}}{{\rm gcd}(I,I_{t_{j_{1}}})},\cdots,\frac{i_{t_{j_{u}}}II_{t_{j_{u}}}}{{\rm gcd}(I,I_{t_{j_{u}}})}\big)}\Biggl|r_{2}~{\rm and}\\
			& \qquad \frac{n}{{\rm gcd}\big(n,\frac{(i_{t_{j_{2}}}-i_{t_{j_{1}}})I_{t_{j_{1}}}I_{t_{j_{2}}}}{{\rm gcd}(I_{t_{j_{1}}},I_{t_{j_{2}}})},\cdots,\frac{(i_{t_{j_{u}}}-i_{t_{j_{1}}})I_{t_{j_{1}}}I_{t_{j_{u}}}}{{\rm gcd}(I_{t_{j_{1}}},I_{t_{j_{u}}})},\cdots,\frac{(i_{t_{j_{u}}}-i_{t_{j_{u-1}}})I_{t_{j_{u-1}}}I_{t_{j_{u}}}}{{\rm gcd}(I_{t_{j_{u-1}}},I_{t_{j_{u}}})}\big)}\Biggl|r_{2}\\
			&\Leftrightarrow~\frac{n}{{\rm gcd}(d_{1},d_{2})}\Big|r_{2},~{\rm where}~d_{1}={\rm gcd}\big(n,\frac{i_{t_{j_{1}}}II_{t_{j_{1}}}}{{\rm gcd}(I,I_{t_{j_{1}}})},\cdots,\frac{i_{t_{j_{u}}}II_{t_{j_{u}}}}{{\rm gcd}(I,I_{t_{j_{u}}})}\big)\\
			& \qquad {\rm and} ~d_{2}={\rm gcd}\big(n,\frac{(i_{t_{j_{2}}}-i_{t_{j_{1}}})I_{t_{j_{1}}}I_{t_{j_{2}}}}{{\rm gcd}(I_{t_{j_{1}}},I_{t_{j_{2}}})},\cdots,\frac{(i_{t_{j_{u}}}-i_{t_{j_{1}}})I_{t_{j_{1}}}I_{t_{j_{u}}}}{{\rm gcd}(I_{t_{j_{1}}},I_{t_{j_{u}}})},\cdots,\\
			&\qquad \frac{(i_{t_{j_{u}}}-i_{t_{j_{u-1}}})I_{t_{j_{u-1}}}I_{t_{j_{u}}}}{{\rm gcd}(I_{t_{j_{u-1}}},I_{t_{j_{u}}})}\big).
		\end{align*}
		Hence
		\begin{align*}
			|S(r_{1})|&\leq {\rm gcd}(d_{1},d_{2})\\
			&= {\rm gcd}\big(n,\frac{i_{t_{j_{1}}}II_{t_{j_{1}}}}{{\rm gcd}(I,I_{t_{j_{1}}})},\cdots,\frac{i_{t_{j_{u}}}II_{t_{j_{u}}}}{{\rm gcd}(I,I_{t_{j_{u}}})},\frac{(i_{t_{j_{2}}}-i_{t_{j_{1}}})I_{t_{j_{1}}}I_{t_{j_{2}}}}{{\rm gcd}(I_{t_{j_{1}}},I_{t_{j_{2}}})},\cdots,\frac{(i_{t_{j_{u}}}-i_{t_{j_{1}}})I_{t_{j_{1}}}I_{t_{j_{u}}}}{{\rm gcd}(I_{t_{j_{1}}},I_{t_{j_{u}}})},\\
			&\quad \cdots,\frac{(i_{t_{j_{u}}}-i_{t_{j_{u-1}}})I_{t_{j_{u-1}}}I_{t_{j_{u}}}}{{\rm gcd}(I_{t_{j_{u-1}}},I_{t_{j_{u}}})}\big).
		\end{align*}
		
		Suppose $r_{2}\in S(r_{1})$, and denote
		$$R(r_{1},r_{2})=\Big\{0\leq r_{3}\leq q-2~\Big|~\zeta^{-i_{t_{j_{\ell}}}q^{r_{1}}r_{2}}\in \xi^{r_{3}}\langle \theta_{t_{j_{\ell}}}^{q^{r_{1}}-1}  \rangle ~{\rm for~all~}1\leq \ell\leq u\Big\}.$$
		Similar to the proof of Theorem \ref{t3.1}, we can prove that
		$$|R(r_{1},r_{2})|={\rm gcd}(I,I_{t_{j_{1}}},\cdots,I_{t_{j_{u}}}).$$
		
		We then conclude that
		\begin{align*}
			&\big|\langle\mu_{q},\rho,\sigma_{\xi} \rangle\backslash \mathcal{C}_{j_{1},j_{2},\cdots, j_{u}}^{\sharp}\big|\\
			=&\frac{1}{mn(q-1)}\sum_{r_{1}=0}^{m-1}\sum_{r_{2}\in S(r_{1})}\sum_{r_{3}\in R(r_{1},r_{2})}\prod_{\ell=1}^{u}(q^{{\rm gcd}(k_{t_{j_{\ell}}},r_{1})}-1)\\
			=&\frac{1}{mn(q-1)}\sum_{r_{1}=0}^{m-1}|S(r_{1})||R(r_{1},r_{2})|\prod_{\ell=1}^{u}(q^{{\rm gcd}(k_{t_{j_{\ell}}},r_{1})}-1)\\
			\leq & \frac{1}{mn(q-1)}\sum_{r_{1}=0}^{m-1}{\rm gcd}\big(n,\frac{i_{t_{j_{1}}}II_{t_{j_{1}}}}{{\rm gcd}(I,I_{t_{j_{1}}})},\cdots,\frac{i_{t_{j_{u}}}II_{t_{j_{u}}}}{{\rm gcd}(I,I_{t_{j_{\ell}}})},\frac{(i_{t_{j_{2}}}-i_{t_{j_{1}}})I_{t_{j_{1}}}I_{t_{j_{2}}}}{{\rm gcd}(I_{t_{j_{1}}},I_{t_{j_{2}}})},\cdots,\\
			&
			\frac{(i_{t_{j_{u}}}-i_{t_{j_{1}}})I_{t_{j_{1}}}I_{t_{j_{u}}}}{{\rm gcd}(I_{t_{j_{1}}},I_{t_{j_{u}}})},\cdots,\frac{(i_{t_{j_{u}}}-i_{t_{j_{u-1}}})I_{t_{j_{u-1}}}I_{t_{j_{u}}}}{{\rm gcd}(I_{t_{j_{u-1}}},I_{t_{j_{u}}})}\big){\rm gcd}(I,I_{t_{j_{1}}},\cdots,I_{t_{j_{u}}})\prod_{\ell=1}^{u}(q^{{\rm gcd}(k_{t_{j_{\ell}}},r_{1})}-1).
		\end{align*}
		We are done.
	\end{proof}
	
	Based on Lemma \ref{l3.1}, an upper bound on the number of non-zero weights of $\mathcal{C}$ is derived as follows.
	\begin{Theorem}\label{t3.2}
		Let $\mathcal{C}$ be a cyclic code of length $n$ over $\mathbb{F}_{q}$. Suppose that
		$$\mathcal{C}=\mathcal{R}_{n}\varepsilon_{t_{1}}\bigoplus \mathcal{R}_{n}\varepsilon_{t_{2}}\bigoplus \cdots \bigoplus \mathcal{R}_{n}\varepsilon_{t_{v}},$$
		where $0\leq t_{1}< t_{2}< \cdots< t_{v}\leq s$, and the primitive idempotent $\varepsilon_{t_{j}}$ corresponds to the $q$-cyclotomic coset $\{i_{t_{j}}, i_{t_{j}}q,\cdots,i_{t_{j}}q^{k_{t_{j}}-1}\}$ for each $1\leq j\leq v$. Then the number of orbits of $\langle\mu_{q},\rho,\sigma_{\xi} \rangle$ on $\mathcal{C}^{*}=\mathcal{C}\backslash \{\bf 0\}$ is less than or equal to
		$$\sum_{\substack{\{j_{1},j_{2},\cdots,j_{u}\}\subseteq \{1,2,\cdots,v\}\\ 1\leq j_{1}< j_{2}<\cdots<j_{u}\leq v}}N_{j_{1},j_{2},\cdots, j_{u}},$$
		where
		\begin{align*}
			N_{j_{1},j_{2},\cdots, j_{u}}&=\frac{1}{mn(q-1)}\sum_{r=0}^{m-1}{\rm gcd}\big(n,\frac{i_{t_{j_{1}}}II_{t_{j_{1}}}}{{\rm gcd}(I,I_{t_{j_{1}}})},\cdots,\frac{i_{t_{j_{u}}}II_{t_{j_{u}}}}{{\rm gcd}(I,I_{t_{j_{u}}})},\frac{(i_{t_{j_{2}}}-i_{t_{j_{1}}})I_{t_{j_{1}}}I_{t_{j_{2}}}}{{\rm gcd}(I_{t_{j_{1}}},I_{t_{j_{2}}})},\\
			&\quad \cdots,\frac{(i_{t_{j_{u}}}-i_{t_{j_{1}}})I_{t_{j_{1}}}I_{t_{j_{u}}}}{{\rm gcd}(I_{t_{j_{1}}},I_{t_{j_{u}}})},\cdots,\frac{(i_{t_{j_{u}}}-i_{t_{j_{u-1}}})I_{t_{j_{u-1}}}I_{t_{j_{u}}}}{{\rm gcd}(I_{t_{j_{u-1}}},I_{t_{j_{u}}})}\big){\rm gcd}(I,I_{t_{j_{1}}},\cdots,I_{t_{j_{u}}})\\
			&\quad \cdot\prod_{\ell=1}^{u}(q^{{\rm gcd}(k_{t_{j_{\ell}}},r)}-1)
		\end{align*}
		with $I=q-1$ and $I_{t_{j_{\ell}}}=\frac{q^{k_{t_{j_{\ell}}}}-1}{q^{{\rm gcd}(k_{t_{j_{\ell}}},r)}-1}$ for $\ell=1,2,\cdots,u$.
		
		In particular, the number of non-zero weights of $\mathcal{C}$ is less than or equal to the number of orbits of $\langle\mu_{q},\rho,\sigma_{\xi} \rangle$ on $\mathcal{C}^{*}$.
	\end{Theorem}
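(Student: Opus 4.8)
The plan is to reduce the count for $\mathcal{C}^{*}$ to the partial counts already bounded in Lemma \ref{l3.1}, by decomposing $\mathcal{C}^{*}$ into $\langle\mu_{q},\rho,\sigma_{\xi}\rangle$-invariant pieces indexed by which homogeneous components of a codeword are non-zero.

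First I would record the key structural observation: each of the three generators $\mu_{q}$, $\rho$, $\sigma_{\xi}$ maps every minimal ideal $\mathcal{R}_{n}\varepsilon_{t_{j}}$ onto itself. For $\sigma_{\xi}$ this is immediate since $\mathcal{R}_{n}\varepsilon_{t_{j}}$ is an $\mathbb{F}_{q}$-subspace and $\xi\in\mathbb{F}_{q}^{*}$. For $\rho$ it holds because $\mathcal{R}_{n}\varepsilon_{t_{j}}$ is an ideal of $\mathcal{R}_{n}$, so $x\cdot\mathcal{R}_{n}\varepsilon_{t_{j}}\subseteq\mathcal{R}_{n}\varepsilon_{t_{j}}$, and $\rho$ has finite order, whence $\rho(\mathcal{R}_{n}\varepsilon_{t_{j}})=\mathcal{R}_{n}\varepsilon_{t_{j}}$. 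For $\mu_{q}$ it follows from the fact that $\mu_{q}$ is a ring automorphism of $\mathcal{R}_{n}$ fixing the primitive idempotent $\varepsilon_{t_{j}}$ (the underlying $q$-cyclotomic coset is stable under multiplication by $q$, hence $\mu_{q}(\varepsilon_{t_{j}})=\varepsilon_{t_{j}}$), so that $\mu_{q}(\mathcal{R}_{n}\varepsilon_{t_{j}})=\mathcal{R}_{n}\mu_{q}(\varepsilon_{t_{j}})=\mathcal{R}_{n}\varepsilon_{t_{j}}$. Consequently the whole group $\langle\mu_{q},\rho,\sigma_{\xi}\rangle$ stabilises each $\mathcal{R}_{n}\varepsilon_{t_{j}}$, $1\le j\le v$.

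Next I would exploit the direct-sum decomposition $\mathcal{C}=\mathcal{R}_{n}\varepsilon_{t_{1}}\oplus\cdots\oplus\mathcal{R}_{n}\varepsilon_{t_{v}}$. Every $\mathbf{c}\in\mathcal{C}$ has a unique expression $\mathbf{c}=\mathbf{c}_{t_{1}}+\cdots+\mathbf{c}_{t_{v}}$ with $\mathbf{c}_{t_{j}}\in\mathcal{R}_{n}\varepsilon_{t_{j}}$, and $\mathbf{c}\neq\mathbf{0}$ precisely when the set $\{\,j:\mathbf{c}_{t_{j}}\neq\mathbf{0}\,\}$ is non-empty. Collecting the codewords by this non-vanishing pattern gives the disjoint decomposition
$$\mathcal{C}^{*}=\bigsqcup_{\substack{\{j_{1},\dots,j_{u}\}\subseteq\{1,\dots,v\}\\ 1\le j_{1}<\cdots<j_{u}\le v}}\mathcal{C}^{\sharp}_{j_{1},\dots,j_{u}}.$$
By the stabilisation property of the previous paragraph, each generator, hence $\langle\mu_{q},\rho,\sigma_{\xi}\rangle$, preserves the non-vanishing pattern of the components, so every $\mathcal{C}^{\sharp}_{j_{1},\dots,j_{u}}$ is $\langle\mu_{q},\rho,\sigma_{\xi}\rangle$-invariant. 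Since the orbit set of a group action on a disjoint union of invariant subsets is the disjoint union of the orbit sets on the pieces,
$$\bigl|\langle\mu_{q},\rho,\sigma_{\xi}\rangle\backslash\mathcal{C}^{*}\bigr|=\sum_{\substack{\{j_{1},\dots,j_{u}\}\subseteq\{1,\dots,v\}\\ 1\le j_{1}<\cdots<j_{u}\le v}}\bigl|\langle\mu_{q},\rho,\sigma_{\xi}\rangle\backslash\mathcal{C}^{\sharp}_{j_{1},\dots,j_{u}}\bigr|.$$

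Finally I would invoke Lemma \ref{l3.1}, which bounds each summand $\bigl|\langle\mu_{q},\rho,\sigma_{\xi}\rangle\backslash\mathcal{C}^{\sharp}_{j_{1},\dots,j_{u}}\bigr|$ by $N_{j_{1},\dots,j_{u}}$, and sum over all non-empty subsets $\{j_{1},\dots,j_{u}\}$ to obtain the stated bound; the ``in particular'' clause then follows directly from Lemma \ref{l2.1}. I do not expect a genuine obstacle here, since the substantive computation has already been carried out in Lemma \ref{l3.1}; the only point demanding care is the verification that every generator, most notably the multiplier $\mu_{q}$, fixes each minimal ideal $\mathcal{R}_{n}\varepsilon_{t_{j}}$ setwise, for this is exactly what makes the non-vanishing pattern a $\langle\mu_{q},\rho,\sigma_{\xi}\rangle$-invariant, and hence makes the orbit count additive over the pieces $\mathcal{C}^{\sharp}_{j_{1},\dots,j_{u}}$.
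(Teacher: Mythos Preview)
Your proposal is correct and follows essentially the same approach as the paper: decompose $\mathcal{C}^{*}$ as the disjoint union of the pieces $\mathcal{C}^{\sharp}_{j_{1},\dots,j_{u}}$, observe that the orbit count is the sum of the orbit counts on the pieces, and then apply Lemma~\ref{l3.1} to each piece. You actually supply more justification than the paper does, since you explicitly verify that $\mu_{q}$, $\rho$, and $\sigma_{\xi}$ each stabilise every minimal ideal $\mathcal{R}_{n}\varepsilon_{t_{j}}$, which is precisely what makes the pieces invariant and the orbit count additive; the paper's proof takes this for granted.
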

	
	\begin{proof}
		Note that
		$$\mathcal{C}\backslash \{{\bf 0}\}=\bigcup_{\substack{\{j_{1},j_{2},\cdots,j_{u}\}\subseteq \{1,2,\cdots,v\}\\ 1\leq j_{1}< j_{2}<\cdots<j_{u}\leq v}}\big(\mathcal{R}_{n}\varepsilon_{t_{j_{1}}}\backslash\{{\bf 0}\}\bigoplus \mathcal{R}_{n}\varepsilon_{t_{j_{2}}}\backslash\{{\bf 0}\}\bigoplus \cdots \bigoplus \mathcal{R}_{n}\varepsilon_{t_{j_{u}}}\backslash\{{\bf 0}\}\big)$$
		is a disjoint union. Let $s_{j_{1},j_{2},\cdots, j_{u}}$ be the number of orbits of $\langle\mu_{q},\rho,\sigma_{\xi} \rangle$ acting on $$\mathcal{R}_{n}\varepsilon_{t_{j_{1}}}\backslash\{{\bf 0}\}\bigoplus \mathcal{R}_{n}\varepsilon_{t_{j_{2}}}\backslash\{{\bf 0}\}\bigoplus \cdots \bigoplus \mathcal{R}_{n}\varepsilon_{t_{j_{u}}}\backslash\{{\bf 0}\}.$$
		Then
		$$\big|\langle\mu_{q},\rho,\sigma_{\xi} \rangle\backslash \mathcal{C}^{*}\big|=\bigcup_{\substack{\{j_{1},j_{2},\cdots,j_{u}\}\subseteq \{1,2,\cdots,v\}\\ 1\leq j_{1}< j_{2}<\cdots<j_{u}\leq v}}s_{j_{1},j_{2},\cdots, j_{u}}.$$
		We know from Lemma \ref{l3.1} that 
		\begin{align*}
			s_{j_{1},j_{2},\cdots, j_{u}}&\leq \frac{1}{mn(q-1)}\sum_{r=0}^{m-1}{\rm gcd}\big(n,\frac{i_{t_{j_{1}}}II_{t_{j_{1}}}}{{\rm gcd}(I,I_{t_{j_{1}}})},\cdots,\frac{i_{t_{j_{u}}}II_{t_{j_{u}}}}{{\rm gcd}(I,I_{t_{j_{u}}})},\frac{(i_{t_{j_{2}}}-i_{t_{j_{1}}})I_{t_{j_{1}}}I_{t_{j_{2}}}}{{\rm gcd}(I_{t_{j_{1}}},I_{t_{j_{2}}})},\\
			&\quad \cdots,\frac{(i_{t_{j_{u}}}-i_{t_{j_{1}}})I_{t_{j_{1}}}I_{t_{j_{u}}}}{{\rm gcd}(I_{t_{j_{1}}},I_{t_{j_{u}}})},\cdots,\frac{(i_{t_{j_{u}}}-i_{t_{j_{u-1}}})I_{t_{j_{u-1}}}I_{t_{j_{u}}}}{{\rm gcd}(I_{t_{j_{u-1}}},I_{t_{j_{u}}})}\big){\rm gcd}(I,I_{t_{j_{1}}},\cdots,I_{t_{j_{u}}})\\
			&\quad \cdot \prod_{\ell=1}^{u}(q^{{\rm gcd}(k_{t_{j_{\ell}}},r)}-1),
		\end{align*}
		where $I=q-1$ and $I_{t_{j_{\ell}}}=\frac{q^{k_{t_{j_{\ell}}}}-1}{q^{{\rm gcd}(k_{t_{j_{\ell}}},r)}-1}$ for $\ell=1,2,\cdots,u$. 
		Then we obtain the desired result.
	\end{proof}
	
	\begin{Remark}{\rm
			Let $\mathcal{C}$ be the cyclic code in Theorem \ref{t3.2}. According to \cite[Theorem III.17]{7},
			$$\big|\langle\rho,\sigma_{\xi} \rangle\backslash \mathcal{C}^{*}\big|=\sum_{\substack{\{j_{1},j_{2},\cdots,j_{u}\}\subseteq \{1,2,\cdots,v\}\\ 1\leq j_{1}< j_{2}<\cdots<j_{u}\leq v}}\big|\langle\rho,\sigma_{\xi} \rangle\backslash \mathcal{C}_{j_{1},j_{2},\cdots, j_{u}}^{\sharp}\big|,$$
			where $\mathcal{C}_{j_{1},j_{2},\cdots, j_{u}}^{\sharp}$ is as defined above and \cite[Lemma III.16]{7} proved that
			\begin{align*}
				\big|\langle\rho,\sigma_{\xi} \rangle\backslash \mathcal{C}_{j_{1},j_{2},\cdots, j_{u}}^{\sharp}\big|&=\frac{1}{n(q-1)}{\rm gcd}(n,i_{t_{j_{1}}},\cdots,i_{t_{j_{u}}}){\rm gcd}\big(q-1,\frac{n}{{\rm gcd}(n,i_{t_{j_{1}}})},\cdots,\\
				&\quad \frac{n}{{\rm gcd}(n,i_{t_{j_{u}}})}\big)\prod\limits_{\ell=1}^{u}(q^{k_{t_{j_{\ell}}}}-1).
			\end{align*}
			In fact, what \cite[Lemma III.16]{7} gave was not the value of $\big|\langle\rho,\sigma_{\xi} \rangle\backslash \mathcal{C}_{j_{1},j_{2},\cdots, j_{u}}^{\sharp}\big|$ but an upper bound of $\big|\langle\rho,\sigma_{\xi} \rangle\backslash \mathcal{C}_{j_{1},j_{2},\cdots, j_{u}}^{\sharp}\big|$ because of a mistake in calculating the value of $|{\rm Fix}(\sigma_{\xi}^{r})|$. In the proof of \cite[Lemma III.16]{7}, the following two conditions were considered to be equivalent:\\
			\quad 1) $\rho^{z}({\bf c}_{t_{j_{l}}})=\xi^{r}{\bf c}_{t_{j_{l}}}$ for all $l=1,2,\cdots,u$,\\
			\quad 2) $\frac{q-1}{r}\big| \frac{n}{{\rm gcd}(n,i_{t_{j_{l}}})}$ for all $l=1,2,\cdots,u$.\\
			However, condition 2) does not imply condition 1) because condition 1) is also dependent on the choice of $z$. In the proof of Lemma \ref{l3.1}, let $m=1$ and then we can obtain
			\begin{align*}
				\big|\langle\rho,\sigma_{\xi} \rangle\backslash \mathcal{C}_{j_{1},j_{2},\cdots, j_{u}}^{\sharp}\big|&=\frac{1}{n(q-1)}{\rm gcd}\big(n,i_{t_{j_{1}}}(q-1),\cdots,i_{t_{j_{u}}}(q-1),i_{t_{j_{2}}}-i_{t_{j_{1}}},\cdots,i_{t_{j_{u}}}-i_{t_{j_{1}}},\\
				&\quad \cdots,i_{t_{j_{u}}}-i_{t_{j_{u-1}}}\big)\prod\limits_{\ell=1}^{u}(q^{k_{t_{j_{\ell}}}}-1),
			\end{align*}
			which leads to the value of $\big|\langle\rho,\sigma_{\xi} \rangle\backslash \mathcal{C}^{*}\big|$.

			We claim that the upper bound of $\big|\langle\mu_{q},\rho,\sigma_{\xi} \rangle\backslash \mathcal{C}^{*}\big|$ given in Theorem \ref{t3.2} is less than or equal to $\big|\langle\rho,\sigma_{\xi} \rangle\backslash \mathcal{C}^{*}\big|$, and hence the upper bound on the number of non-zero weights of $\mathcal{C}$ given by Theorem \ref{t3.2} is less than or equal to that given by calculating the number of orbits of $\langle\rho,\sigma_{\xi} \rangle$ on $\mathcal{C}^{*}$. To prove this, it is sufficient to prove that $N_{j_{1},j_{2},\cdots, j_{u}}\leq \big|\langle\rho,\sigma_{\xi} \rangle\backslash \mathcal{C}_{j_{1},j_{2},\cdots, j_{u}}^{\sharp}\big|$. Indeed,
			\begin{align*}
				N_{j_{1},j_{2},\cdots, j_{u}}&=\frac{1}{mn(q-1)}\sum_{r=0}^{m-1}{\rm gcd}\big(n,\frac{i_{t_{j_{1}}}II_{t_{j_{1}}}}{{\rm gcd}(I,I_{t_{j_{1}}})},\cdots,\frac{i_{t_{j_{u}}}II_{t_{j_{u}}}}{{\rm gcd}(I,I_{t_{j_{u}}})},\frac{(i_{t_{j_{2}}}-i_{t_{j_{1}}})I_{t_{j_{1}}}I_{t_{j_{2}}}}{{\rm gcd}(I_{t_{j_{1}}},I_{t_{j_{2}}})},\cdots,\\
				&\quad \frac{(i_{t_{j_{u}}}-i_{t_{j_{1}}})I_{t_{j_{1}}}I_{t_{j_{u}}}}{{\rm gcd}(I_{t_{j_{1}}},I_{t_{j_{u}}})},\cdots,\frac{(i_{t_{j_{u}}}-i_{t_{j_{u-1}}})I_{t_{j_{u-1}}}I_{t_{j_{u}}}}{{\rm gcd}(I_{t_{j_{u-1}}},I_{t_{j_{u}}})}\big){\rm gcd}(I,I_{t_{j_{1}}},\cdots,I_{t_{j_{u}}})\\
				&\quad \cdot \prod_{\ell=1}^{u}(q^{{\rm gcd}(k_{t_{j_{\ell}}},r)}-1)\\
				&\leq \frac{1}{mn(q-1)}\sum_{r=0}^{m-1}{\rm gcd}\big(n,i_{t_{j_{1}}}II_{t_{j_{1}}},\cdots,i_{t_{j_{u}}}II_{t_{j_{u}}},(i_{t_{j_{2}}}-i_{t_{j_{1}}})I_{t_{j_{1}}}I_{t_{j_{2}}},\cdots,\\
				&\quad (i_{t_{j_{u}}}-i_{t_{j_{1}}})I_{t_{j_{1}}}I_{t_{j_{u}}},\cdots,(i_{t_{j_{u}}}-i_{t_{j_{u-1}}})I_{t_{j_{u-1}}}I_{t_{j_{u}}}\big)\prod_{\ell=1}^{u}(q^{{\rm gcd}(k_{t_{j_{\ell}}},r)}-1)\\
				&\leq \frac{1}{mn(q-1)}\sum_{r=0}^{m-1}{\rm gcd}\big(n,i_{t_{j_{1}}}I,\cdots,i_{t_{j_{u}}}I,i_{t_{j_{2}}}-i_{t_{j_{1}}},\cdots,i_{t_{j_{u}}}-i_{t_{j_{1}}},\cdots,\\
				&\quad i_{t_{j_{u}}}-i_{t_{j_{u-1}}}\big)I_{t_{j_{1}}}I_{t_{j_{2}}}\cdots I_{t_{j_{u}}}\prod_{\ell=1}^{u}(q^{{\rm gcd}(k_{t_{j_{\ell}}},r)}-1)\\
				&=\frac{1}{mn(q-1)}\sum_{r=0}^{m-1}{\rm gcd}\big(n,i_{t_{j_{1}}}(q-1),\cdots,i_{t_{j_{u}}}(q-1),i_{t_{j_{2}}}-i_{t_{j_{1}}},\cdots,i_{t_{j_{u}}}-i_{t_{j_{1}}},\\
				&\quad \cdots,i_{t_{j_{u}}}-i_{t_{j_{u-1}}}\big)\prod\limits_{\ell=1}^{u}(q^{k_{t_{j_{\ell}}}}-1)\\
				&=\frac{1}{n(q-1)}{\rm gcd}\big(n,i_{t_{j_{1}}}(q-1),\cdots,i_{t_{j_{u}}}(q-1),i_{t_{j_{2}}}-i_{t_{j_{1}}},\cdots,i_{t_{j_{u}}}-i_{t_{j_{1}}},\cdots,\\
				&\quad i_{t_{j_{u}}}-i_{t_{j_{u-1}}}\big)\prod\limits_{\ell=1}^{u}(q^{k_{t_{j_{\ell}}}}-1)\\
				&=\big|\langle\rho,\sigma_{\xi} \rangle\backslash \mathcal{C}_{j_{1},j_{2},\cdots, j_{u}}^{\sharp}\big|,
			\end{align*}
			which gives the desired result.}
	\end{Remark}
	
	The upper bound presented in Theorem \ref{t3.2} seems a little cumbersome. We next study a simple case, where we can provide an explicit formula for the number of orbits of $\langle\mu_{q},\rho,\sigma_{\xi} \rangle$ on $\mathcal{C}^{*}=\mathcal{C}\backslash \{\bf 0\}$.
	
	\begin{Theorem}\label{t3.3}
		Let $\mathcal{C}$ be a cyclic code of length $n$ over $\mathbb{F}_{q}$. Suppose that
		$$\mathcal{C}=\mathcal{R}_{n}\varepsilon_{t_{1}}\bigoplus\mathcal{R}_{n}\varepsilon_{t_{2}},$$
		where $0\leq t_{1},t_{2}\leq s$, and the primitive idempotent $\varepsilon_{t_{\ell}}$ corresponds to the $q$-cyclotomic coset $\{i_{t_{\ell}},i_{t_{\ell}}q,\cdots,i_{t_{\ell}}q^{k_{t_{\ell}}-1}\}$ for $\ell=1,2$. Suppose $k_{t_{1}}|k_{t_{2}}$, then the number of orbits of $\langle\mu_{q},\rho,\sigma_{\xi} \rangle$ on $\mathcal{C}^{*}=\mathcal{C}\backslash \{\bf 0\}$ is equal to
		$$s_{t_{1}}+s_{t_{2}}+s_{t_{1},t_{2}},$$
		where
		$$s_{t_{\ell}}=\frac{1}{k_{t_{\ell}}}\sum_{r\mid k_{t_{\ell}}}\varphi(\frac{k_{t_{\ell}}}{r}){\rm gcd}\big(q^{r}-1,\frac{q^{k_{t_{\ell}}}-1}{q-1},\frac{i_{t_{\ell}}(q^{k_{t_{\ell}}}-1)}{n}\big)~~{\rm for}~\ell=1,2,$$
		and
		\begin{align*}
			s_{t_{1},t_{2}}&=\frac{1}{m}\sum_{r=0}^{m-1}{\rm gcd}\Big((q^{{\rm gcd}(k_{t_{1}},r)}-1){\rm gcd}\big(q^{{\rm gcd}(k_{t_{2}},r)}-1,\frac{(q^{k_{t_{1}}}-1)(q^{{\rm gcd}(k_{t_{2}},r)}-1)}{(q-1)(q^{{\rm gcd}(k_{t_{1}},r)}-1)},\\
			&\quad~\frac{i_{t_{1}}(q^{k_{t_{1}}}-1)(q^{{\rm gcd}(k_{t_{2}},r)}-1)}{n(q^{{\rm gcd}(k_{t_{1}},r)}-1)},\frac{i_{t_{2}}(q^{k_{t_{2}}}-1)}{n}\big), \frac{(i_{t_{2}}-i_{t_{1}})(q^{k_{t_{1}}}-1)(q^{k_{t_{2}}}-1)}{n(q-1)}\Big).
		\end{align*}
		In particular, the number of non-zero weights of $\mathcal{C}$ is less than or equal to the number of orbits of $\langle\mu_{q},\rho,\sigma_{\xi} \rangle$ on $\mathcal{C}^{*}$, with equality if and only if for any two codewords ${\bf c}_{1},{\bf c}_{2}\in \mathcal{C}^{*}$ with the same weight, there exist integers $j_{1}$, $j_{2}$ and $ j_{3}$ such that $\mu_{q}^{j_{1}}\rho^{j_{2}}(\xi^{j_{3}}{\bf c}_{1})={\bf c}_{2}$.
	\end{Theorem}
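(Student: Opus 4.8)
The plan is to decompose $\mathcal{C}^{*}$ into $\langle\mu_{q},\rho,\sigma_{\xi}\rangle$-invariant pieces, count orbits on each piece, and sum. Since $\mathcal{C}=\mathcal{R}_{n}\varepsilon_{t_{1}}\bigoplus\mathcal{R}_{n}\varepsilon_{t_{2}}$ and each summand is invariant under $\langle\mu_{q},\rho,\sigma_{\xi}\rangle$, we get the disjoint decomposition
$$\mathcal{C}^{*}=\big(\mathcal{R}_{n}\varepsilon_{t_{1}}\backslash\{{\bf 0}\}\big)\cup\big(\mathcal{R}_{n}\varepsilon_{t_{2}}\backslash\{{\bf 0}\}\big)\cup\mathcal{C}_{1,2}^{\sharp},\qquad \mathcal{C}_{1,2}^{\sharp}:=\mathcal{R}_{n}\varepsilon_{t_{1}}\backslash\{{\bf 0}\}\bigoplus\mathcal{R}_{n}\varepsilon_{t_{2}}\backslash\{{\bf 0}\},$$
in which each part is a union of whole orbits, so the orbit count on $\mathcal{C}^{*}$ is the sum of the orbit counts on the three parts. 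The first two parts consist of the nonzero vectors of the irreducible cyclic codes $\mathcal{R}_{n}\varepsilon_{t_{1}}$ and $\mathcal{R}_{n}\varepsilon_{t_{2}}$, so Theorem \ref{t3.1} yields exactly $s_{t_{1}}$ and $s_{t_{2}}$ orbits, respectively. It therefore remains to show that the number of $\langle\mu_{q},\rho,\sigma_{\xi}\rangle$-orbits on $\mathcal{C}_{1,2}^{\sharp}$ equals $s_{t_{1},t_{2}}$.

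For the cross term I would rerun the Burnside computation in the proof of Lemma \ref{l3.1} with $u=2$, applied to the two cosets of $\varepsilon_{t_{1}}$ and $\varepsilon_{t_{2}}$ (recall $|\langle\mu_{q},\rho,\sigma_{\xi}\rangle|=mn(q-1)$ by Lemma \ref{l2.2}). The crucial point is that the hypothesis $k_{t_{1}}\mid k_{t_{2}}$ turns the one inequality $|S(r_{1})|\le{\rm gcd}(d_{1},d_{2})$ of that proof into an equality, so the whole chain of estimates there becomes exact. Indeed, $k_{t_{1}}\mid k_{t_{2}}$ gives $\mathbb{F}_{q^{k_{t_{1}}}}\subseteq\mathbb{F}_{q^{k_{t_{2}}}}$, hence the cyclic subgroups $\langle\xi\rangle$, $\langle\theta_{t_{1}}^{q^{r_{1}}-1}\rangle$, $\langle\theta_{t_{2}}^{q^{r_{1}}-1}\rangle$ and the relevant powers of $\zeta$ all lie inside the single cyclic group $\mathbb{F}_{q^{k_{t_{2}}}}^{*}$, and $\langle\theta_{t_{1}}^{q^{r_{1}}-1}\rangle\subseteq\langle\theta_{t_{2}}^{q^{r_{1}}-1}\rangle$ (the $(q^{r_{1}}-1)$-th powers of $\mathbb{F}_{q^{k_{t_{1}}}}^{*}$ lie among those of $\mathbb{F}_{q^{k_{t_{2}}}}^{*}$), whence $I_{t_{1}}\mid I_{t_{2}}$ and ${\rm gcd}(q-1,I_{t_{1}})\mid{\rm gcd}(q-1,I_{t_{2}})$. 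Working in this one cyclic group I would reverse the opening implication of the chain in Lemma \ref{l3.1}: given $r_{2}$ satisfying all the displayed divisibility conditions, the set of $r_{3}\in\mathbb{Z}_{q-1}$ with $\zeta^{-i_{t_{1}}q^{r_{1}}r_{2}}\in\xi^{r_{3}}\langle\theta_{t_{1}}^{q^{r_{1}}-1}\rangle$ and the set with $\zeta^{-i_{t_{2}}q^{r_{1}}r_{2}}\in\xi^{r_{3}}\langle\theta_{t_{2}}^{q^{r_{1}}-1}\rangle$ are cosets of the nested subgroups of $\mathbb{Z}_{q-1}$ generated by $(q-1)/{\rm gcd}(q-1,I_{t_{1}})$ and $(q-1)/{\rm gcd}(q-1,I_{t_{2}})$, and the difference condition $\zeta^{-(i_{t_{2}}-i_{t_{1}})q^{r_{1}}r_{2}}\in\langle\theta_{t_{1}}^{q^{r_{1}}-1}\rangle\langle\theta_{t_{2}}^{q^{r_{1}}-1}\rangle=\langle\theta_{t_{2}}^{q^{r_{1}}-1}\rangle$ forces these two cosets to intersect; so a common $r_{3}$ exists and $r_{2}\in S(r_{1})$. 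This gives $|S(r_{1})|={\rm gcd}(d_{1},d_{2})$ exactly, together with $|R(r_{1},r_{2})|={\rm gcd}(I,I_{t_{1}},I_{t_{2}})$ as in Lemma \ref{l3.1} and the fixed-codeword count $(q^{{\rm gcd}(k_{t_{1}},r_{1})}-1)(q^{{\rm gcd}(k_{t_{2}},r_{1})}-1)$ per admissible pair.

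Substituting these exact values into the orbit-counting formula (\ref{e2.1}) gives
$$\big|\langle\mu_{q},\rho,\sigma_{\xi}\rangle\backslash\mathcal{C}_{1,2}^{\sharp}\big|=\frac{1}{mn(q-1)}\sum_{r=0}^{m-1}{\rm gcd}(d_{1},d_{2})\,{\rm gcd}(I,I_{t_{1}},I_{t_{2}})\,(q^{{\rm gcd}(k_{t_{1}},r)}-1)(q^{{\rm gcd}(k_{t_{2}},r)}-1),$$
with $d_{1},d_{2},I,I_{t_{1}},I_{t_{2}}$ as in the proof of Lemma \ref{l3.1} (and $r$ in place of $r_{1}$). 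I would then carry out the elementary gcd bookkeeping --- using $I_{t_{\ell}}(q^{{\rm gcd}(k_{t_{\ell}},r)}-1)=q^{k_{t_{\ell}}}-1$, ${\rm gcd}(I_{t_{1}},I_{t_{2}})=I_{t_{1}}$, ${\rm gcd}(I,I_{t_{1}},I_{t_{2}})={\rm gcd}(I,I_{t_{1}})$, and the identity ${\rm gcd}(a,b)\,c={\rm gcd}(ac,bc)$ --- to recast the summand in the nested-gcd shape displayed in the statement, so that the sum equals $s_{t_{1},t_{2}}$. Adding $s_{t_{1}}+s_{t_{2}}+s_{t_{1},t_{2}}$ yields the orbit count; the statements about the number of non-zero weights and the equality criterion then follow immediately from Lemma \ref{l2.1}. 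The step I expect to be the main obstacle is the exactness $|S(r_{1})|={\rm gcd}(d_{1},d_{2})$: the proof of Lemma \ref{l3.1} supplies only one inequality, and reversing it genuinely requires the nested-cyclic-subgroup structure coming from $k_{t_{1}}\mid k_{t_{2}}$ to guarantee that the two competing coset conditions on $r_{3}$ can be met simultaneously; the ensuing gcd simplification into the precise nested form, while routine, also takes some care.
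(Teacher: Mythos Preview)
Your proposal is correct and follows essentially the same approach as the paper: decompose $\mathcal{C}^{*}$ into the two irreducible pieces and the cross term, apply Theorem~\ref{t3.1} to the former, and for $\mathcal{C}_{1,2}^{\sharp}$ rerun the Burnside count from Lemma~\ref{l3.1} while using $k_{t_{1}}\mid k_{t_{2}}$ to upgrade the inequality on $|S(r_{1})|$ to an equality via the nesting $\langle\theta_{t_{1}}^{q^{r_{1}}-1}\rangle\subseteq\langle\theta_{t_{2}}^{q^{r_{1}}-1}\rangle$. The paper's argument is marginally more streamlined in that it characterizes $S(r_{1})$ directly by the two conditions $\zeta^{-i_{t_{1}}q^{r_{1}}r_{2}}\in\langle\xi\rangle\langle\theta_{t_{1}}^{q^{r_{1}}-1}\rangle$ and $\zeta^{-(i_{t_{2}}-i_{t_{1}})q^{r_{1}}r_{2}}\in\langle\theta_{t_{2}}^{q^{r_{1}}-1}\rangle$ (the condition on $i_{t_{2}}$ alone being redundant), rather than reversing the full three-condition implication; but this is a cosmetic difference and your coset-intersection argument for the existence of a common $r_{3}$ is exactly what is needed.
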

	
	\begin{proof}
		Let $s_{t_{1}}$, $s_{t_{2}}$ and $s_{t_{1},t_{2}}$ be the number of orbits of $\langle\mu_{q},\rho,\sigma_{\xi} \rangle$ on $\mathcal{R}_{n}\varepsilon_{t_{1}}\backslash \{\bf 0\}$, $\mathcal{R}_{n}\varepsilon_{t_{2}}\backslash \{\bf 0\}$ and $\mathcal{R}_{n}\varepsilon_{t_{1}}\backslash \{{\bf 0}\}\bigoplus\mathcal{R}_{n}\varepsilon_{t_{2}}\backslash \{{\bf 0}\}$, respectively. Then
		$$\big|\langle\mu_{q},\rho,\sigma_{\xi} \rangle\backslash \mathcal{C}^{*}\big|=s_{t_{1}}+s_{t_{2}}+s_{t_{1},t_{2}}.$$
		It follows from Theorem \ref{t3.1} that
		$$s_{t_{\ell}}=\frac{1}{k_{t_{\ell}}}\sum_{r\mid k_{t_{\ell}}}\varphi(\frac{k_{t_{\ell}}}{r}){\rm gcd}\big(q^{r}-1,\frac{q^{k_{t_{\ell}}}-1}{q-1},\frac{i_{t}(q^{k_{t_{\ell}}}-1)}{n}\big)~~{\rm for}~\ell=1,2.$$
		According to the proof of Theorem \ref{t3.2}, we have
		$$s_{t_{1},t_{2}}=\frac{1}{mn(q-1)}\sum_{r_{1}=0}^{m-1}\sum_{r_{2}\in S(r_{1})}\sum_{r_{3}\in R(r_{1},r_{2})}\prod_{\ell=1}^{2}(q^{{\rm gcd}(k_{t_{\ell}},r_{1})}-1),$$
		where for $0\leq r_{1}\leq m-1$,
		$$S(r_{1})=\Big\{0\leq r_{2}\leq n-1~\Big|~\exists~ 0\leq r_{3}\leq q-2 ~s.t.~ \zeta^{-i_{t_{\ell}}q^{r_{1}}r_{2}}\in \xi^{r_{3}}\langle \theta_{t_{\ell}}^{q^{r_{1}}-1}  \rangle ~{\rm for~all~}1\leq \ell \leq2\Big\},$$
		and for $r_{2}\in S(r_{1})$,
		$$R(r_{1},r_{2})=\Big\{0\leq r_{3}\leq q-2~\Big|~\zeta^{-i_{t_{\ell}}q^{r_{1}}r_{2}}\in \xi^{r_{3}}\langle \theta_{t_{\ell}}^{q^{r_{1}}-1}  \rangle ~{\rm for~all~}1\leq \ell \leq2\Big\}.$$
		
		Since $k_{t_{1}}|k_{t_{2}}$, we have $(q^{k_{t_{1}}}-1)|(q^{k_{t_{2}}}-1)$, then $\mathbb{F}_{q^{k_{t_{1}}}}^{*}\leq \mathbb{F}_{q^{k_{t_{2}}}}^{*}$, that is, $\mathbb{F}_{q^{k_{t_{1}}}}^{*}$ is a subgroup of $\mathbb{F}_{q^{k_{t_{2}}}}^{*}$. Recall that $\mathbb{F}_{q^{k_{t_{1}}}}^{*}=\langle  \theta_{t_{1}} \rangle$ and $\mathbb{F}_{q^{k_{t_{2}}}}^{*}=\langle  \theta_{t_{2}} \rangle$, then $\theta_{t_{1}}=\theta_{t_{2}}^{i}$ for some nonnegative integer $i$. Thus $\langle \theta_{t_{1}}^{q^{r_{1}}-1}  \rangle=\langle \theta_{t_{2}}^{i(q^{r_{1}}-1)}  \rangle\leq \langle \theta_{t_{2}}^{q^{r_{1}}-1}  \rangle$, and so $|\langle \theta_{t_{1}}^{q^{r_{1}}-1}  \rangle|\big||\langle \theta_{t_{2}}^{q^{r_{1}}-1}  \rangle|$. Then we obtain
		$$|R(r_{1},r_{2})|={\rm gcd}\big(q-1,|\langle \theta_{t_{1}}^{q^{r_{1}}-1}\rangle|,|\langle \theta_{t_{2}}^{q^{r_{1}}-1}\rangle|\big)={\rm gcd}\big(q-1,|\langle \theta_{t_{1}}^{q^{r_{1}}-1}\rangle|\big).$$
		Next we calculate the cardinality of $S(r_{1})$.
		
		We claim that
		$$r_{2}\in S(r_{1})~\Leftrightarrow~\zeta^{-i_{t_{1}}q^{r_{1}}r_{2}}\in \langle\xi\rangle\langle \theta_{t_{1}}^{q^{r_{1}}-1}  \rangle ~{\rm and}~\zeta^{-(i_{t_{2}}-i_{t_{1}})q^{r_{1}}r_{2}}\in \langle \theta_{t_{2}}^{q^{r_{1}}-1}  \rangle.$$
		Suppose $r_{2}\in S(r_{1})$, that is, there exists some $0\leq r_{3}\leq q-2$ such that $\zeta^{-i_{t_{1}}q^{r_{1}}r_{2}}\in \xi^{r_{3}}\langle \theta_{t_{1}}^{q^{r_{1}}-1}  \rangle$ and $\zeta^{-i_{t_{2}}q^{r_{1}}r_{2}}\in \xi^{r_{3}}\langle \theta_{t_{2}}^{q^{r_{1}}-1}  \rangle$, then $\zeta^{-i_{t_{1}}q^{r_{1}}r_{2}}\in \langle\xi\rangle\langle \theta_{t_{1}}^{q^{r_{1}}-1}  \rangle$ and
		$$\zeta^{-(i_{t_{2}}-i_{t_{1}})q^{r_{1}}r_{2}}\in \langle \theta_{t_{1}}^{q^{r_{1}}-1}  \rangle\langle \theta_{t_{2}}^{q^{r_{1}}-1}  \rangle= \langle \theta_{t_{2}}^{q^{r_{1}}-1}  \rangle.$$
		For the converse, if $\zeta^{-i_{t_{1}}q^{r_{1}}r_{2}}\in \langle\xi\rangle\langle \theta_{t_{1}}^{q^{r_{1}}-1}  \rangle ~{\rm and}~\zeta^{-(i_{t_{2}}-i_{t_{1}})q^{r_{1}}r_{2}}\in \langle \theta_{t_{2}}^{q^{r_{1}}-1}  \rangle$, then $\zeta^{-i_{t_{1}}q^{r_{1}}r_{2}}\in \xi^{r_{3}}\langle \theta_{t_{1}}^{q^{r_{1}}-1}  \rangle$ for some $0\leq r_{3}\leq q-2$, and
		$$\zeta^{-i_{t_{2}}q^{r_{1}}r_{2}}\in \zeta^{-i_{t_{1}}q^{r_{1}}r_{2}}\langle \theta_{t_{2}}^{q^{r_{1}}-1}\rangle\subseteq \xi^{r_{3}}\langle \theta_{t_{1}}^{q^{r_{1}}-1}  \rangle\langle \theta_{t_{2}}^{q^{r_{1}}-1}\rangle=\xi^{r_{3}}\langle \theta_{t_{2}}^{q^{r_{1}}-1}\rangle,$$
		and so $r_{2}\in S(r_{1})$. Consequently, we have
		\begin{align*}
			r_{2}\in S(r_{1})~&\Leftrightarrow~\frac{n}{{\rm gcd}(n,i_{t_{1}}r_{2})}\Big|\frac{(q-1)|\langle \theta_{t_{1}}^{q^{r_{1}}-1}\rangle|}{{\rm gcd}\big(q-1,|\langle \theta_{t_{1}}^{q^{r_{1}}-1}\rangle|\big)}~{\rm and}~\frac{n}{{\rm gcd}\big(n,(i_{t_{2}}-i_{t_{1}})r_{2}\big)}\Big||\langle \theta_{t_{2}}^{q^{r_{1}}-1}\rangle|\\
			&\Leftrightarrow~\frac{n}{{\rm gcd}\big(n,\frac{i_{t_{1}}(q-1)|\langle \theta_{t_{1}}^{q^{r_{1}}-1}\rangle|}{{\rm gcd}(q-1,|\langle \theta_{t_{1}}^{q^{r_{1}}-1}\rangle|)}\big)}\Biggl|r_{2}~{\rm and}~\frac{n}{{\rm gcd}\big(n,(i_{t_{2}}-i_{t_{1}})|\langle \theta_{t_{2}}^{q^{r_{1}}-1}\rangle|\big)}\Biggl|r_{2}\\
			&\Leftrightarrow~\frac{n}{{\rm gcd}\big(n,\frac{i_{t_{1}}(q-1)|\langle \theta_{t_{1}}^{q^{r_{1}}-1}\rangle|}{{\rm gcd}(q-1,|\langle \theta_{t_{1}}^{q^{r_{1}}-1}\rangle|)},(i_{t_{2}}-i_{t_{1}})|\langle \theta_{t_{2}}^{q^{r_{1}}-1}\rangle|\big)}\Biggl|r_{2}.
		\end{align*}
		Hence
		$$|S(r_{1})|={\rm gcd}\big(n,\frac{i_{t_{1}}(q-1)|\langle \theta_{t_{1}}^{q^{r_{1}}-1}\rangle|}{{\rm gcd}(q-1,|\langle \theta_{t_{1}}^{q^{r_{1}}-1}\rangle|)},(i_{t_{2}}-i_{t_{1}})|\langle \theta_{t_{2}}^{q^{r_{1}}-1}\rangle|\big).$$
		
		We then conclude that
		\begin{align*}
			&s_{t_{1},t_{2}}\\
			=&\frac{1}{mn(q-1)}\sum_{r_{1}=0}^{m-1}|S(r_{1})||R(r_{1},r_{2})|\prod_{\ell=1}^{2}(q^{{\rm gcd}(k_{t_{\ell}},r_{1})}-1)\\
			=&\frac{1}{mn(q-1)}\sum_{r_{1}=0}^{m-1}{\rm gcd}\big(n,\frac{i_{t_{1}}(q-1)|\langle \theta_{t_{1}}^{q^{r_{1}}-1}\rangle|}{{\rm gcd}(q-1,|\langle \theta_{t_{1}}^{q^{r_{1}}-1}\rangle|)},(i_{t_{2}}-i_{t_{1}})|\langle \theta_{t_{2}}^{q^{r_{1}}-1}\rangle|\big){\rm gcd}\big(q-1,|\langle \theta_{t_{1}}^{q^{r_{1}}-1}\rangle|\big)\\
			&\cdot \prod_{\ell=1}^{2}(q^{{\rm gcd}(k_{t_{\ell}},r_{1})}-1)\\
			=&\frac{1}{mn(q-1)}\sum_{r_{1}=0}^{m-1}{\rm gcd}\big(n(q-1),n|\langle \theta_{t_{1}}^{q^{r_{1}}-1}\rangle|,i_{t_{1}}(q-1)|\langle \theta_{t_{1}}^{q^{r_{1}}-1}\rangle|,(i_{t_{2}}-i_{t_{1}})(q-1)|\langle \theta_{t_{2}}^{q^{r_{1}}-1}\rangle|,\\
			&(i_{t_{2}}-i_{t_{1}})|\langle \theta_{t_{1}}^{q^{r_{1}}-1}\rangle||\langle \theta_{t_{2}}^{q^{r_{1}}-1}\rangle|\big)\prod_{\ell=1}^{2}(q^{{\rm gcd}(k_{t_{\ell}},r_{1})}-1)\\
			=&\frac{1}{mn(q-1)}\sum_{r_{1}=0}^{m-1}{\rm gcd}\big(n(q-1),n|\langle \theta_{t_{1}}^{q^{r_{1}}-1}\rangle|,i_{t_{1}}(q-1)|\langle \theta_{t_{1}}^{q^{r_{1}}-1}\rangle|,i_{t_{2}}(q-1)|\langle \theta_{t_{2}}^{q^{r_{1}}-1}\rangle|,\\
			&(i_{t_{2}}-i_{t_{1}})|\langle \theta_{t_{1}}^{q^{r_{1}}-1}\rangle||\langle \theta_{t_{2}}^{q^{r_{1}}-1}\rangle|\big)\prod_{\ell=1}^{2}(q^{{\rm gcd}(k_{t_{\ell}},r_{1})}-1)\\
			=&\frac{1}{m}\sum_{r_{1}=0}^{m-1}{\rm gcd}\big(\prod_{\ell=1}^{2}(q^{{\rm gcd}(k_{t_{\ell}},r_{1})}-1),\frac{(q^{k_{t_{1}}}-1)(q^{{\rm gcd}(k_{t_{2}},r_{1})}-1)}{q-1},\frac{i_{t_{1}}(q^{k_{t_{1}}}-1)(q^{{\rm gcd}(k_{t_{2}},r_{1})}-1)}{n},\\
			&\frac{i_{t_{2}}(q^{{\rm gcd}(k_{t_{1}},r_{1})}-1)(q^{k_{t_{2}}}-1)}{n}, \frac{(i_{t_{2}}-i_{t_{1}})(q^{k_{t_{1}}}-1)(q^{k_{t_{2}}}-1)}{n(q-1)}\big)\\
			=&\frac{1}{m}\sum_{r_{1}=0}^{m-1}{\rm gcd}\Big((q^{{\rm gcd}(k_{t_{1}},r_{1})}-1){\rm gcd}\big(q^{{\rm gcd}(k_{t_{2}},r_{1})}-1,\frac{(q^{k_{t_{1}}}-1)(q^{{\rm gcd}(k_{t_{2}},r_{1})}-1)}{(q-1)(q^{{\rm gcd}(k_{t_{1}},r_{1})}-1)},\\
			&\frac{i_{t_{1}}(q^{k_{t_{1}}}-1)(q^{{\rm gcd}(k_{t_{2}},r_{1})}-1)}{n(q^{{\rm gcd}(k_{t_{1}},r_{1})}-1)},\frac{i_{t_{2}}(q^{k_{t_{2}}}-1)}{n}\big), \frac{(i_{t_{2}}-i_{t_{1}})(q^{k_{t_{1}}}-1)(q^{k_{t_{2}}}-1)}{n(q-1)}\Big).
		\end{align*}
		We are done.
	\end{proof}

	As a direct application of Theorem \ref{t3.3}, we immediately obtain the following two corollaries.
	
	\begin{Corollary}\label{c3.3}
		Let $\mathcal{C}$ be a cyclic code of length $n$ over $\mathbb{F}_{q}$. Suppose that
		$$\mathcal{C}=\mathcal{R}_{n}\varepsilon_{t_{1}}\bigoplus\mathcal{R}_{n}\varepsilon_{t_{2}},$$
		where $0\leq t_{1},t_{2}\leq s$, and the primitive idempotent $\varepsilon_{t_{\ell}}$ corresponds to the $q$-cyclotomic coset $\{i_{t_{\ell}},i_{t_{\ell}}q,\cdots,i_{t_{\ell}}q^{k_{t_{\ell}}-1}\}$ for $\ell=1,2$. Suppose $k_{t_{1}}=1$ and $k_{t_{2}}=k$, then the number of orbits of $\langle\mu_{q},\rho,\sigma_{\xi} \rangle$ on $\mathcal{C}^{*}=\mathcal{C}\backslash \{\bf 0\}$ is equal to
		$$1+\frac{1}{k}\sum_{r\mid k}\varphi(\frac{k}{r})\Big( {\rm gcd}\big(q^{r}-1,\frac{q^{k}-1}{q-1},\frac{i_{t_{2}}(q^{k}-1)}{n}\big)+{\rm gcd}\big(q^{r}-1,\frac{(i_{t_{2}}-i_{t_{1}})(q^{k}-1)}{n}\big)\Big) .$$
	\end{Corollary}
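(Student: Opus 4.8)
The plan is to apply Theorem~\ref{t3.3} with $k_{t_{1}}=1$ and $k_{t_{2}}=k$; since $1\mid k$, the hypothesis $k_{t_{1}}\mid k_{t_{2}}$ of Theorem~\ref{t3.3} is satisfied, so the number of orbits of $\langle\mu_{q},\rho,\sigma_{\xi}\rangle$ on $\mathcal{C}^{*}$ equals $s_{t_{1}}+s_{t_{2}}+s_{t_{1},t_{2}}$. First I would dispose of the two easy summands. For $s_{t_{1}}$, the formula in Theorem~\ref{t3.3} gives, using $\frac{q^{k_{t_{1}}}-1}{q-1}=1$, that $s_{t_{1}}={\rm gcd}\big(q-1,1,\frac{i_{t_{1}}(q-1)}{n}\big)=1$. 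For $s_{t_{2}}$, since $k_{t_{2}}=k$ the same formula reads $s_{t_{2}}=\frac{1}{k}\sum_{r\mid k}\varphi(\frac{k}{r}){\rm gcd}\big(q^{r}-1,\frac{q^{k}-1}{q-1},\frac{i_{t_{2}}(q^{k}-1)}{n}\big)$, which is exactly the first of the two ${\rm gcd}$-terms appearing inside the claimed sum. Hence it remains only to prove
\[
s_{t_{1},t_{2}}=\frac{1}{k}\sum_{r\mid k}\varphi\Big(\frac{k}{r}\Big){\rm gcd}\Big(q^{r}-1,\frac{(i_{t_{2}}-i_{t_{1}})(q^{k}-1)}{n}\Big).
\]

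To establish this, I would substitute $k_{t_{1}}=1$, $k_{t_{2}}=k$ into the expression for $s_{t_{1},t_{2}}$ in Theorem~\ref{t3.3}. Since ${\rm gcd}(k_{t_{1}},r)=1$, we have $q^{{\rm gcd}(k_{t_{1}},r)}-1=q-1=q^{k_{t_{1}}}-1$, and after the obvious cancellations the $r$-th summand becomes
\[
{\rm gcd}\Big((q-1)\,{\rm gcd}\big(q^{{\rm gcd}(k,r)}-1,\tfrac{q^{{\rm gcd}(k,r)}-1}{q-1},\tfrac{i_{t_{1}}(q^{{\rm gcd}(k,r)}-1)}{n},\tfrac{i_{t_{2}}(q^{k}-1)}{n}\big),\ \tfrac{(i_{t_{2}}-i_{t_{1}})(q^{k}-1)}{n}\Big).
\]
Two reductions apply to the inner ${\rm gcd}$. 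First, $\frac{q^{{\rm gcd}(k,r)}-1}{q-1}$ divides $q^{{\rm gcd}(k,r)}-1$, so the term $q^{{\rm gcd}(k,r)}-1$ is redundant there; distributing the outer factor $q-1$ over the resulting ${\rm gcd}$ replaces $(q-1)\cdot{\rm gcd}(\cdots)$ by ${\rm gcd}\big(q^{{\rm gcd}(k,r)}-1,\frac{i_{t_{1}}(q-1)(q^{{\rm gcd}(k,r)}-1)}{n},\frac{i_{t_{2}}(q-1)(q^{k}-1)}{n}\big)$. Second, because the $q$-cyclotomic coset of $i_{t_{1}}$ has size $k_{t_{1}}=1$, we have $n\mid i_{t_{1}}(q-1)$, so $\frac{i_{t_{1}}(q-1)(q^{{\rm gcd}(k,r)}-1)}{n}$ is an integer multiple of $q^{{\rm gcd}(k,r)}-1$ and drops out as well.

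It remains to check that $\frac{i_{t_{2}}(q-1)(q^{k}-1)}{n}$ is redundant too, i.e.\ that $g:={\rm gcd}\big(q^{{\rm gcd}(k,r)}-1,\frac{(i_{t_{2}}-i_{t_{1}})(q^{k}-1)}{n}\big)$ divides it. Putting $c:=\frac{i_{t_{1}}(q-1)}{n}\in\mathbb{Z}$, one has $\frac{i_{t_{2}}(q^{k}-1)}{n}=\frac{(i_{t_{2}}-i_{t_{1}})(q^{k}-1)}{n}+c\cdot\frac{q^{k}-1}{q-1}$, so that $\frac{i_{t_{2}}(q-1)(q^{k}-1)}{n}=(q-1)\frac{(i_{t_{2}}-i_{t_{1}})(q^{k}-1)}{n}+c(q^{k}-1)$; here $g$ divides the first summand by definition and divides the second because $g\mid q^{{\rm gcd}(k,r)}-1\mid q^{k}-1$, hence $g$ divides the sum. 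After this deletion one obtains $s_{t_{1},t_{2}}=\frac{1}{m}\sum_{r=0}^{m-1}{\rm gcd}\big(q^{{\rm gcd}(k,r)}-1,\frac{(i_{t_{2}}-i_{t_{1}})(q^{k}-1)}{n}\big)$, and the standard reindexing $\frac{1}{m}\sum_{r=0}^{m-1}f({\rm gcd}(k,r))=\frac{1}{k}\sum_{r=0}^{k-1}f({\rm gcd}(k,r))=\frac{1}{k}\sum_{r\mid k}\varphi(\frac{k}{r})f(r)$, valid because $k\mid m$, yields the displayed formula for $s_{t_{1},t_{2}}$. Adding $s_{t_{1}}+s_{t_{2}}+s_{t_{1},t_{2}}$ then gives the stated count. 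The only step that is not routine bookkeeping is this last redundancy argument, and it is exactly where the hypothesis $k_{t_{1}}=1$ enters — through the integrality of $\frac{i_{t_{1}}(q-1)}{n}$ — so I would take care to present it cleanly.
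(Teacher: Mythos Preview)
Your proposal is correct and follows essentially the same route as the paper's proof: apply Theorem~\ref{t3.3}, evaluate $s_{t_{1}}=1$ and $s_{t_{2}}$ directly, then simplify the formula for $s_{t_{1},t_{2}}$ by distributing the factor $q-1$ into the inner ${\rm gcd}$, using $n\mid i_{t_{1}}(q-1)$ (from $k_{t_{1}}=1$) to discard the $i_{t_{1}}$-term, and eliminating the remaining $i_{t_{2}}$-term via the identity $\frac{i_{t_{2}}(q-1)(q^{k}-1)}{n}=(q-1)\frac{(i_{t_{2}}-i_{t_{1}})(q^{k}-1)}{n}+\frac{i_{t_{1}}(q-1)}{n}(q^{k}-1)$ before reindexing the sum. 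The only cosmetic difference is that the paper first replaces $\frac{i_{t_{2}}(q-1)(q^{k}-1)}{n}$ by $\frac{(i_{t_{2}}-i_{t_{1}})(q-1)(q^{k}-1)}{n}$ (subtracting a multiple of the $i_{t_{1}}$-term inside the ${\rm gcd}$) and then observes this is a multiple of $\frac{(i_{t_{2}}-i_{t_{1}})(q^{k}-1)}{n}$, whereas you argue the divisibility directly; the content is identical.
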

	
	\begin{proof}
		According to Theorem \ref{t3.3} and its proof, we see that
		$$\big|\langle\mu_{q},\rho,\sigma_{\xi} \rangle\backslash \mathcal{C}^{*}\big|=s_{t_{1}}+s_{t_{2}}+s_{t_{1},t_{2}},$$
		where
		\begin{align*}
			&s_{t_{1}}={\rm gcd}\big(q-1,1,\frac{i_{t_{1}}(q-1)}{n}\big)=1,\\
			&s_{t_{2}}=\frac{1}{k}\sum_{r\mid k}\varphi(\frac{k}{r}){\rm gcd}\big(q^{r}-1,\frac{q^{k}-1}{q-1},\frac{i_{t_{2}}(q^{k}-1)}{n}\big),
		\end{align*}
		and
		\begin{align*}
			&s_{t_{1},t_{2}}\\
			=&\frac{1}{m}\sum_{r_{1}=0}^{m-1}{\rm gcd}\Big((q-1){\rm gcd}\big(q^{{\rm gcd}(k,r_{1})}-1,\frac{q^{{\rm gcd}(k,r_{1})}-1}{q-1},\frac{i_{t_{1}}(q^{{\rm gcd}(k,r_{1})}-1)}{n},\frac{i_{t_{2}}(q^{k}-1)}{n}\big),\\
			&\frac{(i_{t_{2}}-i_{t_{1}})(q^{k}-1)}{n}\Big)\\
			=&\frac{1}{m}\sum_{r_{1}=0}^{m-1}{\rm gcd}\big(q^{{\rm gcd}(k,r_{1})}-1,\frac{i_{t_{1}}(q-1)(q^{{\rm gcd}(k,r_{1})}-1)}{n},\frac{i_{t_{2}}(q-1)(q^{k}-1)}{n},\frac{(i_{t_{2}}-i_{t_{1}})(q^{k}-1)}{n}\big)\\
			=&\frac{1}{m}\sum_{r_{1}=0}^{m-1}{\rm gcd}\big(q^{{\rm gcd}(k,r_{1})}-1,\frac{i_{t_{1}}(q-1)(q^{{\rm gcd}(k,r_{1})}-1)}{n},\frac{(i_{t_{2}}-i_{t_{1}})(q-1)(q^{k}-1)}{n},\\
			&\frac{(i_{t_{2}}-i_{t_{1}})(q^{k}-1)}{n}\big)\\
			=&\frac{1}{m}\sum_{r_{1}=0}^{m-1}{\rm gcd}\big(q^{{\rm gcd}(k,r_{1})}-1,\frac{(i_{t_{2}}-i_{t_{1}})(q^{k}-1)}{n}\big)\\
			=&\frac{1}{k}\sum_{r|k}\varphi(\frac{k}{r}){\rm gcd}\big(q^{r}-1,\frac{(i_{t_{2}}-i_{t_{1}})(q^{k}-1)}{n}\big),
		\end{align*}
		which gives the desired result.
	\end{proof}
	
	We present an example to illustrate that the upper bound in Corollary \ref{c3.3} improves the upper bound in \cite[Corollary III.18]{7}.
	
	\begin{Example}\label{e3.4}{\rm 
			Take $q=2$ and $n=15$. All the distinct $2$-cyclotomic cosets modulo $15$ are given by
			$$\Gamma_{0}=\{0\},~\Gamma_{1}=\{1,2,4,8\},~\Gamma_{2}=\{3,6,12,9\},$$
			$$\Gamma_{3}=\{5,10\},~\Gamma_{4}=\{7,14,13,11\}.$$
			Let $\ell$ be the number of non-zero weights of the cyclic code $\mathcal{C}=\mathcal{R}_{n}\varepsilon_{0}\bigoplus\mathcal{R}_{n}\varepsilon_{2}$. By \cite[Corollary III.18]{7}, we have
			\begin{align*}
				\ell&\leq \frac{{\rm gcd}(15,0,3)(2-1)(2^{4}-1)}{15(2-1)}{\rm gcd}\big(2-1,\frac{15}{{\rm gcd}(15,0)},\frac{15}{{\rm gcd}(15,3)}\big)\\
				&\quad +\frac{{\rm gcd}(15,0)(2-1)}{15(2-1)}{\rm gcd}\big(2-1,\frac{15}{{\rm gcd}(15,0)}\big)+\frac{{\rm gcd}(15,3)(2^{4}-1)}{15(2-1)}{\rm gcd}\big(2-1,\frac{15}{{\rm gcd}(15,3)}\big)\\
				&=3+1+3=7.
			\end{align*}
			Using Corollary \ref{c3.3}, we have
			\begin{align*}
				\ell&\leq 1+\frac{1}{4}\sum_{r\mid 4}\varphi(\frac{4}{r})\Big({\rm gcd}\big(2^{r}-1,\frac{2^{4}-1}{2-1},\frac{3(2^{4}-1)}{15}\big)+{\rm gcd}\big(2^{r}-1,\frac{(3-0)(2^{4}-1)}{15}\big)\Big) \\
				&=1+\frac{1}{4}[2\varphi(4)+6\varphi(2)+6\varphi(1)]\\
				&=1+\frac{1}{4}(4+6+6)=5.
			\end{align*}
			After using Magma \cite{4}, we know that the weight distribution of $\mathcal{C}$ is $1+5x^{3}+10x^{6}+10x^{9}+5x^{12}+x^{15}$, which implies that the exact value of $\ell$ is 5. Therefore, any two codewords of $\mathcal{C}$ with the same weight are in the same $\langle\mu_{q},\rho,\sigma_{\xi} \rangle$-orbit.}
	\end{Example}
	
	\begin{Corollary}\label{c3.4}
		Let $\mathcal{C}$ be a cyclic code of length $n$ over $\mathbb{F}_{q}$. Suppose that
		$$\mathcal{C}=\mathcal{R}_{n}\varepsilon_{t_{1}}\bigoplus\mathcal{R}_{n}\varepsilon_{t_{2}},$$
		where $0\leq t_{1},t_{2}\leq s$, and the primitive idempotent $\varepsilon_{t_{\ell}}$ corresponds to the $q$-cyclotomic coset $\{i_{t_{\ell}},i_{t_{\ell}}q,\cdots,i_{t_{\ell}}q^{k_{t_{\ell}}-1}\}$ for $\ell=1,2$. Suppose $k_{t_{1}}=k_{t_{2}}=k$, then the number of orbits of $\langle\mu_{q},\rho,\sigma_{\xi} \rangle$ on $\mathcal{C}^{*}=\mathcal{C}\backslash \{\bf 0\}$ is equal to
		\begin{align*}
			&\frac{1}{k}\sum_{r\mid k}\varphi(\frac{k}{r})\left( {\rm gcd}\big(q^{r}-1,\frac{q^{k}-1}{q-1},\frac{i_{t_{1}}(q^{k}-1)}{n}\big)+ {\rm gcd}\big(q^{r}-1,\frac{q^{k}-1}{q-1},\frac{i_{t_{2}}(q^{k}-1)}{n}\big)\right. \\
			&\quad \left. +{\rm gcd}\Big((q^{r}-1){\rm gcd}\big(q^{r}-1,\frac{q^{k}-1}{q-1},\frac{i_{t_{1}}(q^{k}-1)}{n},\frac{i_{t_{2}}(q^{k}-1)}{n}\big),\frac{(i_{t_{2}}-i_{t_{1}})(q^{k}-1)^{2}}{n(q-1)}\Big) \right).
		\end{align*}
	\end{Corollary}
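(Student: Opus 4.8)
The plan is to obtain Corollary~\ref{c3.4} as a direct specialization of Theorem~\ref{t3.3} to the case $k_{t_{1}}=k_{t_{2}}=k$. Since $k\mid k$, the hypothesis $k_{t_{1}}\mid k_{t_{2}}$ of Theorem~\ref{t3.3} is satisfied, so that theorem applies and gives
$$\big|\langle\mu_{q},\rho,\sigma_{\xi}\rangle\backslash\mathcal{C}^{*}\big|=s_{t_{1}}+s_{t_{2}}+s_{t_{1},t_{2}}.$$
With $k_{t_{1}}=k_{t_{2}}=k$, the formula for $s_{t_{\ell}}$ from Theorem~\ref{t3.3} reads $s_{t_{\ell}}=\frac{1}{k}\sum_{r\mid k}\varphi(k/r){\rm gcd}\big(q^{r}-1,\frac{q^{k}-1}{q-1},\frac{i_{t_{\ell}}(q^{k}-1)}{n}\big)$ for $\ell=1,2$, and these are exactly the first two summands inside the parentheses of the claimed expression. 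Hence the whole task reduces to rewriting $s_{t_{1},t_{2}}$.

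For that I would start from the intermediate form of $s_{t_{1},t_{2}}$ displayed in the proof of Theorem~\ref{t3.3}, namely $s_{t_{1},t_{2}}=\frac{1}{m}\sum_{r_{1}=0}^{m-1}{\rm gcd}\big(\,\cdots\,\big)$ whose five gcd-arguments, after setting $k_{t_{1}}=k_{t_{2}}=k$ and writing $d={\rm gcd}(k,r_{1})$, collapse to $(q^{d}-1)^{2}$, $\frac{(q^{k}-1)(q^{d}-1)}{q-1}$, $\frac{i_{t_{1}}(q^{k}-1)(q^{d}-1)}{n}$, $\frac{i_{t_{2}}(q^{k}-1)(q^{d}-1)}{n}$ and $\frac{(i_{t_{2}}-i_{t_{1}})(q^{k}-1)^{2}}{n(q-1)}$. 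The first four of these share the common factor $q^{d}-1$, which I pull out to rewrite the $r_{1}$-th summand as
$${\rm gcd}\Big((q^{d}-1)\,{\rm gcd}\big(q^{d}-1,\tfrac{q^{k}-1}{q-1},\tfrac{i_{t_{1}}(q^{k}-1)}{n},\tfrac{i_{t_{2}}(q^{k}-1)}{n}\big),\,\tfrac{(i_{t_{2}}-i_{t_{1}})(q^{k}-1)^{2}}{n(q-1)}\Big),$$
i.e.\ precisely the third summand inside the parentheses of the target formula.

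The remaining step is to convert $\frac{1}{m}\sum_{r_{1}=0}^{m-1}$ into $\frac{1}{k}\sum_{r\mid k}\varphi(k/r)$. Because $\varepsilon_{t_{\ell}}$ corresponds to a $q$-cyclotomic coset of size $k$ and the cardinality of any $q$-cyclotomic coset modulo $n$ divides $m={\rm ord}_{n}(q)$, we have $k\mid m$; moreover each of the three summands occurring in $s_{t_{1}}$, $s_{t_{2}}$ and $s_{t_{1},t_{2}}$ depends on $r_{1}$ only through $d={\rm gcd}(k,r_{1})$. Grouping the $m$ indices $r_{1}\in\{0,1,\dots,m-1\}$ by the value of $d$, exactly $\frac{m}{k}\varphi(k/d)$ of them yield a given divisor $d$ of $k$ (the standard identity $\sum_{r=0}^{N-1}f({\rm gcd}(k,r))=\frac{N}{k}\sum_{d\mid k}\varphi(k/d)f(d)$ when $k\mid N$), so dividing by $m$ turns each sum into $\frac{1}{k}\sum_{r\mid k}\varphi(k/r)(\cdots)$; adding the three resulting sums term by term produces the formula in Corollary~\ref{c3.4}.

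There is no genuine difficulty here: the argument is a bookkeeping specialization of Theorem~\ref{t3.3}. The only points requiring care are (i) extracting the factor $q^{{\rm gcd}(k,r_{1})}-1$ from the correct four of the five gcd-arguments when $k_{t_{1}}=k_{t_{2}}$, so that the nested-gcd shape comes out exactly as written, and (ii) invoking $k\mid m$ so that the $r_{1}$-sum collapses cleanly onto the divisors of $k$.
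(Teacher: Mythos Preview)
Your proposal is correct and follows essentially the same approach as the paper: specialize Theorem~\ref{t3.3} to $k_{t_1}=k_{t_2}=k$, read off $s_{t_1}$ and $s_{t_2}$ directly, simplify the expression for $s_{t_1,t_2}$ by pulling out the common factor $q^{\gcd(k,r_1)}-1$ to obtain the nested-gcd form, and then convert the sum over $r_1\in\{0,\dots,m-1\}$ into $\frac{1}{k}\sum_{r\mid k}\varphi(k/r)(\cdots)$ using $k\mid m$. The only cosmetic difference is that the paper starts from the already-nested final display of Theorem~\ref{t3.3} rather than the five-argument intermediate line, but the computation is the same.
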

	
	\begin{proof}
		According to Theorem \ref{t3.3} and its proof, we see that
		$$\big|\langle\mu_{q},\rho,\sigma_{\xi} \rangle\backslash \mathcal{C}^{*}\big|=s_{t_{1}}+s_{t_{2}}+s_{t_{1},t_{2}},$$
		where
		$$s_{t_{\ell}}=\frac{1}{k}\sum_{r\mid k}\varphi(\frac{k}{r}){\rm gcd}\big(q^{r}-1,\frac{q^{k}-1}{q-1},\frac{i_{t_{\ell}}(q^{k}-1)}{n}\big)~~{\rm for}~\ell=1,2,$$
		and
		\begin{align*}
			s_{t_{1},t_{2}}&=\frac{1}{m}\sum_{r_{1}=0}^{m-1}{\rm gcd}\Big((q^{{\rm gcd}(k,r_{1})}-1){\rm gcd}\big(q^{{\rm gcd}(k,r_{1})}-1,\frac{q^{k}-1}{q-1},\frac{i_{t_{1}}(q^{k}-1)}{n},\frac{i_{t_{2}}(q^{k}-1)}{n}\big), \\
			&\quad~\frac{(i_{t_{2}}-i_{t_{1}})(q^{k}-1)^{2}}{n(q-1)}\Big)\\
			&=\frac{1}{k}\sum_{r|k}\varphi(\frac{k}{r}){\rm gcd}\Big((q^{r}-1){\rm gcd}\big(q^{r}-1,\frac{q^{k}-1}{q-1},\frac{i_{t_{1}}(q^{k}-1)}{n},\frac{i_{t_{2}}(q^{k}-1)}{n}\big),\\
			&\quad~\frac{(i_{t_{2}}-i_{t_{1}})(q^{k}-1)^{2}}{n(q-1)}\Big),
		\end{align*}
		which gives the desired result.
	\end{proof}
	
	\begin{Example}{\rm
			Take $q=2$ and $n=15$. All the distinct $2$-cyclotomic cosets modulo $15$ are as shown in Example \ref{e3.4}.
			Let $\ell$ be the number of non-zero weights of the cyclic code $\mathcal{C}=\mathcal{R}_{n}\varepsilon_{1}\bigoplus\mathcal{R}_{n}\varepsilon_{2}$. By \cite[Corollary III.18]{7}, we have
			\begin{align*}
				\ell&\leq \frac{{\rm gcd}(15,1,3)(2^{4}-1)(2^{4}-1)}{15(2-1)}{\rm gcd}\big(2-1,\frac{15}{{\rm gcd}(15,1)},\frac{15}{{\rm gcd}(15,3)}\big)\\
				&\quad +\frac{{\rm gcd}(15,1)(2^{4}-1)}{15(2-1)}{\rm gcd}\big(2-1,\frac{15}{{\rm gcd}(15,1)}\big)+\frac{{\rm gcd}(15,3)(2^{4}-1)}{15(2-1)}{\rm gcd}\big(2-1,\frac{15}{{\rm gcd}(15,3)}\big)\\
				&=15+1+3=19.
			\end{align*}
			Using Corollary \ref{c3.4}, we have
			\begin{align*}
				\ell&\leq \frac{1}{4}\sum_{r\mid 4}\varphi(\frac{4}{r})\left( {\rm gcd}(2^{r}-1,\frac{2^{4}-1}{2-1},\frac{2^{4}-1}{15})+ {\rm gcd}(2^{r}-1,\frac{2^{4}-1}{2-1},\frac{3(2^{4}-1)}{15})\right.  \\
				&\quad +\left. {\rm gcd}\Big((2^{r}-1){\rm gcd}\big(2^{r}-1,\frac{2^{4}-1}{2-1},\frac{2^{4}-1}{15},\frac{3(2^{4}-1)}{15}\big),\frac{(3-1)(2^{4}-1)^{2}}{15(2-1)}\Big) \right) \\
				&=\frac{1}{4}[3\varphi(4)+7\varphi(2)+19\varphi(1)]\\
				&=\frac{1}{4}(6+7+19)=8.
			\end{align*}
			After using Magma \cite{4}, we know that the weight distribution of $\mathcal{C}$ is $1+15x^{4}+100x^{6}+75x^{8}+60x^{10}+5x^{12}$, which implies that $\ell=5$.}
	\end{Example}
	
	\subsection{New upper bound on the number of non-zero weights of the cyclic code $\mathcal{C}=\mathcal{R}_{n}\varepsilon_{t}\bigoplus \mu_{-1}(\mathcal{R}_{n}\varepsilon_{t})$}
	
	For $0<t\leq s$, suppose that the irreducible cyclic code $\mathcal{R}_{n}\varepsilon_{t}$ corresponds to the $q$-cyclotomic coset $\{i_{t},i_{t}q,\cdots,i_{t}q^{k-1}\}$, then
	$$\mathcal{R}_{n}\varepsilon_{t}=\Big\{\sum_{j=0}^{k-1}\big(c_{0}+c_{1}\zeta^{i_{t}q^{j}}+\cdots+c_{k-1}\zeta^{(k-1)i_{t}q^{j}}\big)e_{i_{t}q^{j}}~\Big|~c_{\ell}\in \mathbb{F}_{q}, 0\leq \ell \leq k-1\Big\}.$$
	Since $-1\in \mathbb{Z}_{n}^{*}$, $\mu_{-1}$ is an $\mathbb{F}_{q}$-vector space automorphism of $\mathcal{R}_{n}$. One can check that $\mu_{-1}(\mathcal{R}_{n}\varepsilon_{t})$ is also an irreducible cyclic code,
	and the primitive idempotent generating $\mu_{-1}(\mathcal{R}_{n}\varepsilon_{t})$ corresponds to the $q$-cyclotomic coset $\{-i_{t},-i_{t}q,\cdots,-i_{t}q^{k-1}\}$ (see \cite[Corollary 4.4.5]{11}). Therefore,
	$$\mu_{-1}(\mathcal{R}_{n}\varepsilon_{t})=\Big\{\sum_{j=0}^{k-1}\big(c_{0}'+c_{1}'\zeta^{-i_{t}q^{j}}+\cdots+c_{k-1}'\zeta^{-(k-1)i_{t}q^{j}}\big)e_{-i_{t}q^{j}}~\Big|~c_{\ell}'\in \mathbb{F}_{q}, 0\leq \ell \leq k-1\Big\}.$$
	
	Suppose $-i_{t}\notin \{i_{t}, i_{t}q,\cdots,i_{t}q^{k-1}\}$, then $\mu_{-1}(\mathcal{R}_{n}\varepsilon_{t})\cap \mathcal{R}_{n}\varepsilon_{t}=\{{\bf 0}\}$ and $\mu_{-1}^{2}(\mathcal{R}_{n}\varepsilon_{t})=\mathcal{R}_{n}\varepsilon_{t}$. Let
	$$\mathcal{C}=\mathcal{R}_{n}\varepsilon_{t}\bigoplus \mu_{-1}(\mathcal{R}_{n}\varepsilon_{t}).$$
	It is easy to see that $\mu_{-1}\in {\rm Aut}(\mathcal{C})$. In addition, since $-q\in \mathbb{Z}_{n}^{*}$, $\mu_{-q}$ is an $\mathbb{F}_{q}$-vector space automorphism of $\mathcal{R}_{n}$ and,
	$$\mu_{-q}(\mathcal{C})=\mu_{-q}(\mathcal{R}_{n}\varepsilon_{t})\bigoplus \mu_{q}(\mathcal{R}_{n}\varepsilon_{t})=\mu_{-q}(\mathcal{R}_{n}\varepsilon_{t})\bigoplus \mathcal{R}_{n}\varepsilon_{t}.$$
	Note that the primitive idempotent generating $\mu_{-q}(\mathcal{R}_{n}\varepsilon_{t})$ corresponds to the $q$-cyclotomic coset $$(-q)^{-1}\{i_{t},i_{t}q,\cdots,i_{t}q^{k-1}\}=-q^{m-1}\{i_{t},i_{t}q,\cdots,i_{t}q^{k-1}\}=\{-i_{t},-i_{t}q,\cdots,-i_{t}q^{k-1}\},$$
	and hence $\mu_{-q}(\mathcal{R}_{n}\varepsilon_{t})=\mu_{-1}(\mathcal{R}_{n}\varepsilon_{t})$. So  $\mu_{-q}\in {\rm Aut}(\mathcal{C})$. 
	
	The subgroup $\langle \mu_{-1}\rangle$ of ${\rm Aut}(\mathcal{C})$ generated by $\mu_{-1}$ is of order $2$. Let $m'$ denote the order of $-q$ in $\mathbb{Z}_{n}^{*}$, then the subgroup $\langle \mu_{-q}\rangle$ of ${\rm Aut}(\mathcal{C})$ generated by $\mu_{-q}$ is of order $m'$.  It is easy to check that either $\langle \mu_{-1}\rangle \subseteq \langle \mu_{-q}\rangle$ or $\langle \mu_{-1}\rangle \cap \langle \mu_{-q}\rangle=id$, and $\langle \mu_{-1}\rangle \subseteq \langle \mu_{-q}\rangle$ if and only if $-1\in \langle-q \rangle_{_{\mathbb{Z}_{n}^{*}}}$, where $\langle -q \rangle_{_{\mathbb{Z}_{n}^{*}}}$ is the subgroup of $\mathbb{Z}_{n}^{*}$ generated by $-q$.
	
	In this subsection, we first study the action of $\langle\mu_{-q},\rho,\sigma_{\xi} \rangle$ on $\mathcal{C}^{*}=\mathcal{C}\backslash \{\bf 0\}$ when $-1\in \langle-q \rangle_{_{\mathbb{Z}_{n}^{*}}}$, and then study the action of $\langle\mu_{-1},\mu_{q},\rho,\sigma_{\xi} \rangle$ on $\mathcal{C}^{*}=\mathcal{C}\backslash \{\bf 0\}$ when $-1\not\in \langle-q \rangle_{_{\mathbb{Z}_{n}^{*}}}$.  Since $\mu_{q}=\mu_{-1}\mu_{-q}$, $\langle\mu_{q},\rho,\sigma_{\xi} \rangle$ is a subgroup of both $\langle\mu_{-q},\rho,\sigma_{\xi} \rangle$ and $\langle\mu_{-1}, \mu_{q},\rho,\sigma_{\xi} \rangle$. Hence the number of orbits of $\langle\mu_{-q},\rho,\sigma_{\xi} \rangle$ or $\langle\mu_{-1},\mu_{q},\rho,\sigma_{\xi} \rangle$ on $\mathcal{C}^{*}$ is less than or equal to the number of orbits of $\langle\mu_{q},\rho,\sigma_{\xi} \rangle$ on $\mathcal{C}^{*}$. In the following we will show that the former is strictly less than the latter.

	\subsubsection{The action of $\langle\mu_{-q},\rho,\sigma_{\xi} \rangle$ on $\mathcal{C}^{*}$}
	Assume that $-1\in \langle-q \rangle_{_{\mathbb{Z}_{n}^{*}}}$ in this subsection, and we now consider the action of $\langle\mu_{-q},\rho,\sigma_{\xi} \rangle$ on $\mathcal{C}^{*}=\mathcal{C}\backslash \{\bf 0\}$, where $\mathcal{C}=\mathcal{R}_{n}\varepsilon_{t}\bigoplus \mu_{-1}(\mathcal{R}_{n}\varepsilon_{t})$. We first have the following lemma.
	
	\begin{Lemma}\label{L1}
		Let $m$ be the order of $q$ in $\mathbb{Z}_{n}^{*}$ and let $m'$ be the order of $-q$ in $\mathbb{Z}_{n}^{*}$. Suppose that $-i_{t}\notin \{i_{t}, i_{t}q,\cdots,i_{t}q^{k-1}\}$ and $-1\in \langle-q \rangle_{_{\mathbb{Z}_{n}^{*}}}$, then $m$ is odd and $m'=2m$.
	\end{Lemma}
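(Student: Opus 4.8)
The plan is to argue entirely inside the unit group $\mathbb{Z}_{n}^{*}$, since $m={\rm ord}_{n}(q)$ and $m'={\rm ord}_{n}(-q)$ are purely group-theoretic quantities; the main idea is to translate the combinatorial hypothesis $-i_{t}\notin\{i_{t},i_{t}q,\cdots,i_{t}q^{k-1}\}$ into the statement that $-1$ is \emph{not} a power of $q$ modulo $n$. First I would record the harmless reduction $n\geq 3$ (if $n\leq 2$ then $-1\equiv 1\pmod{n}$, so $-i_{t}\equiv i_{t}$ would lie in the coset, contradicting the hypothesis), so that $\langle -1\rangle_{_{\mathbb{Z}_{n}^{*}}}$ has order exactly $2$. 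Next I would show $-1\notin\langle q\rangle_{_{\mathbb{Z}_{n}^{*}}}$: if $-1\equiv q^{j}\pmod{n}$ for some integer $j$, then multiplying by $i_{t}$ gives $-i_{t}\equiv i_{t}q^{j}\pmod{n}$, and since the $q$-cyclotomic coset of $i_{t}$ is the whole orbit $\{i_{t}q^{a}\bmod n:a\geq 0\}=\{i_{t},i_{t}q,\cdots,i_{t}q^{k-1}\}$, this would force $-i_{t}$ into that coset, a contradiction. In particular $\langle q\rangle_{_{\mathbb{Z}_{n}^{*}}}\cap\langle -1\rangle_{_{\mathbb{Z}_{n}^{*}}}=\{1\}$.

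Then I would invoke the other hypothesis, $-1\in\langle -q\rangle_{_{\mathbb{Z}_{n}^{*}}}$, to identify $\langle -q\rangle_{_{\mathbb{Z}_{n}^{*}}}$ exactly. Since $-q$ and $-1$ both lie in $\langle -q\rangle_{_{\mathbb{Z}_{n}^{*}}}$, so does $q=(-1)(-q)$; conversely $-q\in\langle q,-1\rangle$. As $\mathbb{Z}_{n}^{*}$ is abelian, the product $\langle q\rangle_{_{\mathbb{Z}_{n}^{*}}}\langle -1\rangle_{_{\mathbb{Z}_{n}^{*}}}$ is a subgroup, and the two containments force $\langle -q\rangle_{_{\mathbb{Z}_{n}^{*}}}=\langle q\rangle_{_{\mathbb{Z}_{n}^{*}}}\langle -1\rangle_{_{\mathbb{Z}_{n}^{*}}}$. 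Combining this with the trivial intersection above, the standard product formula gives $m'=\big|\langle q\rangle_{_{\mathbb{Z}_{n}^{*}}}\big|\cdot\big|\langle -1\rangle_{_{\mathbb{Z}_{n}^{*}}}\big|=2m$. Finally, for the parity claim I would observe that if $m$ were even then $(-q)^{m}=(-1)^{m}q^{m}=1$, so $m'={\rm ord}_{n}(-q)$ would divide $m$, contradicting $m'=2m>m$; hence $m$ is odd. (Alternatively, $q^{2}=(-q)^{2}$ has order $m'/\gcd(m',2)=m$ in the cyclic group $\langle -q\rangle_{_{\mathbb{Z}_{n}^{*}}}$ of order $2m$, which already forces $\gcd(m,2)=1$.)

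I do not anticipate any serious obstacle: once the translation in the first paragraph is in place, everything reduces to a short order count in $\mathbb{Z}_{n}^{*}$. The only step that requires a moment of thought is precisely that translation --- recognizing that ``$-i_{t}$ avoids the $q$-cyclotomic coset of $i_{t}$'' is equivalent to ``$-1\notin\langle q\rangle_{_{\mathbb{Z}_{n}^{*}}}$'' --- together with keeping track of where each hypothesis is used: $-1\in\langle -q\rangle_{_{\mathbb{Z}_{n}^{*}}}$ is what gives $\langle -q\rangle_{_{\mathbb{Z}_{n}^{*}}}=\langle q,-1\rangle$, while $-1\notin\langle q\rangle_{_{\mathbb{Z}_{n}^{*}}}$ is what supplies the extra factor $2$ in $m'=2m$ (and, via $m'=2m>m$, the oddness of $m$).
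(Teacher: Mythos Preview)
Your proof is correct and follows essentially the same strategy as the paper: both arguments first translate the hypothesis $-i_{t}\notin\{i_{t},i_{t}q,\ldots,i_{t}q^{k-1}\}$ into $-1\notin\langle q\rangle_{\mathbb{Z}_{n}^{*}}$, and then carry out a short order computation in $\mathbb{Z}_{n}^{*}$. The only difference is cosmetic ordering --- you identify $\langle -q\rangle=\langle q\rangle\langle -1\rangle$ and read off $m'=2m$ from the product formula before deducing that $m$ is odd, whereas the paper first shows $m$ is odd by a direct parity argument on an exponent $\ell$ with $(-q)^{\ell}=-1$, and then obtains $m'=2m$ by divisibility; neither route offers a real advantage over the other. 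One minor remark: in your closing paragraph you describe the coset condition as \emph{equivalent} to $-1\notin\langle q\rangle$, but only the implication you actually use (and prove) holds in general when $\gcd(i_{t},n)>1$; this does not affect the argument.
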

	\begin{proof}
		If $-i_{t}\notin \{i_{t}, i_{t}q,\cdots,i_{t}q^{k-1}\}$, then $-1\not\in \langle q \rangle_{_{\mathbb{Z}_{n}^{*}}}$, where $\langle q \rangle_{_{\mathbb{Z}_{n}^{*}}}$ is the subgroup of $\mathbb{Z}_{n}^{*}$ generated by $q$, and so $m'$ is even. Since $-1\in \langle -q \rangle_{_{\mathbb{Z}_{n}^{*}}}$, $(-q)^{\ell} \equiv -1~({\rm mod}~n)$ for some positive integer $\ell$. We claim that $\ell$ is odd as $-1\not\in \langle q \rangle_{_{\mathbb{Z}_{n}^{*}}}$, then $q^{\ell} \equiv 1~({\rm mod}~n)$, implying $m|\ell$ and thus $m$ is odd. As $(-q)^{2m} \equiv 1~({\rm mod}~n)$, we have $m'|2m$, and so $m'=2l$ for some divisor $l$ of $m$, then $q^{2l}\equiv (-q)^{2l}\equiv 1~({\rm mod}~n)$, and hence $m|2l$. But $m$ is odd, then $l=m$ yielding $m'=2m$.
	\end{proof}
	
	Notice that the group $\langle\mu_{-q},\rho,\sigma_{\xi} \rangle$ can act on the sets $\mathcal{C}'=\big(\mathcal{R}_{n}\varepsilon_{t}\backslash \{{\bf 0}\}\big)\cup \big(\mu_{-1}(\mathcal{R}_{n}\varepsilon_{t})\backslash \{{\bf 0}\}\big)$ and $\mathcal{C}^{\sharp}=\mathcal{R}_{n}\varepsilon_{t}\backslash \{{\bf 0}\}\bigoplus \mu_{-1}(\mathcal{R}_{n}\varepsilon_{t})\backslash \{{\bf 0}\}$, respectively. The numbers of orbits of $\langle\mu_{-q},\rho,\sigma_{\xi} \rangle$ on $\mathcal{C}'$ and $\langle\mu_{-q},\rho,\sigma_{\xi} \rangle$ on $\mathcal{C}^{\sharp}$ are given as below.
	\begin{Lemma}\label{l3.2}
		With the notation given above, then the number of orbits of $\langle\mu_{-q},\rho,\sigma_{\xi} \rangle$ on $$\mathcal{C}'=\big(\mathcal{R}_{n}\varepsilon_{t}\backslash \{{\bf 0}\}\big)\cup \big(\mu_{-1}(\mathcal{R}_{n}\varepsilon_{t})\backslash \{{\bf 0}\}\big)$$ is equal to
		\begin{align*}
			\frac{1}{k}\sum_{r|k}\varphi(\frac{k}{r}){\rm gcd}\big(q^{r}-1,\frac{q^{k}-1}{q-1},\frac{i_{t}(q^{k}-1)}{n}\big).
		\end{align*}
	\end{Lemma}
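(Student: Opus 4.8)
The plan is to reduce the count directly to Theorem~\ref{t3.1} by exploiting the obvious index-$2$ relationship between the two groups involved. Write $A=\mathcal{R}_{n}\varepsilon_{t}\backslash\{{\bf 0}\}$ and $B=\mu_{-1}(\mathcal{R}_{n}\varepsilon_{t})\backslash\{{\bf 0}\}$, so that $\mathcal{C}'=A\sqcup B$ is a \emph{disjoint} union of sets, the disjointness being exactly the hypothesis $\mu_{-1}(\mathcal{R}_{n}\varepsilon_{t})\cap\mathcal{R}_{n}\varepsilon_{t}=\{{\bf 0}\}$ which holds because $-i_{t}\notin\{i_{t},i_{t}q,\dots,i_{t}q^{k-1}\}$. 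First I would record the group theory. By Lemma~\ref{L1}, $m$ is odd and $m'={\rm ord}_{n}(-q)=2m$; since $(-q)^{m}=(-1)^{m}q^{m}\equiv-1\pmod n$, we get $\mu_{-1}=\mu_{-q}^{\,m}\in\langle\mu_{-q}\rangle$ and hence $\mu_{q}=\mu_{-1}\mu_{-q}=\mu_{-q}^{\,m+1}\in\langle\mu_{-q}\rangle$, so $\langle\mu_{q},\rho,\sigma_{\xi}\rangle\leq\langle\mu_{-q},\rho,\sigma_{\xi}\rangle$. By Lemma~\ref{l2.2} the two groups have orders $mn(q-1)$ and $2mn(q-1)$, so the index is $2$. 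Evaluating a generic element $\mu_{q}^{r_{1}}\rho^{r_{2}}\sigma_{\xi}^{r_{3}}$ at $1\in\mathcal{R}_{n}$ forces $r_{2}=r_{3}=0$, and $q^{r_{1}}\equiv-1\pmod n$ is impossible (again since $-i_{t}$ is outside the $q$-cyclotomic coset of $i_{t}$), so $\mu_{-1}\notin\langle\mu_{q},\rho,\sigma_{\xi}\rangle$ and we obtain the coset decomposition $\langle\mu_{-q},\rho,\sigma_{\xi}\rangle=\langle\mu_{q},\rho,\sigma_{\xi}\rangle\;\sqcup\;\mu_{-1}\langle\mu_{q},\rho,\sigma_{\xi}\rangle$.

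Next I would observe that $\rho$, $\sigma_{\xi}$ and $\mu_{q}$ each preserve $A$ and preserve $B$ (because $\mathcal{R}_{n}\varepsilon_{t}$ and $\mu_{-1}(\mathcal{R}_{n}\varepsilon_{t})$ are cyclic codes and $\mu_{q}$ commutes with $\mu_{-1}$), while $\mu_{-1}$ interchanges $A$ and $B$ (as $\mu_{-1}(\mathcal{R}_{n}\varepsilon_{t})$ and $\mu_{-1}(\mu_{-1}(\mathcal{R}_{n}\varepsilon_{t}))=\mathcal{R}_{n}\varepsilon_{t}$). Hence $\langle\mu_{-q},\rho,\sigma_{\xi}\rangle$ really does act on the set $\mathcal{C}'$, and for $c\in A$ the coset decomposition gives
\[
\langle\mu_{-q},\rho,\sigma_{\xi}\rangle c\;=\;\big(\langle\mu_{q},\rho,\sigma_{\xi}\rangle c\big)\;\sqcup\;\mu_{-1}\big(\langle\mu_{q},\rho,\sigma_{\xi}\rangle c\big),
\]
the first piece lying in $A$ and the second in $B$. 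I would then check that $\Omega\mapsto\Omega\cap A$ is a well-defined bijection from the set of $\langle\mu_{-q},\rho,\sigma_{\xi}\rangle$-orbits on $\mathcal{C}'$ onto the set of $\langle\mu_{q},\rho,\sigma_{\xi}\rangle$-orbits on $A$: every $\langle\mu_{-q},\rho,\sigma_{\xi}\rangle$-orbit meets $A$ (apply $\mu_{-1}$ to any of its points lying in $B$), and by the displayed identity $\Omega\cap A$ is precisely one $\langle\mu_{q},\rho,\sigma_{\xi}\rangle$-orbit; the inverse sends a $\langle\mu_{q},\rho,\sigma_{\xi}\rangle$-orbit $O\subseteq A$ to $O\cup\mu_{-1}(O)$. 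Therefore $\big|\langle\mu_{-q},\rho,\sigma_{\xi}\rangle\backslash\mathcal{C}'\big|=\big|\langle\mu_{q},\rho,\sigma_{\xi}\rangle\backslash\big(\mathcal{R}_{n}\varepsilon_{t}\backslash\{{\bf 0}\}\big)\big|$, and the latter equals $\frac1k\sum_{r\mid k}\varphi(\frac kr)\,{\rm gcd}\big(q^{r}-1,\frac{q^{k}-1}{q-1},\frac{i_{t}(q^{k}-1)}{n}\big)$ by Theorem~\ref{t3.1}, which is exactly the asserted value.

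An alternative route, closer to the Burnside computations used elsewhere in the paper, is to apply Equation~(\ref{e2.1}) to $\langle\mu_{-q},\rho,\sigma_{\xi}\rangle$ acting on $\mathcal{C}'$: writing an element as $\mu_{-q}^{r_{1}}\rho^{r_{2}}\sigma_{\xi}^{r_{3}}$ with $0\le r_{1}\le 2m-1$, one notes that for odd $r_{1}$ the element swaps $A$ and $B$ and so fixes no point of $\mathcal{C}'$, while for even $r_{1}$ one has $\mu_{-q}^{r_{1}}=\mu_{q}^{r_{1}}$ and, since $m$ is odd, $r_{1}\mapsto r_{1}\bmod m$ runs once through $\{0,\dots,m-1\}$ as $r_{1}$ runs through the even residues modulo $2m$; splitting ${\rm Fix}(\mu_{q}^{r_{1}}\rho^{r_{2}}\sigma_{\xi}^{r_{3}})\cap\mathcal{C}'$ into its intersections with $A$ and with $B$ and using the fixed-point counts from the proof of Theorem~\ref{t3.1} for the cyclotomic cosets of $i_{t}$ and of $-i_{t}$ (which yield the same gcd, since passing from $i_{t}$ to $-i_{t}$ changes the third entry of the gcd only by a sign) produces the same answer. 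I expect the main obstacle to be organizational rather than computational: one must keep $\mathcal{C}'$ firmly as the set union $A\sqcup B$ (and not confuse it with the direct-sum code $\mathcal{C}^{\sharp}$, handled separately), locate $\langle\mu_{q},\rho,\sigma_{\xi}\rangle$ correctly as an index-$2$ subgroup via Lemma~\ref{L1}, and verify that $\mu_{-1}$ pairs the two halves cleanly; once this structure is in place, the result is essentially immediate from Theorem~\ref{t3.1}.
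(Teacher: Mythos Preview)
Your argument is correct, and your primary route is genuinely different from the paper's. The paper carries out the Burnside count directly on $\langle\mu_{-q},\rho,\sigma_{\xi}\rangle$: it computes $\mu_{-q}^{r_1}\rho^{r_2}\sigma_{\xi}^{r_3}(e_{i_tq^j})$ explicitly, observes that for odd $r_1$ the element sends $\mathcal{R}_n\varepsilon_t\setminus\{\mathbf{0}\}$ into $\mu_{-1}(\mathcal{R}_n\varepsilon_t)\setminus\{\mathbf{0}\}$ (hence has no fixed points in $\mathcal{C}'$), and for even $r_1$ reduces to the fixed-point count already done in Theorem~\ref{t3.1}; the remaining summation is then simplified using $\gcd(k,2r)=\gcd(k,r)$ (valid since $k\mid m$ and $m$ is odd). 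This is exactly the ``alternative route'' you sketch at the end.

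Your main argument is cleaner: you never reopen the Burnside sum at all. By locating $\langle\mu_q,\rho,\sigma_\xi\rangle$ as an index-$2$ subgroup with $\mu_{-1}$ a coset representative that swaps the two halves $A$ and $B$, you get a bijection between $\langle\mu_{-q},\rho,\sigma_\xi\rangle$-orbits on $\mathcal{C}'$ and $\langle\mu_q,\rho,\sigma_\xi\rangle$-orbits on $A$, and the latter is literally the content of Theorem~\ref{t3.1}. This avoids all the idempotent calculations and the parity/oddness bookkeeping in the final simplification; the paper's approach, on the other hand, is more self-contained in that it does not need to verify the subgroup structure or the swapping property, and it reuses machinery that is needed anyway for Lemma~\ref{l3.3}. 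Both are valid; yours is shorter and more conceptual for this particular lemma.
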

	\begin{proof}
		Note that
		$$\mathcal{C}'=\big(\mathcal{R}_{n}\varepsilon_{t}\backslash \{{\bf 0}\}\big)\cup \big(\mu_{-1}(\mathcal{R}_{n}\varepsilon_{t})\backslash \{{\bf 0}\}\big)$$
		is a disjoint union. Then it follows from Equation (\ref{e2.1}), Lemmas \ref{l2.2} and \ref{L1} that
		\begin{align*}
			\big|\langle\mu_{-q},\rho,\sigma_{\xi} \rangle\backslash \mathcal{C}'\big|&=\frac{1}{2mn(q-1)}\Big( \sum_{r_{1}=0}^{2m-1}\sum_{r_{2}=0}^{n-1}\sum_{r_{3}=0}^{q-2}\Big|\big\{{\bf c}\in \mathcal{R}_{n}\varepsilon_{t}\backslash \{{\bf 0}\}~\big|~\mu_{-q}^{r_{1}}\rho^{r_{2}}\sigma_{\xi}^{r_{3}}({\bf c})={\bf c}\big\}\Big|\\
			&\quad + \sum_{r_{1}=0}^{2m-1}\sum_{r_{2}=0}^{n-1}\sum_{r_{3}=0}^{q-2}\Big|\big\{{\bf c}\in \mu_{-1}(\mathcal{R}_{n}\varepsilon_{t})\backslash \{{\bf 0}\}~\big|~\mu_{-q}^{r_{1}}\rho^{r_{2}}\sigma_{\xi}^{r_{3}}({\bf c})={\bf c}\big\}\Big|\Big) .
		\end{align*}
		
		Take ${\bf c}=\sum\limits_{j=0}^{k-1}\big(\sum\limits_{\ell=0}^{k-1}c_{\ell}\zeta^{\ell i_{t}q^{j}}\big)e_{i_{t}q^{j}}\in \mathcal{R}_{n}\varepsilon_{t}\backslash \{\bf 0\}$. Note that $e_{i_{t}q^{j}}=\frac{1}{n}\sum\limits_{l=0}^{n-1}\zeta^{-i_{t}q^{j}l}x^{l}$ and $\rho^{r_{2}}\sigma_{\xi}^{r_{3}}(e_{i_{t}q^{j}})=\xi^{r_{3}}\zeta^{i_{t}q^{j}r_{2}}e_{i_{t}q^{j}}$, and thus
		\begin{align*}
			\mu_{-q}^{r_{1}}\rho^{r_{2}}\sigma_{\xi}^{r_{3}}(e_{i_{t}q^{j}})&=\xi^{r_{3}}\zeta^{i_{t}q^{j}r_{2}}\mu_{-q}^{r_{1}}(e_{i_{t}q^{j}})\\
			&=\xi^{r_{3}}\zeta^{i_{t}q^{j}r_{2}}\cdot \frac{1}{n}\sum_{l=0}^{n-1}\zeta^{-i_{t}q^{j}l}x^{(-q)^{r_{1}}l}\\
			&=\xi^{r_{3}}\zeta^{i_{t}q^{j}r_{2}}\cdot \frac{1}{n}\sum_{l=0}^{n-1}\zeta^{-i_{t}q^{j}(-q)^{-r_{1}} (-q)^{r_{1}}l}x^{(-q)^{r_{1}}l}\\
			&=\xi^{r_{3}}\zeta^{i_{t}q^{j}r_{2}}\cdot \frac{1}{n}\sum_{l=0}^{n-1}\zeta^{-i_{t}q^{j}(-q)^{-r_{1}}l}x^{l}\\
			&=\xi^{r_{3}}\zeta^{i_{t}q^{j}r_{2}}\cdot \frac{1}{n}\sum_{l=0}^{n-1}\zeta^{-(-1)^{r_{1}}i_{t}q^{-r_{1}+j}l}x^{l}\\
			&=\xi^{r_{3}}\zeta^{i_{t}q^{j}r_{2}}e_{(-1)^{r_{1}}i_{t}q^{-r_{1}+j}}.
		\end{align*}
		We then have
		\begin{align*}
			\mu_{-q}^{r_{1}}\rho^{r_{2}}\sigma_{\xi}^{r_{3}}({\bf c})&=\sum_{j=0}^{k-1}\big(\sum\limits_{\ell=0}^{k-1}c_{\ell}\zeta^{\ell i_{t}q^{j}}\big)\mu_{-q}^{r_{1}}\rho^{r_{2}}\sigma_{\xi}^{r_{3}}(e_{i_{t}q^{j}})\\
			&=\sum_{j=0}^{k-1}\xi^{r_{3}}\zeta^{i_{t}q^{j}r_{2}}\big(\sum\limits_{\ell=0}^{k-1}c_{\ell}\zeta^{\ell i_{t}q^{j}}\big)e_{(-1)^{r_{1}}i_{t}q^{-r_{1}+j}}\\
			&=\sum_{j=0}^{k-1}\xi^{r_{3}}\zeta^{i_{t}q^{-r_{1}+j}q^{r_{1}}r_{2}}\big(\sum\limits_{\ell=0}^{k-1}c_{\ell}\zeta^{\ell i_{t}q^{-r_{1}+j}}\big)^{q^{r_{1}}}e_{(-1)^{r_{1}}i_{t}q^{-r_{1}+j}}\\
			&=\sum_{j=0}^{k-1}\xi^{r_{3}}\zeta^{i_{t}q^{r_{1}+j}r_{2}}\big(\sum\limits_{\ell=0}^{k-1}c_{\ell}\zeta^{\ell i_{t}q^{j}}\big)^{q^{r_{1}}}e_{(-1)^{r_{1}}i_{t}q^{j}}.
		\end{align*}
		If $r_{1}$ is odd, then
		$$\mu_{-q}^{r_{1}}\rho^{r_{2}}\sigma_{\xi}^{r_{3}}({\bf c})=\sum_{j=0}^{k-1}\xi^{r_{3}}\zeta^{i_{t}q^{r_{1}+j}r_{2}}\big(\sum\limits_{\ell=0}^{k-1}c_{\ell}\zeta^{\ell i_{t}q^{j}}\big)^{q^{r_{1}}}e_{-i_{t}q^{j}}\in \mu_{-1}(\mathcal{R}_{n}\varepsilon_{t})\backslash \{{\bf 0}\},$$
		and so $\mu_{-q}^{r_{1}}\rho^{r_{2}}\sigma_{\xi}^{r_{3}}({\bf c})\neq {\bf c}$. Suppose $r_{1}$ is even, then 
		$$\mu_{-q}^{r_{1}}\rho^{r_{2}}\sigma_{\xi}^{r_{3}}({\bf c})=\sum_{j=0}^{k-1}\xi^{r_{3}}\zeta^{i_{t}q^{r_{1}+j}r_{2}}\big(\sum\limits_{\ell=0}^{k-1}c_{\ell}\zeta^{\ell i_{t}q^{j}}\big)^{q^{r_{1}}}e_{i_{t}q^{j}},$$
		and it follows that
		$$\mu_{-q}^{r_{1}}\rho^{r_{2}}\sigma_{\xi}^{r_{3}}({\bf c})={\bf c}~\Leftrightarrow~\xi^{r_{3}}\big(\sum\limits_{\ell=0}^{k-1}c_{\ell}\zeta^{\ell i_{t}}\big)^{q^{r_{1}}-1}=\zeta^{-i_{t}q^{r_{1}}r_{2}}.$$
		From above analysis and the proof of Theorem \ref{t3.1} we see that
		\begin{align*}
			&\sum_{r_{1}=0}^{2m-1}\sum_{r_{2}=0}^{n-1}\sum_{r_{3}=0}^{q-2}\Big|\big\{{\bf c}\in \mathcal{R}_{n}\varepsilon_{t}\backslash \{{\bf 0}\}~\big|~\mu_{-q}^{r_{1}}\rho^{r_{2}}\sigma_{\xi}^{r_{3}}({\bf c})={\bf c}\big\}\Big|\\
			=&\sum_{\substack{0\leq r_{1}\leq 2m-1\\ r_{1}~{\rm is~even}}}\sum_{r_{2}=0}^{n-1}\sum_{r_{3}=0}^{q-2}\Big|\big\{{\bf c}\in \mathcal{R}_{n}\varepsilon_{t}\backslash \{{\bf 0}\}~\big|~\mu_{-q}^{r_{1}}\rho^{r_{2}}\sigma_{\xi}^{r_{3}}({\bf c})={\bf c}\big\}\Big|\\
			=&\sum_{\substack{0\leq r_{1}\leq 2m-1\\ r_{1}~{\rm is~even}}}{\rm gcd}\big(n(q-1)(q^{{\rm gcd}(k,r_{1})}-1),n(q^{k}-1),i_{t}(q-1)(q^{k}-1)\big).
		\end{align*}
		
		Take ${\bf c}=\sum\limits_{j=0}^{k-1}\big(\sum\limits_{\ell=0}^{k-1}c_{\ell}'\zeta^{-\ell i_{t}q^{j}}\big)e_{-i_{t}q^{j}}\in \mu_{-1}(\mathcal{R}_{n}\varepsilon_{t})\backslash \{\bf 0\}$.
		Similarly as above, we have
		$$\mu_{-q}^{r_{1}}\rho^{r_{2}}\sigma_{\xi}^{r_{3}}({\bf c})=\sum_{j=0}^{k-1}\xi^{r_{3}}\zeta^{-i_{t}q^{r_{1}+j}r_{2}}\big(\sum\limits_{\ell=0}^{k-1}c_{\ell}'\zeta^{-\ell i_{t}q^{j}}\big)^{q^{r_{1}}}e_{(-1)^{r_{1}+1}i_{t}q^{j}}.$$
		If $r_{1}$ is odd, then
		$$\mu_{-q}^{r_{1}}\rho^{r_{2}}\sigma_{\xi}^{r_{3}}({\bf c})=\sum_{j=0}^{k-1}\xi^{r_{3}}\zeta^{-i_{t}q^{r_{1}+j}r_{2}}\big(\sum\limits_{\ell=0}^{k-1}c_{\ell}'\zeta^{-\ell i_{t}q^{j}}\big)^{q^{r_{1}}}e_{i_{t}q^{j}}\in \mathcal{R}_{n}\varepsilon_{t}\backslash \{\bf 0\},$$
		and so $\mu_{-q}^{r_{1}}\rho^{r_{2}}\sigma_{\xi}^{r_{3}}({\bf c})\neq {\bf c}$. Suppose $r_{1}$ is even, then
		$$\mu_{-q}^{r_{1}}\rho^{r_{2}}\sigma_{\xi}^{r_{3}}({\bf c})=\sum_{j=0}^{k-1}\xi^{r_{3}}\zeta^{-i_{t}q^{r_{1}+j}r_{2}}\big(\sum\limits_{\ell=0}^{k-1}c_{\ell}'\zeta^{-\ell i_{t}q^{j}}\big)^{q^{r_{1}}}e_{-i_{t}q^{j}},$$
		and it follows that
		$$\mu_{-q}^{r_{1}}\rho^{r_{2}}\sigma_{\xi}^{r_{3}}({\bf c})={\bf c}~\Leftrightarrow~\xi^{r_{3}}\big(\sum\limits_{\ell=0}^{k-1}c_{\ell}'\zeta^{-\ell i_{t}}\big)^{q^{r_{1}}-1}=\zeta^{i_{t}q^{r_{1}}r_{2}}.$$
		From above analysis and the proof of Theorem \ref{t3.1} we see that
		\begin{align*}
			&\sum_{r_{1}=0}^{2m-1}\sum_{r_{2}=0}^{n-1}\sum_{r_{3}=0}^{q-2}\Big|\big\{{\bf c}\in \mu_{-1}(\mathcal{R}_{n}\varepsilon_{t})\backslash \{{\bf 0}\}~\big|~\mu_{-q}^{r_{1}}\rho^{r_{2}}\sigma_{\xi}^{r_{3}}({\bf c})={\bf c}\big\}\Big|\\
			=&\sum_{\substack{0\leq r_{1}\leq 2m-1\\ r_{1}~{\rm is~even}}}\sum_{r_{2}=0}^{n-1}\sum_{r_{3}=0}^{q-2}\Big|\big\{{\bf c}\in \mu_{-1}(\mathcal{R}_{n}\varepsilon_{t})\backslash \{{\bf 0}\}~\big|~\mu_{-q}^{r_{1}}\rho^{r_{2}}\sigma_{\xi}^{r_{3}}({\bf c})={\bf c}\big\}\Big|\\
			=&\sum_{\substack{0\leq r_{1}\leq 2m-1\\ r_{1}~{\rm is~even}}}{\rm gcd}\big(n(q-1)(q^{{\rm gcd}(k,r_{1})}-1),n(q^{k}-1),i_{t}(q-1)(q^{k}-1)\big).
		\end{align*}
		
		We then conclude that the number of orbits of $\langle\mu_{-q},\rho,\sigma_{\xi} \rangle$ on $$\mathcal{C}'=\big(\mathcal{R}_{n}\varepsilon_{t}\backslash \{{\bf 0}\}\big)\cup \big(\mu_{-1}(\mathcal{R}_{n}\varepsilon_{t})\backslash \{{\bf 0}\}\big)$$ is equal to
		\begin{align*}
			&\frac{2}{2mn(q-1)}\sum_{\substack{0\leq r_{1}\leq 2m-1\\ r_{1}~{\rm is~even}}}{\rm gcd}\big(n(q-1)(q^{{\rm gcd}(k,r_{1})}-1),n(q^{k}-1),i_{t}(q-1)(q^{k}-1)\big)\\
			=&\frac{1}{m}\sum_{\substack{0\leq r_{1}\leq 2m-1\\ r_{1}~{\rm is~even}}}{\rm gcd}\big(q^{{\rm gcd}(k,r_{1})}-1,\frac{q^{k}-1}{q-1},\frac{i_{t}(q^{k}-1)}{n}\big)\\
			=&\frac{1}{m}\sum_{r=0}^{m-1}{\rm gcd}\big(q^{{\rm gcd}(k,2r)}-1,\frac{q^{k}-1}{q-1},\frac{i_{t}(q^{k}-1)}{q-1}\big)\\
			=&\frac{1}{m}\sum_{r=0}^{m-1}{\rm gcd}\big(q^{{\rm gcd}(k,r)}-1,\frac{q^{k}-1}{q-1},\frac{i_{t}(q^{k}-1)}{q-1}\big)\\
			=&\frac{1}{k}\sum_{r|k}\varphi(\frac{k}{r}){\rm gcd}\big(q^{r}-1,\frac{q^{k}-1}{q-1},\frac{i_{t}(q^{k}-1)}{q-1}\big).
		\end{align*}
		The proof is then completed.
	\end{proof}
	
	\begin{Lemma}\label{l3.3}
		With the notation given above, then the number of orbits of $\langle\mu_{-q},\rho,\sigma_{\xi} \rangle$ on $$\mathcal{C}^{\sharp}=\mathcal{R}_{n}\varepsilon_{t}\backslash \{{\bf 0}\}\bigoplus \mu_{-1}(\mathcal{R}_{n}\varepsilon_{t})\backslash \{{\bf 0}\}$$ is equal to
		\begin{align*}
			&\frac{1}{2k}\sum_{r|k}\varphi(\frac{k}{r})\left({\rm gcd}\big(q^{r}-1,\frac{2(q^{k}-1)}{q-1}\big)\right. \\
			&\quad \left. +{\rm gcd}\Big((q^{r}-1){\rm gcd}\big(q^{r}-1,\frac{q^{k}-1}{q-1},\frac{i_{t}(q^{k}-1)}{n}\big),\frac{2i_{t}(q^{k}-1)^{2}}{n(q-1)}\Big)\right).
		\end{align*}
	\end{Lemma}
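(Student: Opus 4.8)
The plan is to evaluate the orbit number by Burnside's lemma (Equation (\ref{e2.1})) applied to the group $\langle\mu_{-q},\rho,\sigma_{\xi}\rangle$ acting on $\mathcal{C}^{\sharp}$. By Lemma \ref{L1} the order of $-q$ in $\mathbb{Z}_{n}^{*}$ is $m'=2m$ with $m$ odd, so Lemma \ref{l2.2} (with $a=-q$) gives $|\langle\mu_{-q},\rho,\sigma_{\xi}\rangle|=2mn(q-1)$ with each element written uniquely as $\mu_{-q}^{r_{1}}\rho^{r_{2}}\sigma_{\xi}^{r_{3}}$, $0\le r_{1}\le 2m-1$, $0\le r_{2}\le n-1$, $0\le r_{3}\le q-2$. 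Since $\mu_{-q}^{r_{1}}$ multiplies the subscripts of primitive idempotents by $(-q)^{r_{1}}=(-1)^{r_{1}}q^{r_{1}}$, the map $\mu_{-q}^{r_{1}}$ fixes each of $\mathcal{R}_{n}\varepsilon_{t}$ and $\mu_{-1}(\mathcal{R}_{n}\varepsilon_{t})$ when $r_{1}$ is even and interchanges them when $r_{1}$ is odd. I would therefore split the Burnside sum $\sum_{r_{1}=0}^{2m-1}\sum_{r_{2}}\sum_{r_{3}}|{\rm Fix}(\mu_{-q}^{r_{1}}\rho^{r_{2}}\sigma_{\xi}^{r_{3}})|$ (fixed sets taken inside $\mathcal{C}^{\sharp}$) into its even-$r_{1}$ and odd-$r_{1}$ parts and show they produce the second and first summand of the claimed formula, respectively.

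For the even part write $r_{1}=2r$ with $0\le r\le m-1$; because $\gcd(2,m)=1$ the residue $2r\bmod m$ runs bijectively over $\{0,\dots,m-1\}$, and $\mu_{-q}^{2r}=\mu_{q^{2r}}=\mu_{q}^{2r}$. Hence the even-$r_{1}$ part of the Burnside sum equals the full Burnside sum $\sum_{r'=0}^{m-1}\sum_{r_{2}}\sum_{r_{3}}|{\rm Fix}(\mu_{q}^{r'}\rho^{r_{2}}\sigma_{\xi}^{r_{3}})|$ for the group $\langle\mu_{q},\rho,\sigma_{\xi}\rangle$ (of order $mn(q-1)$) acting on $\mathcal{C}^{\sharp}$, i.e. $mn(q-1)$ times the number of orbits of $\langle\mu_{q},\rho,\sigma_{\xi}\rangle$ on $\mathcal{C}^{\sharp}$. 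Because $\mu_{-1}(\mathcal{R}_{n}\varepsilon_{t})$ is the irreducible cyclic code attached to the coset of $-i_{t}$ and has dimension $k$, that orbit number is the mixed term $s_{t_{1},t_{2}}$ computed in the proof of Corollary \ref{c3.4} with $i_{t_{1}}=i_{t}$, $i_{t_{2}}=-i_{t}$; cancelling signs inside the greatest common divisors it equals $\frac{1}{k}\sum_{r\mid k}\varphi(\frac{k}{r})\,{\rm gcd}\big((q^{r}-1){\rm gcd}(q^{r}-1,\frac{q^{k}-1}{q-1},\frac{i_{t}(q^{k}-1)}{n}),\frac{2i_{t}(q^{k}-1)^{2}}{n(q-1)}\big)$. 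Dividing by $2mn(q-1)$ gives exactly the second summand of the statement.

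For the odd part, fix an odd $r_{1}$ and set $g=\mu_{-q}^{r_{1}}\rho^{r_{2}}\sigma_{\xi}^{r_{3}}$. Writing a codeword of $\mathcal{C}^{\sharp}$ as $\mathbf{c}=\mathbf{c}_{1}+\mathbf{c}_{2}$ with $\mathbf{c}_{1}\in\mathcal{R}_{n}\varepsilon_{t}\setminus\{\mathbf{0}\}$, $\mathbf{c}_{2}\in\mu_{-1}(\mathcal{R}_{n}\varepsilon_{t})\setminus\{\mathbf{0}\}$, the swap property makes $g\mathbf{c}=\mathbf{c}$ equivalent to $g\mathbf{c}_{1}=\mathbf{c}_{2}$ together with $g^{2}\mathbf{c}_{1}=\mathbf{c}_{1}$; the first equation determines $\mathbf{c}_{2}$ from $\mathbf{c}_{1}$ and forces it to be non-zero, so $|{\rm Fix}(g)|$ is the number of $\mathbf{c}_{1}\in\mathcal{R}_{n}\varepsilon_{t}\setminus\{\mathbf{0}\}$ with $g^{2}\mathbf{c}_{1}=\mathbf{c}_{1}$. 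Running the idempotent/Discrete-Fourier-Transform computation of the proofs of Theorem \ref{t3.1} and Lemma \ref{l3.2} (with $\mu_{-q}$ in place of $\mu_{q}$, keeping track of the action of $\mu_{-q}^{r_{1}}$ on subscripts), the condition $g^{2}\mathbf{c}_{1}=\mathbf{c}_{1}$ reduces to the single scalar equation
$$\xi^{2r_{3}}\Big(\sum_{\ell=0}^{k-1}c_{\ell}\zeta^{\ell i_{t}}\Big)^{q^{2r_{1}}-1}=\zeta^{-i_{t}r_{2}q^{r_{1}}(q^{r_{1}}-1)}$$
for $\alpha:=\sum_{\ell=0}^{k-1}c_{\ell}\zeta^{\ell i_{t}}\in\mathbb{F}_{q^{k}}^{*}$. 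As in Theorem \ref{t3.1}, the number of admissible $\alpha$ is $0$ or $q^{{\rm gcd}(k,2r_{1})}-1$, and since $k\mid m$ with $m$ odd we have ${\rm gcd}(k,2r_{1})={\rm gcd}(k,r_{1})$. It then remains to count the pairs $(r_{2},r_{3})$ for which the equation is solvable, which I would do exactly as in the proof of Theorem \ref{t3.1}: rewrite solvability as the membership $\zeta^{-i_{t}r_{2}q^{r_{1}}(q^{r_{1}}-1)}\xi^{-2r_{3}}\in\langle\theta^{q^{2r_{1}}-1}\rangle$ and compute the orders of the relevant intersections and products of the cyclic subgroups $\langle\xi\rangle$, $\langle\zeta\rangle$ and $\langle\theta^{q^{2r_{1}}-1}\rangle$ of $\mathbb{F}_{q^{k}}^{*}$.

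The main obstacle is precisely this last bookkeeping. One must show all dependence on $i_{t}$ and $n$ cancels, so that, after replacing $\sum_{r_{1}\ {\rm odd},\,0\le r_{1}\le 2m-1}$ by $\sum_{r_{1}=0}^{m-1}$ (again using $\gcd(2,m)=1$, since $j\mapsto 2j+1$ permutes $\mathbb{Z}_{k}$ and $k\mid m$) and then by $\frac{m}{k}\sum_{r\mid k}\varphi(\frac{k}{r})$, the odd part of $\frac{1}{2mn(q-1)}\times(\text{Burnside sum})$ collapses to $\frac{1}{2k}\sum_{r\mid k}\varphi(\frac{k}{r})\,{\rm gcd}\big(q^{r}-1,\frac{2(q^{k}-1)}{q-1}\big)$. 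The cancellation should come from $q^{r_{1}}(q^{r_{1}}-1)\equiv 2\pmod{q-1}$ together with the fact that $\zeta^{i_{t}q^{r_{1}}(q^{r_{1}}-1)}$ is a $(q^{r_{1}}-1)$-st power in $\mathbb{F}_{q^{k}}^{*}$, which makes the $i_{t}$-dependent factors absorbed by ${\rm gcd}(q^{{\rm gcd}(k,r_{1})}-1,\cdot)$. Adding the even and odd contributions then yields the asserted value.
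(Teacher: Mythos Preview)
Your approach is essentially the paper's: Burnside over $\langle\mu_{-q},\rho,\sigma_{\xi}\rangle$, split by parity of $r_{1}$, reduce the even half to the $\langle\mu_{q},\rho,\sigma_{\xi}\rangle$-count on $\mathcal{C}^{\sharp}$ (Corollary~\ref{c3.4} with $(i_{t_{1}},i_{t_{2}})=(i_{t},-i_{t})$), and for the odd half reduce the fixed-point condition to the scalar equation $\xi^{2r_{3}}\alpha^{q^{2r_{1}}-1}=\zeta^{-i_{t}q^{r_{1}}(q^{r_{1}}-1)r_{2}}$. All of that is correct.

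The ``main obstacle'' you leave open is exactly where the paper's argument is one line, and your heuristic is almost the right one. Running the Theorem~\ref{t3.1} bookkeeping on the odd scalar equation gives, for each odd $r_{1}$, the contribution
\[
\gcd\!\big(n(q-1)(q^{\gcd(k,r_{1})}-1),\,2n(q^{k}-1),\,i_{t}(q^{r_{1}}-1)(q-1)(q^{k}-1)\big).
\]
Now the third argument is redundant because the \emph{first} divides it: from $n\mid i_{t}(q^{k}-1)$ (this is precisely the statement that $\zeta^{i_{t}}\in\mathbb{F}_{q^{k}}^{*}$, equivalently that $k$ is the size of the $q$-coset of $i_{t}$) together with $(q^{\gcd(k,r_{1})}-1)\mid(q^{r_{1}}-1)$ one gets $n(q-1)(q^{\gcd(k,r_{1})}-1)\mid i_{t}(q^{k}-1)(q-1)(q^{r_{1}}-1)$. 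So the $i_{t}$-term drops cleanly and one is left with $\gcd\!\big(n(q-1)(q^{\gcd(k,r_{1})}-1),\,2n(q^{k}-1)\big)=n(q-1)\gcd\!\big(q^{\gcd(k,r_{1})}-1,\tfrac{2(q^{k}-1)}{q-1}\big)$; replacing the odd-$r_{1}$ sum by a sum over $r=0,\dots,m-1$ and then by $\tfrac{m}{k}\sum_{r\mid k}\varphi(k/r)$ (both legitimate since $k\mid m$ and $m$ is odd) yields the first summand. Your observation that $\zeta^{i_{t}q^{r_{1}}(q^{r_{1}}-1)}$ is a $(q^{r_{1}}-1)$-st power in $\mathbb{F}_{q^{k}}^{*}$ is the group-theoretic face of $n\mid i_{t}(q^{k}-1)$ and suffices; the congruence $q^{r_{1}}(q^{r_{1}}-1)\equiv 2\pmod{q-1}$ is false (it is $\equiv 0$) and not needed.
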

	\begin{proof}
		According to Equation (\ref{e2.1}), Lemmas \ref{l2.2} and \ref{L1}, we have
		$$\big|\langle\mu_{-q},\rho,\sigma_{\xi} \rangle\backslash \mathcal{C}^{\sharp}\big|=\frac{1}{2mn(q-1)}\sum_{r_{1}=0}^{2m-1}\sum_{r_{2}=0}^{n-1}\sum_{r_{3}=0}^{q-2}\Big|\big\{{\bf c}\in \mathcal{C}^{\sharp}~\big|~\mu_{-q}^{r_{1}}\rho^{r_{2}}\sigma_{\xi}^{r_{3}}({\bf c})={\bf c}\big\}\Big|.$$
		
		Take ${\bf c}={\bf c}_{t}+{\bf c}_{t'}\in \mathcal{C}^{\sharp}$, where ${\bf c}_{t}\in \mathcal{R}_{n}\varepsilon_{t}\backslash \{{\bf 0}\}$ and ${\bf c}_{t}'\in \mu_{-1}(\mathcal{R}_{n}\varepsilon_{t})\backslash \{{\bf 0}\}$. Suppose that
		$${\bf c}_{t}=\sum_{j=0}^{k-1}\big(\sum_{\ell=0}^{k-1}c_{\ell}\zeta^{\ell i_{t}q^{j}}\big)e_{i_{t}q^{j}}~~{\rm and}~~{\bf c}_{t'}=\sum_{j=0}^{k-1}\big(\sum_{\ell=0}^{k-1}c_{\ell}'\zeta^{-\ell i_{t}q^{j}}\big)e_{-i_{t}q^{j}}.$$
		Then
		\begin{align*}
			\mu_{-q}^{r_{1}}\rho^{r_{2}}\sigma_{\xi}^{r_{3}}({\bf c})&=\mu_{-q}^{r_{1}}\rho^{r_{2}}\sigma_{\xi}^{r_{3}}({\bf c}_{t})+\mu_{-q}^{r_{1}}\rho^{r_{2}}\sigma_{\xi}^{r_{3}}({\bf c}_{t'})\\
			&=\sum_{j=0}^{k-1}\xi^{r_{3}}\zeta^{i_{t}q^{r_{1}+j}r_{2}}\big(\sum_{\ell=0}^{k-1}c_{\ell}\zeta^{\ell i_{t}q^{j}}\big)^{q^{r_{1}}}e_{(-1)^{r_{1}}i_{t}q^{j}}\\
			&\quad +\sum_{j=0}^{k-1}\xi^{r_{3}}\zeta^{-i_{t}q^{r_{1}+j}r_{2}}\big(\sum_{\ell=0}^{k-1}c_{\ell}'\zeta^{-\ell i_{t}q^{j}}\big)^{q^{r_{1}}}e_{(-1)^{r_{1}+1}i_{t}q^{j}}.
		\end{align*}
		
		Suppose $r_{1}$ is odd,  then
		\begin{align*}
			\mu_{-q}^{r_{1}}\rho^{r_{2}}\sigma_{\xi}^{r_{3}}({\bf c})&=\sum_{j=0}^{k-1}\xi^{r_{3}}\zeta^{i_{t}q^{r_{1}+j}r_{2}}\big(\sum_{\ell=0}^{k-1}c_{\ell}\zeta^{\ell i_{t}q^{j}}\big)^{q^{r_{1}}}e_{-i_{t}q^{j}}\\
			&\quad +\sum_{j=0}^{k-1}\xi^{r_{3}}\zeta^{-i_{t}q^{r_{1}+j}r_{2}}\big(\sum_{\ell=0}^{k-1}c_{\ell}'\zeta^{-\ell i_{t}q^{j}}\big)^{q^{r_{1}}}e_{i_{t}q^{j}},
		\end{align*}
		and it follows that
		\begin{align*}
			\mu_{-q}^{r_{1}}\rho^{r_{2}}\sigma_{\xi}^{r_{3}}({\bf c})={\bf c}~& \Leftrightarrow~\mu_{-q}^{r_{1}}\rho^{r_{2}}\sigma_{\xi}^{r_{3}}({\bf c}_{t})+\mu_{-q}^{r_{1}}\rho^{r_{2}}\sigma_{\xi}^{r_{3}}({\bf c}_{t'})={\bf c}_{t}+{\bf c}_{t'}\\
			&\Leftrightarrow~ \mu_{-q}^{r_{1}}\rho^{r_{2}}\sigma_{\xi}^{r_{3}}({\bf c}_{t})={\bf c}_{t'}~{\rm and}~\mu_{-q}^{r_{1}}\rho^{r_{2}}\sigma_{\xi}^{r_{3}}({\bf c}_{t'})={\bf c}_{t}\\
			&\Leftrightarrow~\xi^{r_{3}}\zeta^{i_{t}q^{r_{1}}r_{2}}\big(\sum_{\ell=0}^{k-1}c_{\ell}\zeta^{\ell i_{t}}\big)^{q^{r_{1}}} =\sum_{\ell=0}^{k-1}c_{\ell}'\zeta^{-\ell i_{t}}\\
			&\qquad {\rm and}~ \xi^{r_{3}}\zeta^{-i_{t}q^{r_{1}}r_{2}}\big(\sum_{\ell=0}^{k-1}c_{\ell}'\zeta^{-\ell i_{t}}\big)^{q^{r_{1}}}=\sum_{\ell=0}^{k-1}c_{\ell}\zeta^{\ell i_{t}}.
		\end{align*}
		Hence the number of ${\bf c}\in \mathcal{C}^{\sharp}$ satisfying $\mu_{-q}^{r_{1}}\rho^{r_{2}}\sigma_{\xi}^{r_{3}}({\bf c})={\bf c}$ is equal to the number of pairs $(\alpha,\beta)$ with $\alpha,\beta\in \mathbb{F}_{q^{k}}^{*}$ such that $\xi^{r_{3}}\zeta^{i_{t}q^{r_{1}}r_{2}}\alpha^{q^{r_{1}}}=\beta$ and $\xi^{r_{3}}\zeta^{-i_{t}q^{r_{1}}r_{2}}\beta^{q^{r_{1}}}=\alpha$. One can check that
		\begin{align*}
			\begin{cases}\xi^{r_{3}}\zeta^{i_{t}q^{r_{1}}r_{2}}\alpha^{q^{r_{1}}}=\beta, \\ \xi^{r_{3}}\zeta^{-i_{t}q^{r_{1}}r_{2}}\beta^{q^{r_{1}}}=\alpha.
			\end{cases}\Leftrightarrow~
			\begin{cases}\xi^{2r_{3}}\alpha^{q^{2r_{1}}-1}=\zeta^{-i_{t}(q^{r_{1}}-1)q^{r_{1}}r_{2}},\\
				\beta=\xi^{r_{3}}\zeta^{i_{t}q^{r_{1}}r_{2}}\alpha^{q^{r_{1}}}.
			\end{cases}
		\end{align*}
		Therefore, the number of ${\bf c}\in \mathcal{C}^{\sharp}$ satisfying $\mu_{-q}^{r_{1}}\rho^{r_{2}}\sigma_{\xi}^{r_{3}}({\bf c})={\bf c}$ is equal to the number of $\alpha\in \mathbb{F}_{q^{k}}^{*}$ such that $\xi^{2r_{3}}\alpha^{q^{2r_{1}}-1}=\zeta^{-i_{t}(q^{r_{1}}-1)q^{r_{1}}r_{2}}$. Similar to the proof of Theorem \ref{t3.1}, we have
		\begin{align*}
			&\sum_{\substack{0\leq r_{1}\leq 2m-1\\ r_{1}~{\rm is~odd}}}\sum_{r_{2}=0}^{n-1}\sum_{r_{3}=0}^{q-2}\Big|\big\{{\bf c}\in \mathcal{C}^{\sharp}~\big|~\mu_{-q}^{r_{1}}\rho^{r_{2}}\sigma_{\xi}^{r_{3}}({\bf c})={\bf c}\big\}\Big|\\
			=&\sum_{\substack{0\leq r_{1}\leq 2m-1\\ r_{1}~{\rm is~odd}}}{\rm gcd}\big(n(q-1)(q^{{\rm gcd}(k,2r_{1})}-1),2n(q^{k}-1),i_{t}(q^{r_{1}}-1)(q-1)(q^{k}-1)\big)\\
			=&\sum_{r=0}^{m-1}{\rm gcd}\big(n(q-1)(q^{{\rm gcd}(k,4r+2)}-1),2n(q^{k}-1),i_{t}(q^{2r+1}-1)(q-1)(q^{k}-1)\big)\\
			=&\sum_{r=0}^{m-1}{\rm gcd}\big(n(q-1)(q^{{\rm gcd}(k,2r+1)}-1),2n(q^{k}-1)\big)\\
			=&\sum_{r=0}^{m-1}{\rm gcd}\big(n(q-1)(q^{{\rm gcd}(k,r)}-1),2n(q^{k}-1)\big).
		\end{align*}
		The last equation holds because ${\rm gcd}(2,m)=1$ and $k|m$.
		
		Suppose $r_{1}$ is even, then 
		\begin{align*}
			\mu_{-q}^{r_{1}}\rho^{r_{2}}\sigma_{\xi}^{r_{3}}({\bf c})&=\sum_{j=0}^{k-1}\xi^{r_{3}}\zeta^{i_{t}q^{r_{1}+j}r_{2}}\big(\sum_{\ell=0}^{k-1}c_{\ell}\zeta^{\ell i_{t}q^{j}}\big)^{q^{r_{1}}}e_{i_{t}q^{j}}\\
			&\quad +\sum_{j=0}^{k-1}\xi^{r_{3}}\zeta^{-i_{t}q^{r_{1}+j}r_{2}}\big(\sum_{\ell=0}^{k-1}c_{\ell}'\zeta^{-\ell i_{t}q^{j}}\big)^{q^{r_{1}}}e_{-i_{t}q^{j}},
		\end{align*}
		and it follows that
		\begin{align*}
			\mu_{-q}^{r_{1}}\rho^{r_{2}}\sigma_{\xi}^{r_{3}}({\bf c})={\bf c}~& \Leftrightarrow~\mu_{-q}^{r_{1}}\rho^{r_{2}}\sigma_{\xi}^{r_{3}}({\bf c}_{t})+\mu_{-q}^{r_{1}}\rho^{r_{2}}\sigma_{\xi}^{r_{3}}({\bf c}_{t'})={\bf c}_{t}+{\bf c}_{t'}\\
			&\Leftrightarrow~ \mu_{-q}^{r_{1}}\rho^{r_{2}}\sigma_{\xi}^{r_{3}}({\bf c}_{t})={\bf c}_{t}~{\rm and}~\mu_{-q}^{r_{1}}\rho^{r_{2}}\sigma_{\xi}^{r_{3}}({\bf c}_{t'})={\bf c}_{t'}\\
			&\Leftrightarrow~\xi^{r_{3}}\big(\sum_{\ell=0}^{k-1}c_{\ell}\zeta^{\ell i_{t}}\big)^{q^{r_{1}}-1} =\zeta^{-i_{t}q^{r_{1}}r_{2}}~ {\rm and}~ \xi^{r_{3}}\big(\sum_{\ell=0}^{k-1}c_{\ell}'\zeta^{-\ell i_{t}}\big)^{q^{r_{1}}-1}=\zeta^{i_{t}q^{r_{1}}r_{2}}.
		\end{align*}
		Hence the number of ${\bf c}\in \mathcal{C}^{\sharp}$ satisfying $\mu_{-q}^{r_{1}}\rho^{r_{2}}\sigma_{\xi}^{r_{3}}({\bf c})={\bf c}$ is equal to the number of pairs $(\alpha,\beta)$ with $\alpha,\beta\in \mathbb{F}_{q^{k}}^{*}$ such that $\xi^{r_{3}}\alpha^{q^{r_{1}}-1}=\zeta^{-i_{t}q^{r_{1}}r_{2}}$ and $\xi^{r_{3}}\beta^{q^{r_{1}}-1}=\zeta^{i_{t}q^{r_{1}}r_{2}}$. We deduce from the proof of Corollary \ref{c3.4} that
		\begin{align*}
			&\sum_{\substack{0\leq r_{1}\leq 2m-1\\ r_{1}~{\rm is~even}}}\sum_{r_{2}=0}^{n-1}\sum_{r_{3}=0}^{q-2}\Big|\big\{{\bf c}\in \mathcal{C}^{\sharp}~\big|~\mu_{-q}^{r_{1}}\rho^{r_{2}}\sigma_{\xi}^{r_{3}}({\bf c})={\bf c}\big\}\Big|\\
			=&\sum_{\substack{0\leq r_{1}\leq 2m-1\\ r_{1}~{\rm is~even}}}{\rm gcd}\Big(n(q-1)(q^{{\rm gcd}(k,r_{1})}-1){\rm gcd}\big(q^{{\rm gcd}(k,r_{1})}-1,\frac{q^{k}-1}{q-1},\frac{i_{t}(q^{k}-1)}{n}\big),2i_{t}(q^{k}-1)^{2}\Big)\\
			=&\sum_{r=0}^{m-1}{\rm gcd}\Big(n(q-1)(q^{{\rm gcd}(k,2r)}-1){\rm gcd}\big(q^{{\rm gcd}(k,2r)}-1,\frac{q^{k}-1}{q-1},\frac{i_{t}(q^{k}-1)}{n}\big),2i_{t}(q^{k}-1)^{2}\Big)\\
			=&\sum_{r=0}^{m-1}{\rm gcd}\Big(n(q-1)(q^{{\rm gcd}(k,r)}-1){\rm gcd}\big(q^{{\rm gcd}(k,r)}-1,\frac{q^{k}-1}{q-1},\frac{i_{t}(q^{k}-1)}{n}\big),2i_{t}(q^{k}-1)^{2}\Big).
		\end{align*}
		
		We then conclude that the number of orbits of $\langle\mu_{-q},\rho,\sigma_{\xi} \rangle$ on $$\mathcal{C}^{\sharp}=\mathcal{R}_{n}\varepsilon_{t}\backslash \{{\bf 0}\}\bigoplus \mu_{-1}(\mathcal{R}_{n}\varepsilon_{t})\backslash \{{\bf 0}\}$$ is equal to
		\begin{align*}
			&\frac{1}{2mn(q-1)}\left(\sum_{r=0}^{m-1}{\rm gcd}\big(n(q-1)(q^{{\rm gcd}(k,r)}-1),2n(q^{k}-1)\big)\right. \\
			&+\left. \sum_{r=0}^{m-1}{\rm gcd}\Big(n(q-1)(q^{{\rm gcd}(k,r)}-1){\rm gcd}\big(q^{{\rm gcd}(k,r)}-1,\frac{q^{k}-1}{q-1},\frac{i_{t}(q^{k}-1)}{n}\big),2i_{t}(q^{k}-1)^{2}\Big)\right) \\
			=&\frac{1}{2m}\sum_{r=0}^{m-1}\left({\rm gcd}\big(q^{{\rm gcd}(k,r)}-1,\frac{2(q^{k}-1)}{q-1}\big)\right.\\
			& +\left.{\rm gcd}\Big((q^{{\rm gcd}(k,r)}-1){\rm gcd}\big(q^{{\rm gcd}(k,r)}-1,\frac{q^{k}-1}{q-1},\frac{i_{t}(q^{k}-1)}{n}\big),\frac{2i_{t}(q^{k}-1)^{2}}{n(q-1)}\Big)\right)\\
			=&\frac{1}{2k}\sum_{r|k}\varphi(\frac{k}{r})\left({\rm gcd}\big(q^{r}-1,\frac{2(q^{k}-1)}{q-1}\big)\right.\\
			&+\left.{\rm gcd}\Big((q^{r}-1){\rm gcd}\big(q^{r}-1,\frac{q^{k}-1}{q-1},\frac{i_{t}(q^{k}-1)}{n}\big),\frac{2i_{t}(q^{k}-1)^{2}}{n(q-1)}\Big)\right).
		\end{align*}
		The proof is then completed.
	\end{proof}
	
	By virtue of Lemmas \ref{l3.2} and \ref{l3.3}, the number of orbits of $\langle\mu_{-q},\rho,\sigma_{\xi} \rangle$ on $\mathcal{C}^{*}=\mathcal{C}\backslash \{\bf 0\}$ can be immediately obtained.
	
	\begin{Theorem}\label{t3.4}
		Let $\mathcal{C}$ be a cyclic code of length $n$ over $\mathbb{F}_{q}$. Suppose that $$\mathcal{C}=\mathcal{R}_{n}\varepsilon_{t}\bigoplus\mu_{-1}(\mathcal{R}_{n}\varepsilon_{t}),$$ 
		where $0<t\leq s$, the primitive idempotent $\varepsilon_{t}$ corresponds to the $q$-cyclotomic coset $\{i_{t}, i_{t}q,\cdots,i_{t}q^{k-1}\}$ and $-i_{t}\notin \{i_{t}, i_{t}q,\cdots,i_{t}q^{k-1}\}$. Suppose $-1\in \langle-q \rangle_{_{\mathbb{Z}_{n}^{*}}}$, then the number of orbits of $\langle\mu_{-q},\rho,\sigma_{\xi} \rangle$ on $\mathcal{C}^{*}=\mathcal{C}\backslash \{\bf 0\}$ is equal to
		\begin{align*}
			&\frac{1}{2k}\sum_{r|k}\varphi(\frac{k}{r})\left(2{\rm gcd}\big(q^{r}-1,\frac{q^{k}-1}{q-1},\frac{i_{t}(q^{k}-1)}{n}\big)+{\rm gcd}\big(q^{r}-1,\frac{2(q^{k}-1)}{q-1}\big)\right.\\
			&\quad +\left.{\rm gcd}\Big((q^{r}-1){\rm gcd}\big(q^{r}-1,\frac{q^{k}-1}{q-1},\frac{i_{t}(q^{k}-1)}{n}\big),\frac{2i_{t}(q^{k}-1)^{2}}{n(q-1)}\Big)\right).
		\end{align*}
		In particular, the number of non-zero weights of $\mathcal{C}$ is less than or equal to the number of orbits of $\langle\mu_{-q},\rho,\sigma_{\xi} \rangle$ on $\mathcal{C}^{*}$, with equality if and only if for any two codewords ${\bf c}_{1},{\bf c}_{2}\in \mathcal{C}^{*}$ with the same weight, there exist integers $j_{1}$, $j_{2}$ and $j_{3}$ such that $\mu_{-q}^{j_{1}}\rho^{j_{2}}(\xi^{j_{3}}{\bf c}_{1})={\bf c}_{2}$.
	\end{Theorem}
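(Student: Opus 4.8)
The plan is to combine the two counting lemmas just proved, Lemma~\ref{l3.2} and Lemma~\ref{l3.3}, via a decomposition of $\mathcal{C}^{*}$ into three $\langle\mu_{-q},\rho,\sigma_{\xi}\rangle$-stable pieces. Since $\mathcal{C}=\mathcal{R}_{n}\varepsilon_{t}\bigoplus\mu_{-1}(\mathcal{R}_{n}\varepsilon_{t})$ with the two summands disjoint (here we use $-i_{t}\notin\{i_{t},i_{t}q,\dots,i_{t}q^{k-1}\}$), every nonzero codeword either lies entirely in one of the two irreducible summands or has nonzero components in both. Thus
\[
\mathcal{C}^{*}=\mathcal{C}'\ \dot\cup\ \mathcal{C}^{\sharp},
\]
where $\mathcal{C}'=\big(\mathcal{R}_{n}\varepsilon_{t}\backslash\{{\bf 0}\}\big)\cup\big(\mu_{-1}(\mathcal{R}_{n}\varepsilon_{t})\backslash\{{\bf 0}\}\big)$ and $\mathcal{C}^{\sharp}=\mathcal{R}_{n}\varepsilon_{t}\backslash\{{\bf 0}\}\bigoplus\mu_{-1}(\mathcal{R}_{n}\varepsilon_{t})\backslash\{{\bf 0}\}$, and this union is disjoint. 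Both subsets are invariant under $\langle\mu_{-q},\rho,\sigma_{\xi}\rangle$: indeed $\rho$ and $\sigma_{\xi}$ preserve each of $\mathcal{R}_{n}\varepsilon_{t}$ and $\mu_{-1}(\mathcal{R}_{n}\varepsilon_{t})$, while $\mu_{-q}$ interchanges them (this was observed already when we computed $\mu_{-q}(\mathcal{C})$), so $\mu_{-q}$ permutes $\mathcal{C}'$ within itself and likewise fixes $\mathcal{C}^{\sharp}$ setwise. Consequently the set of $\langle\mu_{-q},\rho,\sigma_{\xi}\rangle$-orbits on $\mathcal{C}^{*}$ is the disjoint union of those on $\mathcal{C}'$ and those on $\mathcal{C}^{\sharp}$, hence
\[
\big|\langle\mu_{-q},\rho,\sigma_{\xi}\rangle\backslash\mathcal{C}^{*}\big|
=\big|\langle\mu_{-q},\rho,\sigma_{\xi}\rangle\backslash\mathcal{C}'\big|
+\big|\langle\mu_{-q},\rho,\sigma_{\xi}\rangle\backslash\mathcal{C}^{\sharp}\big|.
\]

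Next I would simply substitute the closed forms supplied by Lemma~\ref{l3.2} and Lemma~\ref{l3.3}. Lemma~\ref{l3.2} contributes $\frac{1}{k}\sum_{r|k}\varphi(k/r){\rm gcd}\big(q^{r}-1,\frac{q^{k}-1}{q-1},\frac{i_{t}(q^{k}-1)}{n}\big)$; rewriting this as $\frac{1}{2k}\sum_{r|k}\varphi(k/r)\cdot 2\,{\rm gcd}\big(q^{r}-1,\frac{q^{k}-1}{q-1},\frac{i_{t}(q^{k}-1)}{n}\big)$ puts it over the same denominator $2k$ as Lemma~\ref{l3.3}. Adding the expression from Lemma~\ref{l3.3} term-by-term inside the sum over $r\mid k$ then yields exactly
\[
\frac{1}{2k}\sum_{r|k}\varphi(\tfrac{k}{r})\Big(2{\rm gcd}\big(q^{r}-1,\tfrac{q^{k}-1}{q-1},\tfrac{i_{t}(q^{k}-1)}{n}\big)+{\rm gcd}\big(q^{r}-1,\tfrac{2(q^{k}-1)}{q-1}\big)+{\rm gcd}\big((q^{r}-1){\rm gcd}\big(q^{r}-1,\tfrac{q^{k}-1}{q-1},\tfrac{i_{t}(q^{k}-1)}{n}\big),\tfrac{2i_{t}(q^{k}-1)^{2}}{n(q-1)}\big)\Big),
\]
which is the claimed formula. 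The final ``in particular'' sentence then follows directly from Lemma~\ref{l2.1} applied with the subgroup $\mathcal{G}=\langle\mu_{-q},\rho,\sigma_{\xi}\rangle$ of ${\rm Aut}(\mathcal{C})$: the number of nonzero weights is at most the number of orbits, with equality precisely when any two equal-weight codewords are related by some $\mu_{-q}^{j_{1}}\rho^{j_{2}}\sigma_{\xi}^{j_{3}}$, i.e.\ $\mu_{-q}^{j_{1}}\rho^{j_{2}}(\xi^{j_{3}}{\bf c}_{1})={\bf c}_{2}$; here one must also recall that $\langle\mu_{-q},\rho,\sigma_{\xi}\rangle$ really is a subgroup of ${\rm Aut}(\mathcal{C})$, which was established in the text preceding the theorem (using $\mu_{-q}(\mathcal{C})=\mathcal{C}$ and the fact that $\mu_{-q}$, $\rho$, $\sigma_{\xi}$ are $\mathbb{F}_{q}$-linear automorphisms).

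There is essentially no remaining obstacle: all the hard analysis—Burnside's lemma, the parity analysis of $r_{1}$ enabled by Lemma~\ref{L1} (which gives $m'=2m$ with $m$ odd), and the two fixed-point counts—has been absorbed into Lemmas~\ref{L1}, \ref{l3.2} and \ref{l3.3}. The only points requiring a modicum of care are: (i) verifying that the decomposition $\mathcal{C}^{*}=\mathcal{C}'\ \dot\cup\ \mathcal{C}^{\sharp}$ is genuinely a partition and that each piece is stable under the whole group (not just under $\rho,\sigma_{\xi}$), and (ii) the bookkeeping in merging the two sums over $r\mid k$ onto a common denominator. Neither is substantive, so the proof is short. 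If I wanted to be maximally economical I could even phrase the orbit-count additivity as an instance of the general principle that orbits of a group action on a disjoint union of invariant subsets are the disjoint union of the orbits on each subset, and cite that rather than re-deriving it.
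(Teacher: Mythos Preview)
Your proposal is correct and follows essentially the same route as the paper: decompose $\mathcal{C}^{*}$ as the disjoint union $\mathcal{C}'\cup\mathcal{C}^{\sharp}$ of $\langle\mu_{-q},\rho,\sigma_{\xi}\rangle$-stable subsets, add the orbit counts from Lemmas~\ref{l3.2} and~\ref{l3.3}, and invoke Lemma~\ref{l2.1} for the last assertion. The paper's proof is in fact terser than yours, omitting the explicit verification of invariance and the bookkeeping for combining the sums.
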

	
	\begin{proof}
		Note that
		$$\mathcal{C}\backslash \{{\bf 0}\}=\mathcal{C}'\cup \mathcal{C}^{\sharp}$$
		is a disjoint union, where $$\mathcal{C}'=\big(\mathcal{R}_{n}\varepsilon_{t}\backslash \{{\bf 0}\}\big)\cup \big(\mu_{-1}(\mathcal{R}_{n}\varepsilon_{t})\backslash \{{\bf 0}\}\big)~{\rm and}~\mathcal{C}^{\sharp}=\mathcal{R}_{n}\varepsilon_{t}\backslash \{{\bf 0}\}\bigoplus\mu_{-1}(\mathcal{R}_{n}\varepsilon_{t})\backslash \{{\bf 0}\}.$$ Then 
		$$\big|\langle\mu_{-q},\rho,\sigma_{\xi} \rangle\backslash \mathcal{C}^{*}\big|=\big|\big\langle \mu_{-q},\rho,M \big\rangle\big\backslash \mathcal{C}'\big|+\big|\big\langle \mu_{-q},\rho,M \big\rangle\big\backslash \mathcal{C}^{\sharp}\big|.$$ The rest of the proof is clear with the help of Lemmas \ref{l3.2} and \ref{l3.3}.
	\end{proof}
	
	\begin{Remark}{\rm
			Let $\mathcal{C}$ be the cyclic code in Theorem \ref{t3.4}. It follows from Corollary \ref{c3.4} and Theorem \ref{t3.4} that
			\begin{align*}
				&\big|\langle\mu_{q},\rho,\sigma_{\xi} \rangle\backslash \mathcal{C}^{*}\big|-\big|\langle\mu_{-q},\rho,\sigma_{\xi} \rangle\backslash \mathcal{C}^{*}\big|\\
				=&\frac{1}{k}\sum_{r|k}\varphi(\frac{k}{r}){\rm gcd}\big(q^{r}-1,\frac{q^{k}-1}{q-1},\frac{i_{t}(q^{k}-1)}{n}\big)+\frac{1}{2k}\sum_{r|k}\varphi(\frac{k}{r})\left({\rm gcd}\Big((q^{r}-1){\rm gcd}\big(q^{r}-1,\frac{q^{k}-1}{q-1},\right.\\
				& \left.\frac{i_{t}(q^{k}-1)}{n}\big),\frac{2i_{t}(q^{k}-1)^{2}}{n(q-1)}\Big)-{\rm gcd}\big(q^{r}-1,\frac{2(q^{k}-1)}{q-1}\big)\right)\\
				\geq &\frac{1}{k}\sum_{r|k}\varphi(\frac{k}{r}){\rm gcd}\big(q^{r}-1,\frac{q^{k}-1}{q-1},\frac{i_{t}(q^{k}-1)}{n}\big).
			\end{align*}
			Hence the upper bound on the number of non-zero weights of $\mathcal{C}$ given by Theorem \ref{t3.4} is less than that given by Corollary \ref{c3.4}.}
	\end{Remark}
	
	\begin{Example}{\rm
			Take $q=2$ and $n=7$. All the distinct $2$-cyclotomic cosets modulo $7$ are given by
			$$\Gamma_{0}=\{0\},~\Gamma_{1}=\{1,2,4\},~\Gamma_{2}=\{3,6,5\}.$$
			Let $\ell$ be the number of non-zero weights of the cyclic code $\mathcal{C}=\mathcal{R}_{n}\varepsilon_{1}\bigoplus \mu_{-1}(\mathcal{R}_{n}\varepsilon_{1})=\mathcal{R}_{n}\varepsilon_{1} \bigoplus \mathcal{R}_{n}\varepsilon_{2}$. By Corollary \ref{c3.4}, we have
			\begin{align*}
				\ell&\leq \frac{1}{3}\sum_{r\mid 3}\varphi(\frac{3}{r})\left( {\rm gcd}\big(2^{r}-1,\frac{2^{3}-1}{2-1},\frac{2^{3}-1}{7}\big)+ {\rm gcd}\big(2^{r}-1,\frac{2^{3}-1}{2-1},\frac{3(2^{3}-1)}{7}\big)\right.  \\
				&\quad +\left. {\rm gcd}\Big((2^{r}-1){\rm gcd}\big(2^{r}-1,\frac{2^{3}-1}{2-1},\frac{2^{3}-1}{7},\frac{3(2^{3}-1)}{7}\big),\frac{(3-1)(2^{3}-1)^{2}}{7(2-1)}\Big) \right)\\
				&=\frac{1}{3}[\varphi(3)(1+1+1)+\varphi(1)(1+1+7)]\\
				&=\dfrac{1}{3}(6+9)=5.
			\end{align*}
			Using Theorem \ref{t3.5}, we have
			\begin{align*}
				\ell&\leq \frac{1}{6}\sum_{r|3}\varphi(\frac{3}{r})\left(2{\rm gcd}\big(2^{r}-1,\frac{2^{3}-1}{2-1},\frac{2^{3}-1}{7}\big)+{\rm gcd}\big(2^{r}-1,\frac{2(2^{3}-1)}{2-1}\big)\right.\\
				&\quad +\left.{\rm gcd}\Big((2^{r}-1){\rm gcd}\big(2^{r}-1,\frac{2^{3}-1}{2-1},\frac{2^{3}-1}{7}\big),\frac{2(2^{3}-1)^{2}}{7(2-1)}\Big)\right) \\
				&=\frac{1}{6}[\varphi(3)(2+1+1)+\varphi(1)(2+7+7)]\\
				&=\frac{1}{6}(8+16)=4.
			\end{align*}
			After using Magma \cite{4}, we know that the weight distribution of $\mathcal{C}$ is $1+21x^{2}+35x^{4}+7x^{6}$, which implies that the exact value of $\ell$ is 3. }
	\end{Example}
	
	\subsubsection{The action of $\langle\mu_{-1},\mu_{-q},\rho,\sigma_{\xi} \rangle$ on $\mathcal{C}^{*}$}
	Assume that $-1\not\in \langle-q \rangle_{_{\mathbb{Z}_{n}^{*}}}$ in this subsection, and we now turn to consider the action of $\langle\mu_{-1},\mu_{-q},\rho,\sigma_{\xi} \rangle$ on $\mathcal{C}^{*}=\mathcal{C}\backslash \{\bf 0\}$, where $\mathcal{C}=\mathcal{R}_{n}\varepsilon_{t}\bigoplus\mu_{-1}(\mathcal{R}_{n}\varepsilon_{t})$. 
	
	\begin{Lemma}\label{L2}
		Let $m$ be the order of $q$ in $\mathbb{Z}_{n}^{*}$ and let $m'$ be the order of $-q$ in $\mathbb{Z}_{n}^{*}$. Suppose that $-i_{t}\notin \{i_{t}, i_{t}q,\cdots,i_{t}q^{k-1}\}$ and $-1\not\in \langle-q \rangle_{_{\mathbb{Z}_{n}^{*}}}$, then $m'=m$.
	\end{Lemma}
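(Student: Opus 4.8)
The plan is to determine $m'=\mathrm{ord}_{n}(-q)$ by a short parity analysis, using both non-membership hypotheses as the essential leverage. First I would record the elementary consequence of $-i_{t}\notin\{i_{t},i_{t}q,\cdots,i_{t}q^{k-1}\}$, namely that $-1\notin\langle q\rangle_{_{\mathbb{Z}_{n}^{*}}}$: if one had $q^{j}\equiv-1\pmod n$ for some $j$, then $i_{t}q^{j}\equiv-i_{t}\pmod n$, putting $-i_{t}$ back into the $q$-cyclotomic coset of $i_{t}$. This is exactly the observation already used at the start of the proof of Lemma~\ref{L1}.

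Next I would extract the two relevant divisibilities. Since $(-q)^{2m}=q^{2m}=1\pmod n$, we get $m'\mid 2m$. To obtain $m\mid m'$, I would first show $m'$ is even: if $m'$ were odd, then $1=(-q)^{m'}=(-1)^{m'}q^{m'}=-q^{m'}$, hence $q^{m'}\equiv-1\pmod n$, contradicting $-1\notin\langle q\rangle_{_{\mathbb{Z}_{n}^{*}}}$. With $m'$ even we have $(-q)^{m'}=q^{m'}=1\pmod n$, so $m\mid m'$. Combining $m\mid m'\mid 2m$ forces $m'=m$ or $m'=2m$.

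Finally I would rule out $m'=2m$. If $m$ is odd, then $(-q)^{m}=-q^{m}\equiv-1\pmod n$, so $-1\in\langle-q\rangle_{_{\mathbb{Z}_{n}^{*}}}$, contradicting the standing hypothesis $-1\notin\langle-q\rangle_{_{\mathbb{Z}_{n}^{*}}}$. If $m$ is even, then $(-q)^{m}=q^{m}=1\pmod n$, which would force $m'\mid m$, incompatible with $m'=2m$ (as $m\ge1$). In either case we reach a contradiction, so $m'=m$. The argument is essentially routine; the only point needing care is keeping track of which hypothesis does what --- $-1\notin\langle q\rangle_{_{\mathbb{Z}_{n}^{*}}}$ eliminates the odd case and $-1\notin\langle-q\rangle_{_{\mathbb{Z}_{n}^{*}}}$ eliminates the $m'=2m$ case --- so the main obstacle is simply the bookkeeping of the parity subcases rather than any substantive difficulty.
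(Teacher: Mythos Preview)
Your proof is correct and uses the same essential ideas as the paper's proof. The only organizational difference is that the paper notes directly that $-1\notin\langle-q\rangle_{_{\mathbb{Z}_{n}^{*}}}$ forces $m$ to be even (exactly your ``if $m$ is odd'' subcase), and then, having both $m$ and $m'$ even, concludes $m'\mid m$ and $m\mid m'$ in one stroke---avoiding the $m\mid m'\mid 2m$ squeeze and subsequent case split.
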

	\begin{proof}
		As is pointed out in Lemma \ref{L1}, $m'$ is even. Since $-1\not\in \langle-q \rangle_{_{\mathbb{Z}_{n}^{*}}}$, $m$ is even. It follows that $(-q)^{m}\equiv q^{m} \equiv 1~({\rm mod}~n)$ and $q^{m'}\equiv (-q)^{m'} \equiv 1~({\rm mod}~n)$, then $m'|m$ and $m|m'$, and hence $m'=m$.
	\end{proof}
	
	\begin{Lemma}\label{l3.4}
		Suppose $-1\not\in \langle-q \rangle_{_{\mathbb{Z}_{n}^{*}}}$. The subgroup $\langle\mu_{-1},\mu_{-q},\rho,\sigma_{\xi} \rangle$ of ${\rm Aut}(\mathcal{C})$ is of order $2mn(q-1)$, and each element of $\big\langle \mu_{-1},\mu_{-q},\rho,\sigma_{\xi} \big\rangle$ can be written uniquely as a product $\mu_{-1}^{r_{0}}\mu_{-q}^{r_{1}}\rho^{r_{2}}\sigma_{\xi}^{r_{3}}$ for some $0\leq r_{0}\leq 1$, $0\leq r_{1}\leq m-1$, $0\leq r_{2}\leq n-1$ and $0\leq r_{3}\leq q-2$.
	\end{Lemma}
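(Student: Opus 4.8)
The plan is to follow the strategy of Lemma~\ref{l2.2}, reducing most of the work to that result. First I would note that all four generators lie in ${\rm Aut}(\mathcal{C})$ (this was established for $\mu_{-1}$ and $\mu_{-q}$ in the paragraphs preceding Lemma~\ref{L2}, and $\rho,\sigma_{\xi}$ always do), so the subgroup is well defined. Applying Lemma~\ref{l2.2} with $a=-q$, and using Lemma~\ref{L2} which gives ${\rm ord}_{n}(-q)=m'=m$, the subgroup $H:=\langle\mu_{-q},\rho,\sigma_{\xi}\rangle$ has order $mn(q-1)$ and every element of $H$ is uniquely a product $\mu_{-q}^{r_{1}}\rho^{r_{2}}\sigma_{\xi}^{r_{3}}$ with $0\le r_{1}\le m-1$, $0\le r_{2}\le n-1$, $0\le r_{3}\le q-2$. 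Next I would record the commutation relations needed: any two multipliers commute (since $\mu_{a}\mu_{b}=\mu_{ab}$), so $\mu_{-1}\mu_{-q}=\mu_{-q}\mu_{-1}$; $\sigma_{\xi}$ commutes with every multiplier and with $\rho$; and $\mu_{-1}\rho=\rho^{-1}\mu_{-1}$ (the $a=-1$ case of $\mu_{a}\rho=\rho^{a}\mu_{a}$). Together with $\mu_{-1}^{2}=\mu_{1}=id$ these show that $\mu_{-1}$ normalizes $H$, hence $\langle\mu_{-1},\mu_{-q},\rho,\sigma_{\xi}\rangle=\langle\mu_{-1}\rangle H=H\cup\mu_{-1}H$.

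The key step is to show $\mu_{-1}\notin H$, which is exactly where the hypothesis $-1\notin\langle-q\rangle_{_{\mathbb{Z}_{n}^{*}}}$ enters. Suppose for contradiction that $\mu_{-1}=\mu_{-q}^{r_{1}}\rho^{r_{2}}\sigma_{\xi}^{r_{3}}$ for some admissible $r_{1},r_{2},r_{3}$. Evaluating both sides at $f(x)=1\in\mathcal{R}_{n}$ gives $1=\xi^{r_{3}}x^{(-q)^{r_{1}}r_{2}}\ ({\rm mod}\ x^{n}-1)$; comparing the supports of the two polynomials forces $(-q)^{r_{1}}r_{2}\equiv0\ ({\rm mod}\ n)$, hence $r_{2}=0$ since $(-q)^{r_{1}}\in\mathbb{Z}_{n}^{*}$, and then $\xi^{r_{3}}=1$ forces $r_{3}=0$. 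Thus $\mu_{-1}=\mu_{-q}^{r_{1}}$; evaluating at $f(x)=x$ gives $x^{n-1}=x^{(-q)^{r_{1}}}$, so $-1\equiv(-q)^{r_{1}}\ ({\rm mod}\ n)$, contradicting $-1\notin\langle-q\rangle_{_{\mathbb{Z}_{n}^{*}}}$. Consequently the cosets $H$ and $\mu_{-1}H$ are disjoint, so $|\langle\mu_{-1},\mu_{-q},\rho,\sigma_{\xi}\rangle|=2|H|=2mn(q-1)$.

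Finally, for uniqueness of the representation: every element of the group lies in exactly one of the two cosets $H$, $\mu_{-1}H$, hence is uniquely of the form $\mu_{-1}^{r_{0}}h$ with $r_{0}\in\{0,1\}$ and $h\in H$; applying the uniqueness part of Lemma~\ref{l2.2} to $h$ yields the unique factorization $\mu_{-1}^{r_{0}}\mu_{-q}^{r_{1}}\rho^{r_{2}}\sigma_{\xi}^{r_{3}}$ with the stated ranges. (Alternatively one can argue directly as in Lemma~\ref{l2.2}: if two such products agree, successively move the $\sigma_{\xi}$-powers, then the $\rho$-powers, then the $\mu_{-1}$-powers to one side, using that $\langle\mu_{-1},\mu_{-q},\rho\rangle\cap\langle\sigma_{\xi}\rangle=id$, $\langle\mu_{-1},\mu_{-q}\rangle\cap\langle\rho\rangle=id$, and $\langle\mu_{-1}\rangle\cap\langle\mu_{-q}\rangle=id$, each intersection being trivial by evaluation at $1$ and at $x$.)

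The only place requiring care is the routine commutation bookkeeping, but it is essentially the same as in the proof of Lemma~\ref{l2.2}; the genuine content of the lemma is the single step $\mu_{-1}\notin H$, forced by the hypothesis $-1\notin\langle-q\rangle_{_{\mathbb{Z}_{n}^{*}}}$, which is precisely what makes the order double from $mn(q-1)$ to $2mn(q-1)$. I expect no serious obstacle beyond this.
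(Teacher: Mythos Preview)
Your proposal is correct and follows essentially the same approach the paper has in mind: the paper's own proof is the single sentence ``The proof is similar to that of Lemma~\ref{l2.2},'' and you have simply written out that similarity in full, correctly identifying that the only new ingredient beyond Lemma~\ref{l2.2} (applied with $a=-q$, using Lemma~\ref{L2} for ${\rm ord}_n(-q)=m$) is the verification $\mu_{-1}\notin\langle\mu_{-q},\rho,\sigma_\xi\rangle$, which is exactly where the hypothesis $-1\notin\langle-q\rangle_{\mathbb{Z}_n^*}$ is used.
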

	
	\begin{proof}
		The proof is similar to that of Lemma \ref{l2.2}.
	\end{proof}
	
	\begin{Theorem}\label{t3.5}
		Let $\mathcal{C}$ be a cyclic code of length $n$ over $\mathbb{F}_{q}$. Suppose that $$\mathcal{C}=\mathcal{R}_{n}\varepsilon_{t}\bigoplus\mu_{-1}(\mathcal{R}_{n}\varepsilon_{t}),$$
		where $0<t\leq s$, the primitive idempotent $\varepsilon_{t}$ corresponds to the $q$-cyclotomic coset $\{i_{t}, i_{t}q,\cdots,i_{t}q^{k-1}\}$ and $-i_{t}\notin \{i_{t}, i_{t}q,\cdots,i_{t}q^{k-1}\}$. Suppose $-1\not\in \langle-q \rangle_{_{\mathbb{Z}_{n}^{*}}}$, then the number of orbits of $\langle\mu_{-1},\mu_{-q},\rho,\sigma_{\xi} \rangle$ on $\mathcal{C}^{*}=\mathcal{C}\backslash \{\bf 0\}$ is equal to
		\begin{align*}
			&\frac{1}{2m}\sum_{r=0}^{m-1}\left(2{\rm gcd}\big(q^{{\rm gcd}(k,r)}-1,\frac{q^{k}-1}{q-1},\frac{i_{t}(q^{k}-1)}{n}\big)\right. \\
			&\quad +\left. {\rm gcd}\big(q^{{\rm gcd}(k,2r)}-1,\frac{2(q^{k}-1)}{q-1},\frac{i_{t}(q^{r}-1)(q^{k}-1)}{n}\big)\right. \\
			&\quad +\left. {\rm gcd}\Big((q^{{\rm gcd}(k,r)}-1){\rm gcd}\big(q^{{\rm gcd}(k,r)}-1,\frac{q^{k}-1}{q-1},\frac{i_{t}(q^{k}-1)}{n}\big),\frac{2i_{t}(q^{k}-1)^{2}}{n(q-1)}\Big)\right).
		\end{align*}
		In particular, the number of non-zero weights of $\mathcal{C}$ is less than or equal to the number of orbits of $\langle\mu_{-1},\mu_{-q},\rho,\sigma_{\xi} \rangle$ on $\mathcal{C}^{*}$, with equality if and only if for any two codewords ${\bf c}_{1},{\bf c}_{2}\in \mathcal{C}^{*}$ with the same weight, there exist integers $j_{0}$, $j_{1}$, $j_{2}$ and $j_{3}$ such that $\mu_{-1}^{j_{0}}\mu_{-q}^{j_{1}}\rho^{j_{2}}(\xi^{j_{3}}{\bf c}_{1})={\bf c}_{2}$.
	\end{Theorem}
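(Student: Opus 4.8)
The plan is to apply Burnside's lemma (Equation~(\ref{e2.1})) to the group $\mathcal{H}:=\langle\mu_{-1},\mu_{-q},\rho,\sigma_\xi\rangle$. By Lemma~\ref{l3.4} we have $|\mathcal{H}|=2mn(q-1)$ and every element of $\mathcal{H}$ is uniquely $\mu_{-1}^{r_0}\mu_{-q}^{r_1}\rho^{r_2}\sigma_\xi^{r_3}$ with $0\le r_0\le 1$, $0\le r_1\le m-1$, $0\le r_2\le n-1$, $0\le r_3\le q-2$. Since $\mu_{-q}=\mu_{-1}\mu_q$ and $\mu_{-1}$ commutes with $\mu_q$, this element equals $\mu_{-1}^{r_0+r_1}\mu_q^{r_1}\rho^{r_2}\sigma_\xi^{r_3}$; hence it stabilises each of $\mathcal{R}_n\varepsilon_t$ and $\mu_{-1}(\mathcal{R}_n\varepsilon_t)$ when $r_0+r_1$ is even and interchanges them when $r_0+r_1$ is odd. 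By Lemma~\ref{L2}, $m$ is even, so as $(r_0,r_1)$ runs over $\{0,1\}\times\{0,\dots,m-1\}$, the values with $r_0+r_1$ even are precisely $\mu_q^{r_1}\rho^{r_2}\sigma_\xi^{r_3}$ for each $r_1\in\{0,\dots,m-1\}$, and those with $r_0+r_1$ odd are precisely $\mu_{-q^{r_1}}\rho^{r_2}\sigma_\xi^{r_3}$ for each $r_1\in\{0,\dots,m-1\}$.

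Write $\mathcal{C}^{*}=\mathcal{C}'\cup\mathcal{C}^{\sharp}$ (a disjoint union of $\mathcal{H}$-invariant sets), where $\mathcal{C}'=\big(\mathcal{R}_n\varepsilon_t\backslash\{{\bf 0}\}\big)\cup\big(\mu_{-1}(\mathcal{R}_n\varepsilon_t)\backslash\{{\bf 0}\}\big)$ and $\mathcal{C}^{\sharp}=\mathcal{R}_n\varepsilon_t\backslash\{{\bf 0}\}\bigoplus\mu_{-1}(\mathcal{R}_n\varepsilon_t)\backslash\{{\bf 0}\}$, so that $|\mathcal{H}\backslash\mathcal{C}^{*}|=|\mathcal{H}\backslash\mathcal{C}'|+|\mathcal{H}\backslash\mathcal{C}^{\sharp}|$. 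For the first term, the index-$2$ subgroup $\langle\mu_q,\rho,\sigma_\xi\rangle$ is precisely the stabiliser of $\mathcal{R}_n\varepsilon_t$ in $\mathcal{H}$, and $\mu_{-1}$ interchanges the two ``blocks'' $\mathcal{R}_n\varepsilon_t\backslash\{{\bf 0}\}$ and $\mu_{-1}(\mathcal{R}_n\varepsilon_t)\backslash\{{\bf 0}\}$ of $\mathcal{C}'$; hence $|\mathcal{H}\backslash\mathcal{C}'|$ equals the number of $\langle\mu_q,\rho,\sigma_\xi\rangle$-orbits on $\mathcal{R}_n\varepsilon_t\backslash\{{\bf 0}\}$, which by Theorem~\ref{t3.1} is $\frac1k\sum_{r\mid k}\varphi(\frac kr){\rm gcd}(q^r-1,\frac{q^k-1}{q-1},\frac{i_t(q^k-1)}{n})$, equal to $\frac1m\sum_{r=0}^{m-1}{\rm gcd}(q^{{\rm gcd}(k,r)}-1,\frac{q^k-1}{q-1},\frac{i_t(q^k-1)}{n})$ (using $k\mid m$, as in the proof of Theorem~\ref{t3.1}). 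This supplies the first summand.

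For $|\mathcal{H}\backslash\mathcal{C}^{\sharp}|$, I would apply Burnside and split the sum by the parity of $r_0+r_1$. When $r_0+r_1$ is even the acting element $\mu_q^{r_1}\rho^{r_2}\sigma_\xi^{r_3}$ preserves both summands, so a codeword ${\bf c}_t+{\bf c}_{t'}\in\mathcal{C}^{\sharp}$ is fixed iff ${\bf c}_t$ and ${\bf c}_{t'}$ are fixed separately; this is exactly the configuration of two conjugate irreducible codes (cyclotomic cosets $i_t$ and $-i_t$, each of size $k$) analysed in the proof of Corollary~\ref{c3.4} (equivalently the even part of Lemma~\ref{l3.3}), and summing the resulting $(r_2,r_3)$-counts over $r_1\in\{0,\dots,m-1\}$ and dividing by $2mn(q-1)$ yields the third summand. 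When $r_0+r_1$ is odd the acting element $h=\mu_{-q^{r_1}}\rho^{r_2}\sigma_\xi^{r_3}$ swaps the two summands, so ${\bf c}_t+{\bf c}_{t'}$ is fixed iff $h({\bf c}_t)={\bf c}_{t'}$ (which then determines ${\bf c}_{t'}$ and forces ${\bf c}_{t'}\ne{\bf 0}$) together with $h^2({\bf c}_t)={\bf c}_t$. A direct computation with the relations $\mu_a\rho=\rho^{a}\mu_a$ gives $h^2=\mu_{q^{2r_1}}\rho^{(1-q^{-r_1})r_2}\sigma_\xi^{2r_3}$, and feeding this into the Fourier-coordinate argument of Theorem~\ref{t3.1} shows that the number of fixed codewords of $\mathcal{C}^{\sharp}$ equals the number of $\alpha\in\mathbb{F}_{q^k}^{*}$ with $\xi^{2r_3}\alpha^{q^{2r_1}-1}=\zeta^{-i_tq^{r_1}(q^{r_1}-1)r_2}$, namely $0$ or $q^{{\rm gcd}(k,2r_1)}-1$. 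Running the same $|S(r_1)|\cdot|R(r_1,r_2)|$ bookkeeping as in Theorem~\ref{t3.1}, now with $\langle\theta^{q^{2r_1}-1}\rangle$ and $\langle\xi^{2}\rangle$ in place of $\langle\theta^{q^{r_1}-1}\rangle$ and $\langle\xi\rangle$, one obtains the second summand $\frac1{2m}\sum_{r=0}^{m-1}{\rm gcd}(q^{{\rm gcd}(k,2r)}-1,\frac{2(q^k-1)}{q-1},\frac{i_t(q^r-1)(q^k-1)}{n})$. Adding the three contributions gives the stated formula, and the ``in particular'' clause is immediate from Lemma~\ref{l2.1}.

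The step I expect to be the main obstacle is the odd-parity part of the $\mathcal{C}^{\sharp}$ computation. Because the group element genuinely interchanges the two irreducible codes, one must pass to its square, keep track of the exponents $q^{2r_1}$ and $\xi^{2r_3}$ and the twisted root of unity $\zeta^{-i_tq^{r_1}(q^{r_1}-1)r_2}$, and --- crucially, since $m$ is even here --- resist simplifying $q^{{\rm gcd}(k,2r)}-1$ to $q^{{\rm gcd}(k,r)}-1$ (the simplification used in Lemma~\ref{l3.3} is valid only when $m$ is odd). Verifying that the $(r_2,r_3)$-count still collapses to a single ${\rm gcd}$ of exactly the stated shape, that the factor $2$ enters correctly from $\xi^{2r_3}$, and that the powers of $n(q-1)$ cancel, is the only genuinely new calculation; everything else is a bookkeeping reduction to Theorem~\ref{t3.1} and Corollary~\ref{c3.4}.
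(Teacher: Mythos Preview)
Your proposal is correct and follows essentially the same route as the paper: decompose $\mathcal{C}^{*}=\mathcal{C}'\cup\mathcal{C}^{\sharp}$, apply Burnside to $\mathcal{H}=\langle\mu_{-1},\mu_{-q},\rho,\sigma_\xi\rangle$ (of order $2mn(q-1)$ by Lemma~\ref{l3.4}), and split the sum according to whether the element fixes or swaps the two irreducible components, reducing the even case to the computations behind Theorem~\ref{t3.1} and Corollary~\ref{c3.4} and the odd case to the ``square the swapping element'' trick exactly as in Lemma~\ref{l3.3}. The only cosmetic difference is that for $\mathcal{C}'$ you invoke the block-stabiliser argument (orbit count on $\mathcal{C}'$ equals orbit count of $\langle\mu_q,\rho,\sigma_\xi\rangle$ on $\mathcal{R}_n\varepsilon_t\backslash\{{\bf 0}\}$) whereas the paper re-derives this by direct Burnside computation on both blocks; your observation that one must \emph{not} simplify $q^{\gcd(k,2r)}-1$ to $q^{\gcd(k,r)}-1$ here (since $m$ is even, unlike in Lemma~\ref{l3.3}) is exactly the point.
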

	
	\begin{proof}
		Denote $\mathcal{C}'=\big(\mathcal{R}_{n}\varepsilon_{t}\backslash \{{\bf 0}\}\big)\cup \big(\mu_{-1}(\mathcal{R}_{n}\varepsilon_{t})\backslash \{{\bf 0}\}\big)$ and $\mathcal{C}^{\sharp}=\mathcal{R}_{n}\varepsilon_{t}\backslash \{{\bf 0}\}\bigoplus \mu_{-1}(\mathcal{R}_{n}\varepsilon_{t})\backslash \{{\bf 0}\}$. Then 
		$$\big|\langle\mu_{-1},\mu_{-q},\rho,\sigma_{\xi} \rangle\backslash \mathcal{C}^{*}\big|=\big|\langle\mu_{-1},\mu_{-q},\rho,\sigma_{\xi} \rangle\backslash \mathcal{C}'\big|+\big|\langle\mu_{-1},\mu_{-q},\rho,\sigma_{\xi} \rangle\backslash \mathcal{C}^{\sharp}\big|,$$
		and it follows from Equation \ref{e2.1} and Lemma \ref{l3.4} that
		$$\big|\langle\mu_{-1},\mu_{-q},\rho,\sigma_{\xi} \rangle\backslash \mathcal{C}'\big|=\frac{1}{2mn(q-1)}\sum_{r_{0}=0}^{1}\sum_{r_{1}=0}^{m-1}\sum_{r_{2}=0}^{n-1}\sum_{r_{3}=0}^{q-2}\Big|\big\{{\bf c}\in \mathcal{C}'~\big|~\mu_{-1}^{r_{0}}\mu_{-q}^{r_{1}}\rho^{r_{2}}\sigma_{\xi}^{r_{3}}({\bf c})={\bf c}\big\}\Big|,$$
		and
		$$\big|\langle\mu_{-1},\mu_{-q},\rho,\sigma_{\xi} \rangle\backslash \mathcal{C}^{\sharp}\big|=\frac{1}{2mn(q-1)}\sum_{r_{0}=0}^{1}\sum_{r_{1}=0}^{m-1}\sum_{r_{2}=0}^{n-1}\sum_{r_{3}=0}^{q-2}\Big|\big\{{\bf c}\in \mathcal{C}^{\sharp}~\big|~\mu_{-1}^{r_{0}}\mu_{-q}^{r_{1}}\rho^{r_{2}}\sigma_{\xi}^{r_{3}}({\bf c})={\bf c}\big\}\Big|.$$
		
		Take ${\bf c}_{t}=\sum\limits_{j=0}^{k-1}\big(\sum\limits_{\ell=0}^{k-1}c_{\ell}\zeta^{\ell i_{t}q^{j}}\big)e_{i_{t}q^{j}}\in \mathcal{R}_{n}\varepsilon_{t}\backslash \{\bf 0\}$. Note that $e_{i_{t}q^{j}}=\frac{1}{n}\sum\limits_{l=0}^{n-1}\zeta^{-i_{t}q^{j}l}x^{l}$ and $\mu_{-q}^{r_{1}}\rho^{r_{2}}\sigma_{\xi}^{r_{3}}(e_{i_{t}q^{j}})=\xi^{r_{3}}\zeta^{i_{t}q^{j}r_{2}}e_{(-1)^{r_{1}}i_{t}q^{-r_{1}+j}}$, and thus
		\begin{align*}
			\mu_{-1}^{r_{0}}\mu_{-q}^{r_{1}}\rho^{r_{2}}\sigma_{\xi}^{r_{3}}(e_{i_{t}q^{j}})&=\xi^{r_{3}}\zeta^{i_{t}q^{j}r_{2}}\mu_{-1}^{r_{0}}(e_{(-1)^{r_{1}}i_{t}q^{-r_{1}+j}})\\
			&=\xi^{r_{3}}\zeta^{i_{t}q^{j}r_{2}}\cdot \frac{1}{n}\sum_{l=0}^{n-1}\zeta^{-(-1)^{r_{1}}i_{t}q^{-r_{1}+j}l}x^{(-1)^{r_{0}}l}\\
			&=\xi^{r_{3}}\zeta^{i_{t}q^{j}r_{2}}\cdot \frac{1}{n}\sum_{l=0}^{n-1}\zeta^{-(-1)^{r_{0}+r_{1}}i_{t}q^{-r_{1}+j} (-1)^{r_{0}}l}x^{(-1)^{r_{0}}l}\\
			&=\xi^{r_{3}}\zeta^{i_{t}q^{j}r_{2}}\cdot \frac{1}{n}\sum_{l=0}^{n-1}\zeta^{-(-1)^{r_{0}+r_{1}}i_{t}q^{-r_{1}+j}l}x^{l}\\
			&=\xi^{r_{3}}\zeta^{i_{t}q^{j}r_{2}}e_{(-1)^{r_{0}+r_{1}}i_{t}q^{-r_{1}+j}}.
		\end{align*}
		We then have
		\begin{align*}
			\mu_{-1}^{r_{0}}\mu_{-q}^{r_{1}}\rho^{r_{2}}\sigma_{\xi}^{r_{3}}({\bf c})&=\sum_{j=0}^{k-1}\big(\sum\limits_{\ell=0}^{k-1}c_{\ell}\zeta^{\ell i_{t}q^{j}}\big)\mu_{-1}^{r_{0}}\mu_{-q}^{r_{1}}\rho^{r_{2}}\sigma_{\xi}^{r_{3}}(e_{i_{t}q^{j}})\\
			&=\sum_{j=0}^{k-1}\xi^{r_{3}}\zeta^{i_{t}q^{j}r_{2}}\big(\sum\limits_{\ell=0}^{k-1}c_{\ell}\zeta^{\ell i_{t}q^{j}}\big)e_{(-1)^{r_{0}+r_{1}}i_{t}q^{-r_{1}+j}}\\
			&=\sum_{j=0}^{k-1}\xi^{r_{3}}\zeta^{i_{t}q^{-r_{1}+j}q^{r_{1}}r_{2}}\big(\sum\limits_{\ell=0}^{k-1}c_{\ell}\zeta^{\ell i_{t}q^{-r_{1}+j}}\big)^{q^{r_{1}}}e_{(-1)^{r_{0}+r_{1}}i_{t}q^{-r_{1}+j}}\\
			&=\sum_{j=0}^{k-1}\xi^{r_{3}}\zeta^{i_{t}q^{r_{1}+j}r_{2}}\big(\sum\limits_{\ell=0}^{k-1}c_{\ell}\zeta^{\ell i_{t}q^{j}}\big)^{q^{r_{1}}}e_{(-1)^{r_{0}+r_{1}}i_{t}q^{j}}.
		\end{align*}
		Take ${\bf c}_{t'}=\sum\limits_{j=0}^{k-1}\big(\sum\limits_{\ell=0}^{k-1}c_{\ell}'\zeta^{-\ell i_{t}q^{j}}\big)e_{-i_{t}q^{j}}\in \mu_{-1}(\mathcal{R}_{n}\varepsilon_{t})\backslash \{\bf 0\}$. Similarly as above, we have
		$$\mu_{-1}^{r_{0}}\mu_{-q}^{r_{1}}\rho^{r_{2}}\sigma_{\xi}^{r_{3}}({\bf c}_{t'})=\sum_{j=0}^{k-1}\xi^{r_{3}}\zeta^{-i_{t}q^{r_{1}+j}r_{2}}\big(\sum\limits_{\ell=0}^{k-1}c_{\ell}'\zeta^{-\ell i_{t}q^{j}}\big)^{q^{r_{1}}}e_{(-1)^{r_{0}+r_{1}+1}i_{t}q^{j}}.$$
		
		Similar discussion as in the proof of Lemmas \ref{l3.2} and \ref{l3.3} shows that
		\begin{align*}
			&\big|\langle\mu_{-1},\mu_{-q},\rho,\sigma_{\xi} \rangle\backslash \mathcal{C}'\big|\\
			=&\frac{1}{2mn(q-1)}\Big(\sum_{r_{0}=0}^{1}\sum_{r_{1}=0}^{m-1}\sum_{r_{2}=0}^{n-1}\sum_{r_{3}=0}^{q-2}\Big|\big\{{\bf c}\in \mathcal{R}_{n}\varepsilon_{t}\backslash \{{\bf 0}\}~\big|~\mu_{-1}^{r_{0}}\mu_{-q}^{r_{1}}\rho^{r_{2}}\sigma_{\xi}^{r_{3}}({\bf c})={\bf c}\big\}\Big|\\
			&+ \sum_{r_{0}=0}^{1}\sum_{r_{1}=0}^{m-1}\sum_{r_{2}=0}^{n-1}\sum_{r_{3}=0}^{q-2}\Big|\big\{{\bf c}\in \mu_{-1}(\mathcal{R}_{n}\varepsilon_{t})\backslash \{{\bf 0}\}~\big|~\mu_{-1}^{r_{0}}\mu_{-q}^{r_{1}}\rho^{r_{2}}\sigma_{\xi}^{r_{3}}({\bf c})={\bf c}\big\}\Big|\Big) \\
			=&\frac{1}{mn(q-1)}\sum_{r_{1}=0}^{m-1}{\rm gcd}\big(n(q-1)(q^{{\rm gcd}(k,r_{1})}-1),n(q^{k}-1),i_{t}(q-1)(q^{k}-1)\big)\\
			=&\frac{1}{m}\sum_{r_{1}=0}^{m-1}{\rm gcd}\big(q^{{\rm gcd}(k,r_{1})}-1,\frac{q^{k}-1}{q-1},\frac{i_{t}(q^{k}-1)}{n}\big),
		\end{align*}
		and 
		\begin{align*}
			&\big|\langle\mu_{-1},\mu_{-q},\rho,\sigma_{\xi} \rangle\backslash \mathcal{C}^{\sharp}\big|\\
			=&\frac{1}{2mn(q-1)}\sum_{r_{0}=0}^{1}\sum_{r_{1}=0}^{m-1}\sum_{r_{2}=0}^{n-1}\sum_{r_{3}=0}^{q-2}\Big|\big\{{\bf c}\in \mathcal{C}^{\sharp}~\big|~\mu_{-1}^{r_{0}}\mu_{-q}^{r_{1}}\rho^{r_{2}}\sigma_{\xi}^{r_{3}}({\bf c})={\bf c}\big\}\Big|\\
			=&\frac{1}{2mn(q-1)}\left(\sum_{r_{1}=0}^{m-1}{\rm gcd}\big(n(q-1)(q^{{\rm gcd}(k,2r_{1})}-1),2n(q^{k}-1),i_{t}(q^{r_{1}}-1)(q-1)(q^{k}-1)\big)\right. \\
			&+\left. \sum_{r_{1}=0}^{m-1}{\rm gcd}\Big(n(q-1)(q^{{\rm gcd}(k,r_{1})}-1){\rm gcd}\big(q^{{\rm gcd}(k,r_{1})}-1,\frac{q^{k}-1}{q-1},\frac{i_{t}(q^{k}-1)}{n}\big),2i_{t}(q^{k}-1)^{2}\Big)\right)\\
			=&\frac{1}{2m}\sum_{r_{1}=0}^{m-1}\left({\rm gcd}\big(q^{{\rm gcd}(k,2r_{1})}-1,\frac{2(q^{k}-1)}{q-1},\frac{i_{t}(q^{r_{1}}-1)(q^{k}-1)}{n}\big)\right. \\
			&+\left. {\rm gcd}\Big((q^{{\rm gcd}(k,r_{1})}-1){\rm gcd}\big(q^{{\rm gcd}(k,r_{1})}-1,\frac{q^{k}-1}{q-1},\frac{i_{t}(q^{k}-1)}{n}\big),\frac{2i_{t}(q^{k}-1)^{2}}{n(q-1)}\Big)\right).
		\end{align*}
		
		Summarizing all the conclusions above, the number of orbits of $\langle\mu_{-1},\mu_{-q},\rho,\sigma_{\xi} \rangle$ on $\mathcal{C}^{*}$ is equal to
		\begin{align*}
			&\big|\langle\mu_{-1},\mu_{-q},\rho,\sigma_{\xi} \rangle\backslash \mathcal{C}'\big|+\big|\langle\mu_{-1},\mu_{-q},\rho,\sigma_{\xi} \rangle\backslash \mathcal{C}^{\sharp}\big|\\
			=&\frac{1}{2m}\sum_{r_{1}=0}^{m-1}\left(2{\rm gcd}\big(q^{{\rm gcd}(k,r_{1})}-1,\frac{q^{k}-1}{q-1},\frac{i_{t}(q^{k}-1)}{n}\big)\right. \\
			&+\left. {\rm gcd}\big(q^{{\rm gcd}(k,2r_{1})}-1,\frac{2(q^{k}-1)}{q-1},\frac{i_{t}(q^{r_{1}}-1)(q^{k}-1)}{n}\big)\right. \\
			&+\left. {\rm gcd}\Big((q^{{\rm gcd}(k,r_{1})}-1){\rm gcd}\big(q^{{\rm gcd}(k,r_{1})}-1,\frac{q^{k}-1}{q-1},\frac{i_{t}(q^{k}-1)}{n}\big),\frac{2i_{t}(q^{k}-1)^{2}}{n(q-1)}\Big)\right).
		\end{align*}
		The proof is then completed.
	\end{proof}
	
	\begin{Remark}{\rm
			Let $\mathcal{C}$ be the cyclic code in Theorem \ref{t3.5}. Since $\langle\mu_{q},\rho,\sigma_{\xi} \rangle$ is a subgroup of $\langle\mu_{-1},\mu_{-q},\rho,\sigma_{\xi} \rangle$, $\big|\langle\mu_{-1},\mu_{-q},\rho,\sigma_{\xi} \rangle\backslash \mathcal{C}^{\sharp}\big|\leq \big|\langle\mu_{q},\rho,\sigma_{\xi} \rangle\backslash \mathcal{C}^{\sharp}\big|$. Then by Corollary \ref{c3.4} and Theorem \ref{t3.5}, we have
			\begin{align*}
				&\big|\langle\mu_{q},\rho,\sigma_{\xi} \rangle\backslash \mathcal{C}^{*}\big|-\big|\langle\mu_{-1},\mu_{-q},\rho,\sigma_{\xi} \rangle\backslash \mathcal{C}^{*}\big|\\
				=&\big|\langle\mu_{q},\rho,\sigma_{\xi} \rangle\backslash \mathcal{C}^{'}\big|-\big|\langle\mu_{-1},\mu_{-q},\rho,\sigma_{\xi} \rangle\backslash \mathcal{C}^{'}\big|+\big|\langle\mu_{q},\rho,\sigma_{\xi} \rangle\backslash \mathcal{C}^{\sharp}\big|-\big|\langle\mu_{-1},\mu_{-q},\rho,\sigma_{\xi} \rangle\backslash \mathcal{C}^{\sharp}\big|\\
				\geq &\big|\langle \mu_{q},\rho,\sigma_{\xi} \rangle\backslash \mathcal{C}^{'}\big|-\big|\langle\mu_{-1},\mu_{-q},\rho,\sigma_{\xi} \rangle\backslash \mathcal{C}^{'}\big|\\
				=&\frac{1}{k}\sum_{r|k}\varphi(\frac{k}{r}){\rm gcd}\big(q^{r}-1,\frac{q^{k}-1}{q-1},\frac{i_{t}(q^{k}-1)}{n}\big),
			\end{align*}
			which illustrates that the upper bound on the number of non-zero weights of $\mathcal{C}$ given by Theorem \ref{t3.5} is less than that given by Corollary \ref{c3.4}.}
	\end{Remark}
	
	\begin{Example}{\rm
			Take $q=2$ and $n=21$. All the distinct $2$-cyclotomic cosets modulo $21$ are given by
			$$\Gamma_{0}=\{0\},~\Gamma_{1}=\{1,2,4,8,16,11\},~\Gamma_{2}=\{3,6,12\},$$
			$$\Gamma_{3}=\{5,10,20,19,17,13\},~\Gamma_{4}=\{7,14\},~\Gamma_{5}=\{9,18,15\}.$$
			Let $\ell$ be the number of non-zero weights of the cyclic code $\mathcal{C}=\mathcal{R}_{n}\varepsilon_{2}\bigoplus \mu_{-1}(\mathcal{R}_{n}\varepsilon_{2})=\mathcal{R}_{n}\varepsilon_{2} \bigoplus \mathcal{R}_{n}\varepsilon_{5}$. By Corollary \ref{c3.4}, we have
			\begin{align*}
				\ell&\leq \frac{1}{3}\sum_{r\mid 3}\varphi(\frac{3}{r})\left( {\rm gcd}\big(2^{r}-1,\frac{2^{3}-1}{2-1},\frac{3(2^{3}-1)}{21}\big)+ {\rm gcd}\big(2^{r}-1,\frac{2^{3}-1}{2-1},\frac{9(2^{3}-1)}{21}\big)\right.  \\
				&\quad +\left. {\rm gcd}\Big((2^{r}-1){\rm gcd}\big(2^{r}-1,\frac{2^{3}-1}{2-1},\frac{3(2^{3}-1)}{21},\frac{9(2^{3}-1)}{21}\big),\frac{(9-3)(2^{3}-1)^{2}}{21(2-1)}\Big) \right)\\
				&=\frac{1}{3}[\varphi(3)(1+1+1)+\varphi(1)(1+1+7)]\\
				&=\dfrac{1}{3}(6+9)=5.
			\end{align*}
			Using Theorem \ref{t3.5}, we have
			\begin{align*}
				\ell&\leq \frac{1}{12}\sum_{r=0}^{5}\left(2{\rm gcd}\big(2^{{\rm gcd}(3,r)}-1,\frac{2^{3}-1}{2-1},\frac{3(2^{3}-1)}{21}\big)\right. \\
				&\quad +\left. {\rm gcd}\big(2^{{\rm gcd}(3,2r)}-1,\frac{2(2^{3}-1)}{2-1},\frac{3(2^{r}-1)(2^{3}-1)}{21}\big)\right. \\
				&\quad +\left. {\rm gcd}\Big((2^{{\rm gcd}(3,r)}-1){\rm gcd}\big(2^{{\rm gcd}(3,r)}-1,\frac{2^{3}-1}{2-1},\frac{3(2^{3}-1)}{21}\big),\frac{6(2^{3}-1)^{2}}{21(2-1)}\Big)\right) \\
				&=\frac{1}{12}(16+4+4+16+4+4)=4.
			\end{align*}
			After using Magma \cite{4}, we know that the weight distribution of $\mathcal{C}$ is $1+21x^{6}+35x^{12}+7x^{18}$, which implies that the exact value of $\ell$ is 3. }
	\end{Example}
	
	\subsection{New upper bound on the number of non-zero weights of the cyclic code $\mathcal{C}_{l_{0}}=\mathcal{R}_{n}\varepsilon_{t}\bigoplus \mu_{(-1)^{l_{0}}p^{\frac{e}{2}}}(\mathcal{R}_{n}\varepsilon_{t})$}
	
	Recall that $q=p^{e}$, where $p$ is a prime and $e$ is a positive integer. In this subsection, we assume that $e$ is even.
	
	For $0<t\leq s$, suppose that the irreducible cyclic code $\mathcal{R}_{n}\varepsilon_{t}$ corresponds to the $q$-cyclotomic coset $\{i_{t},i_{t}q,\cdots,i_{t}q^{k-1}\}$, then
	$$\mathcal{R}_{n}\varepsilon_{t}=\Big\{\sum_{j=0}^{k-1}\big(c_{0}+c_{1}\zeta^{i_{t}q^{j}}+\cdots+c_{k-1}\zeta^{(k-1)i_{t}q^{j}}\big)e_{i_{t}q^{j}}~\Big|~c_{\ell}\in \mathbb{F}_{q}, 0\leq \ell \leq k-1\Big\}.$$
	Let $l_{0}$ be a integer with $0\leq l_{0}\leq 1$. One can check that $\mu_{(-1)^{l_{0}}p^{\frac{e}{2}}}(\mathcal{R}_{n}\varepsilon_{t})$ is also an irreducible cyclic code,
	and the primitive idempotent generating $\mu_{(-1)^{l_{0}}p^{\frac{e}{2}}}(\mathcal{R}_{n}\varepsilon_{t})$ corresponds to the $q$-cyclotomic coset $(-1)^{l_{0}}p^{-\frac{e}{2}}\{i_{t},i_{t}q,\cdots,i_{t}q^{k-1}\}$ (see \cite[Corollary 4.4.5]{11}). Therefore,
	$$\mu_{(-1)^{l_{0}}p^{\frac{e}{2}}}(\mathcal{R}_{n}\varepsilon_{t})=\Big\{\sum_{j=0}^{k-1}\big(\sum_{\ell=0}^{k-1}c_{\ell}'\zeta^{\ell(-1)^{l_{0}}p^{-\frac{e}{2}}i_{t}q^{j}}\big)e_{(-1)^{l_{0}}p^{-\frac{e}{2}}i_{t}q^{j}}~\Big|~c_{\ell}'\in \mathbb{F}_{q}, 0\leq \ell \leq k-1\Big\}.$$
	
	Suppose $(-1)^{l_{0}}p^{-\frac{e}{2}}i_{t}\notin \{i_{t}, i_{t}q,\cdots,i_{t}q^{k-1}\}$, then $\mu_{(-1)^{l_{0}}p^{\frac{e}{2}}}(\mathcal{R}_{n}\varepsilon_{t})\cap \mathcal{R}_{n}\varepsilon_{t}=\{{\bf 0}\}$ and $\mu_{(-1)^{l_{0}}p^{\frac{e}{2}}}^{2}(\mathcal{R}_{n}\varepsilon_{t})=\mathcal{R}_{n}\varepsilon_{t}$. Let
	$$\mathcal{C}_{l_{0}}=\mathcal{R}_{n}\varepsilon_{t}\bigoplus \mu_{(-1)^{l_{0}}p^{\frac{e}{2}}}(\mathcal{R}_{n}\varepsilon_{t}).$$
	It is easy to see that $\mu_{(-1)^{l_{0}}p^{\frac{e}{2}}}\in {\rm Aut}(\mathcal{C}_{l_{0}})$. Let $m_{_{l_{0}}}$ denote the order of $(-1)^{l_{0}}p^{\frac{e}{2}}$ in $\mathbb{Z}_{n}^{*}$, then the subgroup $\langle \mu_{(-1)^{l_{0}}p^{\frac{e}{2}}}\rangle$ of ${\rm Aut}(\mathcal{C}_{l_{0}})$ generated by $\mu_{(-1)^{l_{0}}p^{\frac{e}{2}}}$ is of order $m_{_{l_{0}}}$.  
	
	In this subsection, we consider the action of $\langle\mu_{(-1)^{l_{0}}p^{\frac{e}{2}}},\rho,\sigma_{\xi} \rangle$ on $\mathcal{C}_{l_{0}}^{*}=\mathcal{C}_{l_{0}}\backslash \{\bf 0\}$. Since $\mu_{q}=\mu_{(-1)^{l_{0}}p^{\frac{e}{2}}}^{2}$, $\langle\mu_{q},\rho,\sigma_{\xi} \rangle$ is a subgroup of $\langle\mu_{(-1)^{l_{0}}p^{\frac{e}{2}}},\rho,\sigma_{\xi} \rangle$. Hence the number of orbits of $\langle\mu_{(-1)^{l_{0}}p^{\frac{e}{2}}},\rho,\sigma_{\xi} \rangle$ on $\mathcal{C}_{l_{0}}^{*}$ is less than or equal to the number of orbits of $\langle\mu_{q},\rho,\sigma_{\xi} \rangle$ on $\mathcal{C}_{l_{0}}^{*}$.
	Further we will show that the former is strictly less than the latter.
	
	We first have the following lemma.
	\begin{Lemma}\label{L3}
		Let	$m$ be the order of $q$ in $\mathbb{Z}_{n}^{*}$ and let $m_{_{l_{0}}}$ be the order of $(-1)^{l_{0}}p^{\frac{e}{2}}$ in $\mathbb{Z}_{n}^{*}$. Suppose that $(-1)^{l_{0}}p^{-\frac{e}{2}}i_{t}\notin \{i_{t}, i_{t}q,\cdots,i_{t}q^{k-1}\}$, then $m_{_{l_{0}}}=2m$.
	\end{Lemma}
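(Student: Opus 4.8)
Write $\beta=(-1)^{l_0}p^{\frac{e}{2}}$, so that $\beta^2=p^e=q$ as integers; in particular ${\rm gcd}(\beta,n)=1$, $\beta^2\equiv q\pmod n$, and $\beta^{-1}\equiv(-1)^{l_0}p^{-\frac{e}{2}}\pmod n$. The plan is to compute $m_{_{l_0}}={\rm ord}_{n}(\beta)$ in three short steps, in the same spirit as the proof of Lemma \ref{L1}. First I would observe that $\beta^{2m}=(\beta^2)^m=q^m\equiv 1\pmod n$, so that $m_{_{l_0}}\mid 2m$. It then remains to pin down $m_{_{l_0}}$ among the divisors of $2m$, and the key is to analyse its parity.

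The second step is to rule out that $m_{_{l_0}}$ is odd. Assume $m_{_{l_0}}=2\tau+1$ for some $\tau\geq 0$. Then $1\equiv\beta^{m_{_{l_0}}}=\beta\cdot(\beta^2)^{\tau}=\beta\, q^{\tau}\pmod n$, so $\beta^{-1}\equiv q^{\tau}\pmod n$. Multiplying by $i_t$ and using the periodicity $i_tq^{k}\equiv i_t\pmod n$ (so that $i_tq^{\tau}\equiv i_tq^{\,\tau\bmod k}\pmod n$), we obtain
\[
(-1)^{l_0}p^{-\frac{e}{2}}i_t\equiv\beta^{-1}i_t\equiv i_tq^{\,\tau\bmod k}\pmod n\in\{i_t,i_tq,\cdots,i_tq^{k-1}\},
\]
which contradicts the hypothesis $(-1)^{l_0}p^{-\frac{e}{2}}i_{t}\notin \{i_{t}, i_{t}q,\cdots,i_{t}q^{k-1}\}$. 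Hence $m_{_{l_0}}$ must be even.

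Finally, write $m_{_{l_0}}=2\lambda$. Then $q^{\lambda}=(\beta^2)^{\lambda}=\beta^{m_{_{l_0}}}\equiv 1\pmod n$, so $m\mid\lambda$; combined with $2\lambda=m_{_{l_0}}\mid 2m$, i.e.\ $\lambda\mid m$, this forces $\lambda=m$, whence $m_{_{l_0}}=2m$, as desired. The argument is essentially computational; the only point requiring a little care is the contradiction step, namely translating ``$\beta^{-1}$ is a power of $q$ modulo $n$'' into membership of $(-1)^{l_0}p^{-\frac{e}{2}}i_t$ in the $q$-cyclotomic coset of $i_t$ via $i_tq^{k}\equiv i_t\pmod n$. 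I do not anticipate any substantive obstacle beyond this.
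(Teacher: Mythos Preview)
Your proof is correct and follows essentially the same approach as the paper: both arguments establish $m_{_{l_0}}\mid 2m$ via $\beta^{2m}=q^m\equiv 1$, use the hypothesis to force $m_{_{l_0}}$ even (your contradiction $\beta^{-1}i_t\equiv i_tq^{\tau\bmod k}$ is exactly the contrapositive of the paper's observation that $\beta^{2l-1}\not\equiv 1$ for all $l$), and then conclude from $m\mid m_{_{l_0}}/2$. The only cosmetic difference is the order in which you present the steps.
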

	\begin{proof}
		As $(-1)^{l_{0}}p^{-\frac{e}{2}}i_{t}\notin \{i_{t}, i_{t}q,\cdots,i_{t}q^{k-1}\}$, we have $(-1)^{l_{0}}p^{-\frac{e}{2}}q^{l}\not\equiv 1~({\rm mod}~n)$, or equivalently, $\big((-1)^{l_{0}}p^{\frac{e}{2}}\big)^{2l-1}\not\equiv 1~({\rm mod}~n)$, for any nonnegative integer $l$, and hence $m_{_{l_{0}}}$ is even. It is easy to see that $q^{\frac{m_{_{l_{0}}}}{2}}\equiv \big((-1)^{l_{0}}p^{\frac{e}{2}}\big)^{m_{_{l_{0}}}}\equiv 1~({\rm mod}~n)$ and $\big((-1)^{l_{0}}p^{\frac{e}{2}}\big)^{2m}\equiv q^{m}\equiv 1~({\rm mod}~n)$, and so $m|\frac{m_{_{l_{0}}}}{2}$ and $m_{_{l_{0}}}|2m$, which implies that $m_{_{l_{0}}}=2m$. 
	\end{proof}
	
	\begin{Theorem}\label{t3.6}
		Suppose that $$\mathcal{C}_{l_{0}}=\mathcal{R}_{n}\varepsilon_{t}\bigoplus\mu_{(-1)^{l_{0}}p^{\frac{e}{2}}}(\mathcal{R}_{n}\varepsilon_{t}),$$where $0<t\leq s$, the primitive idempotent $\varepsilon_{t}$ corresponds to the $q$-cyclotomic coset $\{i_{t}, i_{t}q,\cdots,i_{t}q^{k-1}\}$ and $(-1)^{l_{0}}p^{-\frac{e}{2}}i_{t}\notin \{i_{t}, i_{t}q,\cdots,i_{t}q^{k-1}\}$. Then the number of orbits of $\langle\mu_{(-1)^{l_{0}}p^{\frac{e}{2}}},\rho,\sigma_{\xi} \rangle$ on $\mathcal{C}_{l_{0}}^{*}=\mathcal{C}_{l_{0}}\backslash \{\bf 0\}$ is equal to
		\begin{align*}
			&\frac{1}{2m}\sum_{r=0}^{m-1}\left(2{\rm gcd}\big(q^{{\rm gcd}(k,r)}-1,\frac{q^{k}-1}{q-1},\frac{i_{t}(q^{k}-1)}{n}\big)\right. \\
			&+\left. {\rm gcd}\big(q^{{\rm gcd}(k,2r+1)}-1,\frac{2(q^{k}-1)}{q-1},\frac{((-1)^{l_{0}}p^{-\frac{e}{2}}+q^{r})i_{t}(q^{k}-1)}{n}\big)\right.\\
			&+\left. {\rm gcd}\Big((q^{{\rm gcd}(k,r)}-1){\rm gcd}\big(q^{{\rm gcd}(k,r)}-1,\frac{q^{k}-1}{q-1},\frac{i_{t}(q^{k}-1)}{n}\big), \frac{((-1)^{l_{0}}p^{-\frac{e}{2}}-1)i_{t}(q^{k}-1)^{2}}{n(q-1)}\Big)\right).
		\end{align*}
		In particular, the number of non-zero weights of $\mathcal{C}_{l_{0}}$ is less than or equal to the number of orbits of $\langle\mu_{(-1)^{l_{0}}p^{\frac{e}{2}}},\rho,\sigma_{\xi} \rangle$ on $\mathcal{C}_{l_{0}}^{*}$, with equality if and only if for any two codewords ${\bf c}_{1},{\bf c}_{2}\in \mathcal{C}_{l_{0}}^{*}$ with the same weight, there exist integers $j_{1}$, $j_{2}$ and $j_{3}$ such that $\mu_{(-1)^{l_{0}}p^{\frac{e}{2}}}^{j_{1}}\rho^{j_{2}}(\xi^{j_{3}}{\bf c}_{1})={\bf c}_{2}$.
	\end{Theorem}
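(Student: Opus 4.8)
The plan is to mirror the proofs of Theorems \ref{t3.4} and \ref{t3.5}. Write $\eta=(-1)^{l_{0}}p^{\frac{e}{2}}$, so that $\eta^{2}\equiv q\pmod n$ and, by Lemma \ref{L3}, $\mu_{\eta}$ has order $2m$ in ${\rm Aut}(\mathcal{C}_{l_{0}})$. First I would record the analogue of Lemma \ref{l2.2}: the relations $\mu_{\eta}\rho=\rho^{\eta}\mu_{\eta}$ and $\mu_{\eta}\sigma_{\xi}=\sigma_{\xi}\mu_{\eta}$ show that $\langle\mu_{\eta},\rho,\sigma_{\xi}\rangle$ has order $2mn(q-1)$ and that each of its elements is uniquely $\mu_{\eta}^{r_{1}}\rho^{r_{2}}\sigma_{\xi}^{r_{3}}$ with $0\le r_{1}\le 2m-1$, $0\le r_{2}\le n-1$, $0\le r_{3}\le q-2$. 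Then, as in Theorem \ref{t3.4}, I would split $\mathcal{C}_{l_{0}}\backslash\{\mathbf{0}\}=\mathcal{C}'\cup\mathcal{C}^{\sharp}$ with $\mathcal{C}'=\big(\mathcal{R}_{n}\varepsilon_{t}\backslash\{\mathbf{0}\}\big)\cup\big(\mu_{\eta}(\mathcal{R}_{n}\varepsilon_{t})\backslash\{\mathbf{0}\}\big)$ and $\mathcal{C}^{\sharp}=\mathcal{R}_{n}\varepsilon_{t}\backslash\{\mathbf{0}\}\bigoplus\mu_{\eta}(\mathcal{R}_{n}\varepsilon_{t})\backslash\{\mathbf{0}\}$, apply the Burnside formula (\ref{e2.1}) to each piece, and add.

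The core computation is the identity $\mu_{\eta}^{r_{1}}\rho^{r_{2}}\sigma_{\xi}^{r_{3}}(e_{i_{t}q^{j}})=\xi^{r_{3}}\zeta^{i_{t}q^{j}r_{2}}e_{\eta^{-r_{1}}i_{t}q^{j}}$, established as in Lemma \ref{l3.2}. When $r_{1}=2s$ is even, $\mu_{\eta}^{r_{1}}=\mu_{q}^{s}$ stabilizes each summand; when $r_{1}$ is odd, $\mu_{\eta}^{r_{1}}$ interchanges $\mathcal{R}_{n}\varepsilon_{t}$ and $\mu_{\eta}(\mathcal{R}_{n}\varepsilon_{t})$, because $\eta^{-1}i_{t}=(-1)^{l_{0}}p^{-\frac{e}{2}}i_{t}$ lies outside the $q$-cyclotomic coset of $i_{t}$ by hypothesis. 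Hence on $\mathcal{C}'$ only even $r_{1}$ contribute fixed points, and there the count on each summand is the one already found in Theorem \ref{t3.1}; conjugation by $\mu_{\eta}$, which carries $\mu_{q}^{s}\rho^{r_{2}}\sigma_{\xi}^{r_{3}}$ to $\mu_{q}^{s}\rho^{\eta^{-1}r_{2}}\sigma_{\xi}^{r_{3}}$ (a relabelling of $\mathbb{Z}_{n}$), identifies the two counts and produces the term $2\,{\rm gcd}\big(q^{{\rm gcd}(k,r)}-1,\frac{q^{k}-1}{q-1},\frac{i_{t}(q^{k}-1)}{n}\big)$. On $\mathcal{C}^{\sharp}$, for even $r_{1}=2s$ a codeword $\mathbf{c}_{t}+\mathbf{c}_{t'}$ is fixed iff both summands are fixed by $\mu_{q}^{s}\rho^{r_{2}}\sigma_{\xi}^{r_{3}}$, which is exactly the situation $k_{t_{1}}=k_{t_{2}}=k$ treated in the proof of Corollary \ref{c3.4} with $i_{t_{1}}=i_{t}$ and $i_{t_{2}}=\eta^{-1}i_{t}$; since $\eta^{-1}\in\mathbb{Z}_{n}^{*}$, $\frac{\eta^{-1}i_{t}(q^{k}-1)}{n}$ is an integer multiple of $\frac{i_{t}(q^{k}-1)}{n}$, so the inner ${\rm gcd}$ collapses to the one in $i_{t}$ alone while $i_{t_{2}}-i_{t_{1}}=(\eta^{-1}-1)i_{t}$ gives the third term. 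For odd $r_{1}=2s+1$, $\mathbf{c}_{t}+\mathbf{c}_{t'}$ is fixed iff $\mu_{\eta}^{r_{1}}\rho^{r_{2}}\sigma_{\xi}^{r_{3}}$ maps $\mathbf{c}_{t}\mapsto\mathbf{c}_{t'}$ and $\mathbf{c}_{t'}\mapsto\mathbf{c}_{t}$; putting $\alpha=\sum_{\ell}c_{\ell}\zeta^{\ell i_{t}}$, $\beta=\sum_{\ell}c_{\ell}'\zeta^{\ell\eta^{-1}i_{t}}\in\mathbb{F}_{q^{k}}^{*}$ and eliminating $\beta$ yields the single equation $\xi^{2r_{3}}\alpha^{q^{2s+1}-1}=\zeta^{-i_{t}q^{s+1}(q^{s}+\eta^{-1})r_{2}}$, and counting its solutions $\alpha$ and the admissible pairs $(r_{2},r_{3})$ as in Theorem \ref{t3.1} and Lemma \ref{l3.3} produces the second term ${\rm gcd}\big(q^{{\rm gcd}(k,2r+1)}-1,\frac{2(q^{k}-1)}{q-1},\frac{((-1)^{l_{0}}p^{-\frac{e}{2}}+q^{r})i_{t}(q^{k}-1)}{n}\big)$.

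Finally I would reindex the Burnside sum by splitting $\{0,1,\dots,2m-1\}$ into its $m$ even and $m$ odd members ($r_{1}=2s$ and $r_{1}=2s+1$, $0\le s\le m-1$), divide through by $2mn(q-1)$, and assemble the three contributions into the claimed formula. The ``in particular'' clause is then immediate from Lemma \ref{l2.1}, since every element of $\langle\mu_{\eta},\rho,\sigma_{\xi}\rangle$ is $\mu_{\eta}^{j_{1}}\rho^{j_{2}}\sigma_{\xi}^{j_{3}}$ and $\sigma_{\xi}^{j_{3}}$ acts as multiplication by $\xi^{j_{3}}$. I expect the main difficulty to be the odd-$r_{1}$ bookkeeping on $\mathcal{C}^{\sharp}$: one must keep track of the Frobenius twists ($q^{s}$ on one summand, $q^{s+1}$ on the other) and of the two different parametrizations of the summands precisely enough to see that the exponent governing the number of $\alpha$ is $q^{2s+1}-1$ and that the combination is $(-1)^{l_{0}}p^{-\frac{e}{2}}+q^{s}$; an error here would spoil the second ${\rm gcd}$. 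A minor secondary point is to check that all displayed fractions are integers, which follows from $n\mid i_{t}(q^{k}-1)$ together with $\eta^{\pm1}\in\mathbb{Z}_{n}^{*}$.
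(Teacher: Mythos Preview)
Your proposal is correct and follows essentially the same route as the paper: split $\mathcal{C}_{l_0}^{*}$ into $\mathcal{C}'$ and $\mathcal{C}^{\sharp}$, apply Burnside with the group element parametrization coming from Lemma~\ref{l2.2} and Lemma~\ref{L3}, separate even and odd $r_{1}$, and reduce the three resulting fixed-point counts to the computations already carried out in Theorem~\ref{t3.1}, Lemma~\ref{l3.3}, and Corollary~\ref{c3.4}. One small imprecision: your sentence ``$\frac{\eta^{-1}i_{t}(q^{k}-1)}{n}$ is an integer multiple of $\frac{i_{t}(q^{k}-1)}{n}$'' is not literally true; what is true is that it equals an integer multiple of $\frac{i_{t}(q^{k}-1)}{n}$ \emph{plus} an integer multiple of $q^{k}-1$, and the inner gcd still collapses because every divisor of $q^{\gcd(k,r)}-1$ also divides $q^{k}-1$.
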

	
	\begin{proof}
		Let $$\mathcal{C}_{l_{0}}'=\big(\mathcal{R}_{n}\varepsilon_{t}\backslash \{{\bf 0}\}\big)\cup \big(\mu_{(-1)^{l_{0}}p^{\frac{e}{2}}}(\mathcal{R}_{n}\varepsilon_{t})\backslash \{{\bf 0}\}\big)$$
		and
		$$\mathcal{C}_{l_{0}}^{\sharp}=\mathcal{R}_{n}\varepsilon_{t}\backslash \{{\bf 0}\}\bigoplus\mu_{(-1)^{l_{0}}p^{\frac{e}{2}}}(\mathcal{R}_{n}\varepsilon_{t})\backslash \{{\bf 0}\}.$$
		Note that
		$$\mathcal{C}_{l_{0}}\backslash \{{\bf 0}\}=\mathcal{C}_{l_{0}}'\cup \mathcal{C}_{l_{0}}^{\sharp}$$
		is a disjoint union, and that $\langle\mu_{(-1)^{l_{0}}p^{\frac{e}{2}}},\rho,\sigma_{\xi} \rangle$ can act on the sets $\mathcal{C}_{l_{0}}'$ and $\mathcal{C}_{l_{0}}^{\sharp}$, respectively. Then
		$$\big|\langle\mu_{(-1)^{l_{0}}p^{\frac{e}{2}}},\rho,\sigma_{\xi} \rangle\backslash \mathcal{C}_{l_{0}}^{*}\big|=\big|\langle\mu_{(-1)^{l_{0}}p^{\frac{e}{2}}},\rho,\sigma_{\xi} \rangle\backslash \mathcal{C}_{l_{0}}'\big|+\big|\langle\mu_{(-1)^{l_{0}}p^{\frac{e}{2}}},\rho,\sigma_{\xi} \rangle\backslash \mathcal{C}_{l_{0}}^{\sharp}\big|,$$
		and we know from Equation (\ref{e2.1}), Lemmas \ref{l2.2} and \ref{L3} that
		$$\big|\langle\mu_{(-1)^{l_{0}}p^{\frac{e}{2}}},\rho,\sigma_{\xi} \rangle\backslash \mathcal{C}_{l_{0}}'\big|=\frac{1}{2mn(q-1)}\sum_{r_{1}=0}^{2m-1}\sum_{r_{2}=0}^{n-1}\sum_{r_{3}=0}^{q-2}\Big|\big\{{\bf c}\in \mathcal{C}_{l_{0}}'~\big|~\mu_{(-1)^{l_{0}}p^{\frac{e}{2}}}^{r_{1}}\rho^{r_{2}}\sigma_{\xi}^{r_{3}}({\bf c})={\bf c}\big\}\Big|,$$
		and
		$$\big|\langle\mu_{(-1)^{l_{0}}p^{\frac{e}{2}}},\rho,\sigma_{\xi} \rangle\backslash \mathcal{C}_{l_{0}}^{\sharp}\big|=\frac{1}{2mn(q-1)}\sum_{r_{1}=0}^{2m-1}\sum_{r_{2}=0}^{n-1}\sum_{r_{3}=0}^{q-2}\Big|\big\{{\bf c}\in \mathcal{C}_{l_{0}}^{\sharp}~\big|~\mu_{(-1)^{l_{0}}p^{\frac{e}{2}}}^{r_{1}}\rho^{r_{2}}\sigma_{\xi}^{r_{3}}({\bf c})={\bf c}\big\}\Big|.$$
		
		Take ${\bf c}_{t}=\sum\limits_{j=0}^{k-1}\big(\sum\limits_{\ell=0}^{k-1}c_{\ell}\zeta^{\ell i_{t}q^{j}}\big)e_{i_{t}q^{j}}\in \mathcal{R}_{n}\varepsilon_{t}\backslash \{\bf 0\}$. Note that $e_{i_{t}q^{j}}=\frac{1}{n}\sum\limits_{l=0}^{n-1}\zeta^{-i_{t}q^{j}l}x^{l}$ and $\rho^{r_{2}}\sigma_{\xi}^{r_{3}}(e_{i_{t}q^{j}})=\xi^{r_{3}}\zeta^{i_{t}q^{j}r_{2}}e_{i_{t}q^{j}}$, and thus
		\begin{align*}
			\mu_{(-1)^{l_{0}}p^{\frac{e}{2}}}^{r_{1}}\rho^{r_{2}}\sigma_{\xi}^{r_{3}}(e_{i_{t}q^{j}})&=\xi^{r_{3}}\zeta^{i_{t}q^{j}r_{2}}\mu_{(-1)^{l_{0}}p^{\frac{e}{2}}}^{r_{1}}(e_{i_{t}q^{j}})\\
			&=\xi^{r_{3}}\zeta^{i_{t}q^{j}r_{2}}\cdot \frac{1}{n}\sum_{l=0}^{n-1}\zeta^{-i_{t}q^{j}l}x^{(-1)^{l_{0}r_{1}}p^{\frac{e}{2}r_{1}}l}\\
			&=\xi^{r_{3}}\zeta^{i_{t}q^{j}r_{2}}\cdot \frac{1}{n}\sum_{l=0}^{n-1}\zeta^{-(-1)^{l_{0}r_{1}}p^{-\frac{e}{2}r_{1}}i_{t}q^{j}\cdot (-1)^{l_{0}r_{1}}p^{\frac{e}{2}r_{1}}l}x^{(-1)^{l_{0}r_{1}}p^{\frac{e}{2}r_{1}}l}\\
			&=\xi^{r_{3}}\zeta^{i_{t}q^{j}r_{2}}\cdot \frac{1}{n}\sum_{l=0}^{n-1}\zeta^{-(-1)^{l_{0}r_{1}}p^{-\frac{e}{2}r_{1}}i_{t}q^{j}l}x^{l}\\
			&=\xi^{r_{3}}\zeta^{i_{t}q^{j}r_{2}}e_{(-1)^{l_{0}r_{1}}p^{-\frac{e}{2}r_{1}}i_{t}q^{j}}.
		\end{align*}
		We then have
		$$\mu_{(-1)^{l_{0}}p^{\frac{e}{2}}}^{r_{1}}\rho^{r_{2}}\sigma_{\xi}^{r_{3}}({\bf c}_{t})=\sum_{j=0}^{k-1}\xi^{r_{3}}\zeta^{i_{t}q^{j}r_{2}}\big(\sum\limits_{\ell=0}^{k-1}c_{\ell}\zeta^{\ell i_{t}q^{j}}\big)e_{(-1)^{l_{0}r_{1}}p^{-\frac{e}{2}r_{1}}i_{t}q^{j}}.$$
		Take ${\bf c}_{t'}=\sum\limits_{j=0}^{k-1}\big(\sum\limits_{\ell=0}^{k-1}c_{\ell}'\zeta^{\ell (-1)^{l_{0}}p^{-\frac{e}{2}}i_{t}q^{j}}\big)e_{(-1)^{l_{0}}p^{-\frac{e}{2}}i_{t}q^{j}}\in \mu_{(-1)^{l_{0}}p^{\frac{e}{2}}}(\mathcal{R}_{n}\varepsilon_{t})\backslash \{\bf 0\}$. Similarly as above, we have
		\begin{align*}
			\mu_{(-1)^{l_{0}}p^{\frac{e}{2}}}^{r_{1}}\rho^{r_{2}}\sigma_{\xi}^{r_{3}}({\bf c}_{t'})&=\sum_{j=0}^{k-1}\xi^{r_{3}}\zeta^{(-1)^{l_{0}}p^{-\frac{e}{2}}i_{t}q^{j}r_{2}}\big(\sum\limits_{\ell=0}^{k-1}c_{\ell}'\zeta^{\ell (-1)^{l_{0}}p^{-\frac{e}{2}} i_{t}q^{j}}\big)e_{(-1)^{l_{0}(r_{1}+1)}p^{-\frac{e}{2}(r_{1}+1)}i_{t}q^{j}}.
		\end{align*}
		
		It is easy to check that
		\begin{align*}
			(-1)^{l_{0}r_{1}}p^{-\frac{e}{2}r_{1}}=\begin{cases}(-1)^{l_{0}}p^{-\frac{e}{2}}q^{-\frac{r_{1}-1}{2}},& {\rm if~r_{1}~is~odd},\\
				q^{-\frac{r_{1}}{2}},& {\rm if~r_{1}~is~even}.
			\end{cases}
		\end{align*}
		Suppose $r_{1}$ is odd, that is, $r_{1}=2r+1$ with $0\leq r\leq m-1$, then
		$$\mu_{(-1)^{l_{0}}p^{\frac{e}{2}}}^{r_{1}}\rho^{r_{2}}\sigma_{\xi}^{r_{3}}({\bf c}_{t})=\sum_{j=0}^{k-1}\xi^{r_{3}}\zeta^{i_{t}q^{r+j}r_{2}}\big(\sum\limits_{\ell=0}^{k-1}c_{\ell}\zeta^{\ell i_{t}q^{j}}\big)^{q^{r}}e_{(-1)^{l_{0}}p^{-\frac{e}{2}}i_{t}q^{j}},$$
		and
		$$\mu_{(-1)^{l_{0}}p^{\frac{e}{2}}}^{r_{1}}\rho^{r_{2}}\sigma_{\xi}^{r_{3}}({\bf c}_{t'})=\sum_{j=0}^{k-1}\xi^{r_{3}}\zeta^{(-1)^{l_{0}}p^{-\frac{e}{2}}i_{t}q^{r+1+j}r_{2}}\big(\sum\limits_{\ell=0}^{k-1}c_{\ell}'\zeta^{\ell (-1)^{l_{0}}p^{-\frac{e}{2}} i_{t}q^{j}}\big)^{q^{r+1}}e_{i_{t}q^{j}}.$$
		Suppose $r_{1}$ is even, that is, $r_{1}=2r$ with $0\leq r\leq m-1$, then
		$$\mu_{(-1)^{l_{0}}p^{\frac{e}{2}}}^{r_{1}}\rho^{r_{2}}\sigma_{\xi}^{r_{3}}({\bf c}_{t})=\sum_{j=0}^{k-1}\xi^{r_{3}}\zeta^{i_{t}q^{r+j}r_{2}}\big(\sum\limits_{\ell=0}^{k-1}c_{\ell}\zeta^{\ell i_{t}q^{j}}\big)^{q^{r}}e_{i_{t}q^{j}},$$
		and
		$$\mu_{(-1)^{l_{0}}p^{\frac{e}{2}}}^{r_{1}}\rho^{r_{2}}\sigma_{\xi}^{r_{3}}({\bf c}_{t'})=\sum_{j=0}^{k-1}\xi^{r_{3}}\zeta^{(-1)^{l_{0}}p^{-\frac{e}{2}}i_{t}q^{r+j}r_{2}}\big(\sum\limits_{\ell=0}^{k-1}c_{\ell}'\zeta^{\ell (-1)^{l_{0}}p^{-\frac{e}{2}} i_{t}q^{j}}\big)^{q^{r}}e_{(-1)^{l_{0}}p^{-\frac{e}{2}}i_{t}q^{j}}.$$
		
		Similar discussion as in Lemmas \ref{l3.2} and \ref{l3.3} shows that 
		\begin{align*}
			&\big|\langle\mu_{(-1)^{l_{0}}p^{\frac{e}{2}}},\rho,\sigma_{\xi} \rangle\backslash \mathcal{C}_{l_{0}}'\big|\\
			=&\frac{1}{2mn(q-1)}\Big( \sum_{r_{1}=0}^{2m-1}\sum_{r_{2}=0}^{n-1}\sum_{r_{3}=0}^{q-2}\Big|\big\{{\bf c}\in \mathcal{R}_{n}\varepsilon_{t}\backslash \{{\bf 0}\}~\big|~\mu_{(-1)^{l_{0}}p^{\frac{e}{2}}}^{r_{1}}\rho^{r_{2}}\sigma_{\xi}^{r_{3}}({\bf c})={\bf c}\big\}\Big|\\
			&+\sum_{r_{1}=0}^{2m-1}\sum_{r_{2}=0}^{n-1}\sum_{r_{3}=0}^{q-2}\Big|\big\{{\bf c}\in \mu_{(-1)^{l_{0}}p^{\frac{e}{2}}}(\mathcal{R}_{n}\varepsilon_{t})\backslash \{{\bf 0}\}~\big|~\mu_{(-1)^{l_{0}}p^{\frac{e}{2}}}^{r_{1}}\rho^{r_{2}}\sigma_{\xi}^{r_{3}}({\bf c})={\bf c}\big\}\Big|\Big) \\
			=&\frac{1}{2mn(q-1)}\Big(\sum_{r=0}^{m-1}\sum_{r_{2}=0}^{n-1}\sum_{r_{3}=0}^{q-2}\Big|\big\{{\bf c}\in \mathcal{R}_{n}\varepsilon_{t}\backslash \{{\bf 0}\}~\big|~\mu_{(-1)^{l_{0}}p^{\frac{e}{2}}}^{2r}\rho^{r_{2}}\sigma_{\xi}^{r_{3}}({\bf c})={\bf c}\big\}\Big|\\
			&+\sum_{r=0}^{m-1}\sum_{r_{2}=0}^{n-1}\sum_{r_{3}=0}^{q-2}\Big|\big\{{\bf c}\in \mu_{(-1)^{l_{0}}p^{\frac{e}{2}}}(\mathcal{R}_{n}\varepsilon_{t})\backslash \{{\bf 0}\}~\big|~\mu_{(-1)^{l_{0}}p^{\frac{e}{2}}}^{2r}\rho^{r_{2}}\sigma_{\xi}^{r_{3}}({\bf c})={\bf c}\big\}\Big|\Big) \\
			=&\frac{1}{mn(q-1)}\sum_{r=0}^{m-1}{\rm gcd}\big(n(q-1)(q^{{\rm gcd}(k,r)}-1),n(q^{k}-1),i_{t}(q-1)(q^{k}-1)\big)\\
			=&\frac{1}{m}\sum_{r=0}^{m-1}{\rm gcd}\big(q^{{\rm gcd}(k,r)}-1,\frac{q^{k}-1}{q-1},\frac{i_{t}(q^{k}-1)}{n}\big),
		\end{align*}
		and 
		\begin{align*}
			&\big|\langle\mu_{(-1)^{l_{0}}p^{\frac{e}{2}}},\rho,\sigma_{\xi} \rangle\backslash \mathcal{C}_{l_{0}}^{\sharp}\big|\\
			=&\frac{1}{2mn(q-1)}\sum_{r_{1}=0}^{2m-1}\sum_{r_{2}=0}^{n-1}\sum_{r_{3}=0}^{q-2}\Big|\big\{{\bf c} \in \mathcal{C}_{l_{0}}^{\sharp}~\big|~\mu_{(-1)^{l_{0}}p^{\frac{e}{2}}}^{r_{1}}\rho^{r_{2}}\sigma_{\xi}^{r_{3}}({\bf c})={\bf c}\big\}\Big|\\
			=&\frac{1}{2mn(q-1)}\Big( \sum_{r=0}^{m-1}\sum_{r_{2}=0}^{n-1}\sum_{r_{3}=0}^{q-2}\Big|\big\{{\bf c}\in \mathcal{C}_{l_{0}}^{\sharp}~\big|~\mu_{(-1)^{l_{0}}p^{\frac{e}{2}}}^{2r+1}\rho^{r_{2}}\sigma_{\xi}^{r_{3}}({\bf c})={\bf c}\big\}\Big|\\
			&+\sum_{r=0}^{m-1}\sum_{r_{2}=0}^{n-1}\sum_{r_{3}=0}^{q-2}\Big|\big\{{\bf c}\in \mathcal{C}_{l_{0}}^{\sharp}~\big|~\mu_{(-1)^{l_{0}}p^{\frac{e}{2}}}^{2r}\rho^{r_{2}}\sigma_{\xi}^{r_{3}}({\bf c})={\bf c}\big\}\Big|\Big) \\
			=&\frac{1}{2mn(q-1)}\left(\sum_{r=0}^{m-1}{\rm gcd}\big(n(q-1)(q^{{\rm gcd}(k,2r+1)}-1),2n(q^{k}-1),\right. \\
			&\left. ((-1)^{l_{0}}p^{-\frac{e}{2}}+q^{r})i_{t}(q-1)(q^{k}-1)\big)+\sum_{r=0}^{m-1}{\rm gcd}\Big(n(q-1)(q^{{\rm gcd}(k,r)}-1){\rm gcd}\big(q^{{\rm gcd}(k,r)}-1,\right. \\
			&\left.  \frac{q^{k}-1}{q-1},\frac{i_{t}(q^{k}-1)}{n}\big), ((-1)^{l_{0}}p^{-\frac{e}{2}}-1)i_{t}(q^{k}-1)^{2}\Big)\right)\\
			=&\frac{1}{2m}\sum_{r=0}^{m-1}\left( {\rm gcd}\big(q^{{\rm gcd}(k,2r+1)}-1,\frac{2(q^{k}-1)}{q-1},\frac{((-1)^{l_{0}}p^{-\frac{e}{2}}+q^{r})i_{t}(q^{k}-1)}{n}\big)\right.\\
			&+\left. {\rm gcd}\Big((q^{{\rm gcd}(k,r)}-1){\rm gcd}\big(q^{{\rm gcd}(k,r)}-1,\frac{q^{k}-1}{q-1},\frac{i_{t}(q^{k}-1)}{n}\big), \frac{((-1)^{l_{0}}p^{-\frac{e}{2}}-1)i_{t}(q^{k}-1)^{2}}{n(q-1)}\Big)\right).
		\end{align*}
		
		Summarizing all the conclusions above,  the number of orbits of $\langle\mu_{(-1)^{l_{0}}p^{\frac{e}{2}}},\rho,\sigma_{\xi} \rangle$ on $\mathcal{C}_{l_{0}}^{*}=\mathcal{C}_{l_{0}}\backslash \{\bf 0\}$ is equal to
		\begin{align*}
			&\big|\langle\mu_{(-1)^{l_{0}}p^{\frac{e}{2}}},\rho,\sigma_{\xi} \rangle\backslash \mathcal{C}_{l_{0}}'\big|+\big|\langle\mu_{(-1)^{l_{0}}p^{\frac{e}{2}}},\rho,\sigma_{\xi} \rangle\backslash \mathcal{C}_{l_{0}}^{\sharp}\big|\\
			=&\frac{1}{2m}\sum_{r=0}^{m-1}\left(2{\rm gcd}\big(q^{{\rm gcd}(k,r)}-1,\frac{q^{k}-1}{q-1},\frac{i_{t}(q^{k}-1)}{n}\big)\right. \\
			&+\left. {\rm gcd}\big(q^{{\rm gcd}(k,2r+1)}-1,\frac{2(q^{k}-1)}{q-1},\frac{((-1)^{l_{0}}p^{-\frac{e}{2}}+q^{r})i_{t}(q^{k}-1)}{n}\big)\right.\\
			&+\left. {\rm gcd}\Big((q^{{\rm gcd}(k,r)}-1){\rm gcd}\big(q^{{\rm gcd}(k,r)}-1,\frac{q^{k}-1}{q-1},\frac{i_{t}(q^{k}-1)}{n}\big), \frac{((-1)^{l_{0}}p^{-\frac{e}{2}}-1)i_{t}(q^{k}-1)^{2}}{n(q-1)}\Big)\right).
		\end{align*}
		The proof is then completed.
	\end{proof}
	
	\begin{Remark}{\rm
			Let $\mathcal{C}_{l_{0}}$ be the cyclic code in Theorem \ref{t3.6}. Since $\langle\mu_{q},\rho,\sigma_{\xi} \rangle$ is a subgroup of $\langle\mu_{(-1)^{l_{0}}p^{\frac{e}{2}}},\rho,\sigma_{\xi} \rangle$, we have 
			$$\big|\langle\mu_{(-1)^{l_{0}}p^{\frac{e}{2}}},\rho,\sigma_{\xi} \rangle\backslash \mathcal{C}_{l_{0}}^{\sharp}\big|\leq \big|\langle\mu_{q},\rho,\sigma_{\xi} \rangle\backslash \mathcal{C}_{l_{0}}^{\sharp}\big|.$$ Then we see from Corollary \ref{c3.4} and Theorem \ref{t3.6} that
			\begin{align*}
				&\big|\langle\mu_{q},\rho,\sigma_{\xi} \rangle\backslash \mathcal{C}_{l_{0}}^{*}\big|-\big|\langle\mu_{(-1)^{l_{0}}p^{\frac{e}{2}}},\rho,\sigma_{\xi} \rangle\backslash \mathcal{C}_{l_{0}}^{*}\big|\\
				=&\big|\langle\mu_{q},\rho,\sigma_{\xi} \rangle\backslash \mathcal{C}_{l_{0}}^{'}\big|-\big|\langle\mu_{(-1)^{l_{0}}p^{\frac{e}{2}}},\rho,\sigma_{\xi} \rangle\backslash \mathcal{C}_{l_{0}}^{'}\big|+\big|\langle\mu_{q},\rho,\sigma_{\xi} \rangle\backslash \mathcal{C}_{l_{0}}^{\sharp}\big|-\big|\langle\mu_{(-1)^{l_{0}}p^{\frac{e}{2}}},\rho,\sigma_{\xi} \rangle\backslash \mathcal{C}_{l_{0}}^{\sharp}\big|\\
				\geq &\big|\langle\mu_{q},\rho,\sigma_{\xi} \rangle\backslash \mathcal{C}_{l_{0}}^{'}\big|-\big|\langle\mu_{(-1)^{l_{0}}p^{\frac{e}{2}}},\rho,\sigma_{\xi} \rangle\backslash \mathcal{C}_{l_{0}}^{'}\big|\\
				=&\frac{1}{k}\sum_{r|k}\varphi(\frac{k}{r}){\rm gcd}\big(q^{r}-1,\frac{q^{k}-1}{q-1},\frac{i_{t}(q^{k}-1)}{n}\big).
			\end{align*}
			Therefore, the upper bound on the number of non-zero weights of $\mathcal{C}_{l_{0}}$ given by Theorem \ref{t3.6} is less than that given by Corollary \ref{c3.4}.}
	\end{Remark}
	
	\begin{Example}{\rm
			Take $q=4$ and $n=15$. All the distinct $4$-cyclotomic cosets modulo $15$ are given by
			$$\Gamma_{0}=\{0\},~\Gamma_{1}=\{1,4\},~\Gamma_{2}=\{2,8\},~\Gamma_{3}=\{3,12\},$$
			$$\Gamma_{4}=\{5\},~\Gamma_{5}=\{6,9\},~\Gamma_{6}=\{7,13\},~\Gamma_{7}=\{10\},~\Gamma_{8}=\{11,14\}.$$
			We first consider the the cyclic code $\mathcal{C}_{0}=\mathcal{R}_{n}\varepsilon_{1}\bigoplus \mu_{2}(\mathcal{R}_{n}\varepsilon_{1})=\mathcal{R}_{n}\varepsilon_{1}\bigoplus \mathcal{R}_{n}\varepsilon_{2}$. Let $\ell_{0}$ be the number of non-zero weights of $\mathcal{C}_{0}$. By Corollary \ref{c3.4}, we have
			\begin{align*}
				\ell_{0}&\leq \frac{1}{2}\sum_{r\mid 2}\varphi(\frac{2}{r})\left( {\rm gcd}\big(4^{r}-1,\frac{4^{2}-1}{4-1},\frac{4^{2}-1}{15}\big)+ {\rm gcd}\big(4^{r}-1,\frac{4^{2}-1}{4-1},\frac{2(4^{2}-1)}{15}\big)\right.  \\
				&\quad +\left. {\rm gcd}\Big((4^{r}-1){\rm gcd}\big(4^{r}-1,\frac{4^{2}-1}{4-1},\frac{4^{2}-1}{15},\frac{2(4^{2}-1)}{15}\big),\frac{(2-1)(4^{2}-1)^{2}}{15(4-1)}\Big) \right) \\
				&=\frac{1}{2}[\varphi(2)(1+1+1)+\varphi(1)(1+1+5)]\\
				&=\frac{1}{2}(3+7)=5.
			\end{align*}
			Using Theorem \ref{t3.6}, we have
			\begin{align*}
				\ell_{0}&\leq \frac{1}{4}\sum_{r=0}^{1}\left(2{\rm gcd}\big(4^{{\rm gcd}(2,r)}-1,\frac{4^{2}-1}{4-1},\frac{4^{2}-1}{15}\big)\right. \\
				&\quad +\left. {\rm gcd}\big(4^{{\rm gcd}(2,2r+1)}-1,\frac{2(4^{2}-1)}{4-1},\frac{(8+4^{r})(4^{2}-1)}{15}\big)\right.\\
				&\quad +\left. {\rm gcd}\Big((4^{{\rm gcd}(2,r)}-1){\rm gcd}\big(4^{{\rm gcd}(2,r)}-1,\frac{4^{2}-1}{4-1},\frac{4^{2}-1}{15}\big), \frac{(8-1)(4^{2}-1)^{2}}{15(4-1)}\Big)\right)\\
				&=\frac{1}{4}(8+4)=3.
			\end{align*}
			After using Magma \cite{4}, we know that the weight distribution of $\mathcal{C}_{0}$ is $1+45x^{8}+210x^{12}$, which implies that $\ell_{0}=2$. 
			
			In addition, for the cyclic code $\mathcal{C}_{1}=\mathcal{R}_{n}\varepsilon_{1}\bigoplus \mu_{-2}(\mathcal{R}_{n}\varepsilon_{1})=\mathcal{R}_{n}\varepsilon_{1}\bigoplus \mathcal{R}_{n}\varepsilon_{6}$, let $\ell_{1}$ be the number of non-zero weights of  $\mathcal{C}_{1}$. One can verify that by Corollary \ref{c3.4} we have $\ell_{1}\leq 11$, and by Theorem \ref{t3.6} we have $\ell_{1}\leq 6$. After using Magma \cite{4}, we see that the weight distribution of $\mathcal{C}_{1}$ is $1+30x^{6}+60x^{9}+105x^{12}+60x^{15}$, which implies that $\ell_{1}=4$. }
	\end{Example}
	
	\section{Concluding remarks and future works}
	In this paper, we improve the upper bounds in \cite{7} on the number of non-zero weights of any simple-root cyclic code by replacing $\langle\rho,\sigma_{\xi} \rangle$ with larger subgroups of the automorphism group of the code. We first replace $\langle\rho,\sigma_{\xi} \rangle$ with $\langle\mu_{q},\rho,\sigma_{\xi} \rangle$ and obtain improved upper bounds which in some cases are strictly less than the upper bounds in \cite{7} (see subsections 3.1 and 3.2). In addition, for two special classes of cyclic codes, we replace $\langle\mu_{q},\rho,\sigma_{\xi} \rangle$ with larger subgroups of the automorphism groups of these codes, and then we obtain smaller upper bounds on the number of non-zero weights of such codes (see subsections 3.3 and 3.4). 
	
	A possible direction for future work is to find new few-weight cyclic codes using the main results of this paper. It would also be interesting to find tight upper bounds on the number of symbol-pair weights of cyclic codes.

	\noindent\textbf{Acknowledgement.}
	
	This work was supported by National Natural Science Foundation of China
	under Grant Nos.12271199 and 12171191 and The Fundamental Research Funds for the Central Universities
	30106220482.


\begin{thebibliography}{99}
		\bibitem{1} T. Alderson, A note on full weight spectrum codes, Trans.  Comb., vol. 8, no. 3, pp. 15-22, 2019.
		
		\bibitem{2} T. Alderson and A. Neri, Maximum weight spectrum codes, Adv. Math. Commun., vol. 13, no. 1, pp. 101-119, 2019.
		
		\bibitem{3} E. Assmus and E. Mattson, New 5-designs, J. Combinatorial Theory, vol. 6, no. 2, pp. 122-151, 1969.
		
		\bibitem{4} W. Bosma, J. Cannon and C. Playoust, The Magma algebra system I: The user language, J. Symbolic Comput., vol. 24, no. 3-4, pp. 235-265, 1997.
		
		\bibitem{6} B. Chen, H. Liu and G. Zhang, Some minimal cyclic codes over finite fields, Discrete Math., vol. 331, pp. 142-150, 2014.
		
		\bibitem{7} B. Chen, G. Zhang, A tight upper bound on the number of non-zero weights of a cyclic code, IEEE Trans. Inform. Theory, vol. 69, no. 2, pp. 995-1004, 2023.
		
		\bibitem{8} P. Delsarte, Four fundamental parameters of a code and their combinatorial significance, Inform. Control, vol. 23, no. 5, pp. 407-438, 1973.
		
		\bibitem{9} C. Ding, Designs from Linear Codes, World Scientific, Singapore, 2018.
		
		\bibitem{Ding2009}
		C. Ding, The weight distribution of some irreducible cyclic codes, IEEE Trans. Inform. Theory, vol. 55, no. 3, pp. 955-960, 2009.
		
		\bibitem{11} W. C. Huffman and V. Pless, Fundamentals of Error-Correcting Codes, Cambridge University Press, Cambridge, 2003.
		
		\bibitem{14} S. Ling and P. Sol\'{e}, On the algebraic structure of quasi-cyclic codes I: Finite fields, IEEE Trans. Inform. Theory, vol. 47, no. 7, pp. 2751-2760, 2001.
		
		\bibitem{16} J. J. Rotman, Advanced Modern Algebra, Prentice Hall, 2003.
		
		\bibitem{17} M. Shi, X. Li, A. Neri and P. Sol\'{e}, How many weights can a cyclic code have?, IEEE Trans. Inform. Theory, vol. 66, no. 3, pp. 1449-1459, 2020.
		
		\bibitem{18} M. Shi, A. Neri and P. Sol\'{e}, How many weights can a quasi-cyclic code have?, IEEE Trans. Inform. Theory, vol. 66, no. 11, pp. 6855-6862, 2020.
		
		\bibitem{19} M. Shi, H. Zhu, P. Sol\'{e} and G. D. Cohen, How many weights can a linear code have?, Des. Codes Cryptogr., vol. 87, no. 1, pp. 87-95, 2019.
	\end{thebibliography}
\end{document}